\newtheorem{remark}{Remark}
\newtheorem{theory}{Theorem}
\definecolor{fig_6_color_1}{RGB}{200,36,35}
\definecolor{fig_6_color_2}{RGB}{248,172,140}
\definecolor{fig_6_color_3}{RGB}{154,201,219}
\definecolor{fig_6_color_4}{RGB}{40,120,181}
\definecolor{fig_7_color_1}{RGB}{199,109,162}
\definecolor{fig_7_color_2}{RGB}{137,131,191}
\definecolor{fig_7_color_3}{RGB}{5,185,226}
\definecolor{fig_7_color_4}{RGB}{50,184,151}
\definecolor{fig_8_color_1}{RGB}{73,108,136}
\definecolor{fig_8_color_2}{RGB}{254,178,180}
\begin{document}

\title{Fine-Tuning Diffusion-Based Recommender Systems via Reinforcement Learning with Reward Function Optimization}

% \author{IEEE Publication Technology,~\IEEEmembership{Staff,~IEEE,}
%         % <-this % stops a space
% \thanks{This paper was produced by the IEEE Publication Technology Group. They are in Piscataway, NJ.}% <-this % stops a space
% \thanks{Manuscript received April 19, 2021; revised August 16, 2021.}}
%\author{\IEEEauthorblockN{Anonymous Authors}}
\author{Yu Hou,~\IEEEmembership{}
        Hua Li,~\IEEEmembership{}
        Ha Young Kim,~\IEEEmembership{Member,~IEEE,}
        and Won-Yong Shin,~\IEEEmembership{Senior Member,~IEEE}% <-this % stops a space
\IEEEcompsocitemizethanks{\IEEEcompsocthanksitem Y. Hou and W.-Y. Shin are with the School of Mathematics and Computing (Computational Science and Engineering), Yonsei University, Seoul 03722, Republic of Korea (E-mail: \{houyu, wy.shin\}@yonsei.ac.kr).%\protect\\
% note need leading \protect in front of \\ to get a newline within \thanks as
% \\ is fragile and will error, could use \hfil\break instead.
\IEEEcompsocthanksitem H. Li is with the Department of Industrial Engineering, Yonsei University, Seoul 03722, Republic of Korea (E-mail: li\_hua611@yonsei.ac.kr).%\\
\IEEEcompsocthanksitem Ha Young Kim is with the Graduate School of Information, Yonsei University, Seoul 03722, Republic of Korea (E-mail: hayoung.kim@yonsei.ac.kr).\\
% \IEEEcompsocthanksitem W.-Y. Shin is with the School of Mathematics and Computing (Computational Science and Engineering), Yonsei University, Seoul 03722, Republic of Korea, and the Graduate School of Artificial Intelligence, Pohang University of Science and Technology (POSTECH), Pohang 37673, Republic of Korea (E-mail: wy.shin@yonsei.ac.kr).\\
{\it (Corresponding author: Won-Yong Shin.)}
}% <-this % stops a space

% \thanks{Manuscript received April 19, 2005; revised August 26, 2015.}
}

% The paper headers
% \markboth{Submitted to IEEE Transactions on Knowledge and Data Engineering}%
% {Shell \MakeLowercase{\textit{et al.}}: A Sample Article Using IEEEtran.cls for IEEE Journals}

%\IEEEpubid{0000--0000/00\$00.00~\copyright~2021 IEEE}
% Remember, if you use this you must call \IEEEpubidadjcol in the second
% column for its text to clear the IEEEpubid mark.

\maketitle
%\IEEEpeerreviewmaketitle

\begin{abstract}
  Diffusion models recently emerged as a powerful paradigm for recommender systems, offering state-of-the-art performance by modeling the generative process of user--item interactions. However, training such models from scratch is both computationally expensive and yields diminishing returns once convergence is reached. To remedy these challenges, we propose {\bf \textsf{ReFiT}}, a new framework that integrates {\bf\textsf{Re}}inforcement learning (RL)-based \textsf{Fi}ne-\textsf{T}uning into diffusion-based recommender systems. In contrast to prior RL approaches for diffusion models depending on external reward models, \textsf{ReFiT} adopts a task-aligned design: it formulates the denoising trajectory as a Markov decision process (MDP) and incorporates a {\em collaborative signal-aware} reward function that directly reflects recommendation quality. By tightly coupling the MDP structure with this reward signal, \textsf{ReFiT} empowers the RL agent to exploit high-order connectivity for fine-grained optimization, while avoiding the noisy or uninformative feedback common in na\"ive reward designs. Leveraging policy gradient optimization, \textsf{ReFiT} maximizes {\em exact log-likelihood} of observed interactions, thereby enabling effective post hoc fine-tuning of diffusion recommenders. Comprehensive experiments on wide-ranging real-world datasets demonstrate that the proposed \textsf{ReFiT} framework (a) exhibits substantial performance gains over strong competitors (up to 36.3\% on sequential recommendation), (b) demonstrates strong efficiency with linear complexity in the number of users or items, and (c) generalizes well across multiple diffusion-based recommendation scenarios. The source code and datasets are publicly available at \url{https://anonymous.4open.science/r/ReFiT-4C60}. 
\end{abstract}

\begin{IEEEkeywords}
Collaborative signal, diffusion model, fine-tuning, recommender system, reinforcement learning.
\end{IEEEkeywords}

\section{Introduction}
\label{sec:introduction}

\begin{figure}[t]
\centering
\includegraphics[width=\linewidth]{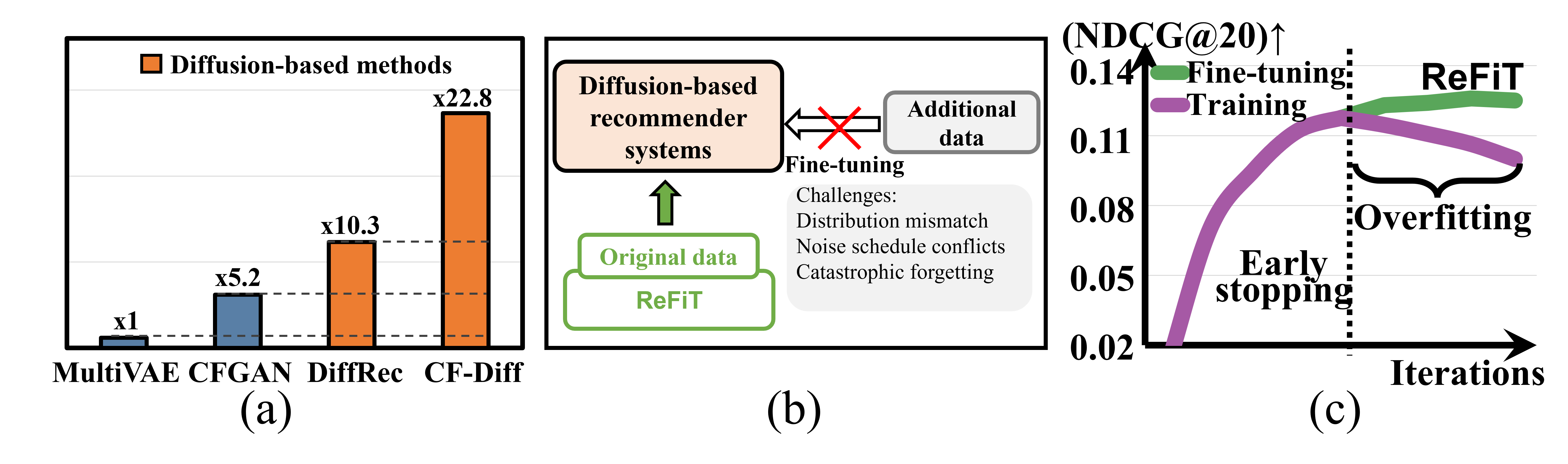} % Replace with your image file
\caption{Examples showing (a) the runtime comparison across various generative model-based recommender systems on Anime, with each bar's height representing the relative scale of training time compared to MultiVAE, (b) the challenges associated with fine-tuning diffusion-based recommender systems using additional datasets, and (c) the recommendation accuracy in NDCG@20 over iterations for training and fine-tuning CF-Diff (a state-of-the-art diffusion-based method) on MovieLens-1M (ML-1M).}
%\vspace{-1.5em}
\label{fig:motivations}
\end{figure}

\subsection{Background and Motivation}

Diffusion models \cite{sohl2015deep, ho2020denoising, cao2024survey}, a class of deep generative models, have recently demonstrated strong performance on recommendation tasks by modeling the generative process of user--item interactions. These models operate by gradually corrupting user--item interaction data through a forward-diffusion process and subsequently recovering these interactions using a neural network-based reverse-denoising process \cite{walker2022recommendation, wang2023diffusion}. Through this stochastic reconstruction process, diffusion-based recommender systems can effectively capture intricate behavioral patterns, enabling accurate prediction of unknown interactions across users. Recent advances further enhanced this capability by incorporating high-order connectivity information ({\it i.e.}, multi-hop proximity among users and items) into the diffusion modeling framework \cite{hou2024collaborative, zhu2024graph}, leading to improved utilization of collaborative signals. 

Despite these advances, several critical challenges remain. Specifically, achieving further improvements in recommendation performance using diffusion models is hindered by the following limitations:
\begin{itemize}
    \item {\bf Computational burden:} Although more sophisticated architectures could potentially improve accuracy, training diffusion-based recommenders is already substantially more computationally intensive than other generative model-based recommendation methods, such as MultiVAE \cite{liang2018variational} and CFGAN \cite{chae2018cfgan}, as shown in Fig. \ref{fig:motivations}a. Scaling up architecture complexity is thus often impractical.
    \item {\bf Data dependency in fine-tuning:} Fine-tuning on additional datasets may offer performance gains but entails non-trivial overhead in data collection, quality assurance, and task relevance evaluation. Moreover, due to the sensitivity of diffusion models to dataset-specific noise schedules, naïve fine-tuning on new datasets can lead to catastrophic forgetting (see Fig. \ref{fig:motivations}b). These make fine-tuning on the original dataset more desirable \cite{liu2022improved}.
    \item {\bf Overfitting risk:} Even when fine-tuning on the original dataset, continued training with the same loss function used in pre-training often results in overfitting beyond convergence (see Fig. \ref{fig:motivations}c). This poses a dilemma where further training does not translate into better generalization.
\end{itemize}

These challenges raise a natural question: 

``{\it How can we further enhance the performance of diffusion-based recommender systems without incurring substantial computational or data overhead?}"

\subsection{Main Contributions}

In this paper, to address this question, we establish the following two {\bf design principles (DPs)}:

\begin{itemize}
    \item {\bf DP1.} Effectively update pre-trained diffusion models for recommendation without modifying their architecture or requiring new data;
    \item {\bf DP2.} Maximize the utility of output from pre-trained models by leveraging {\it collaborative signals} during fine-tuning.
\end{itemize}

We address these principles through a reinforcement learning (RL)-aided fine-tuning approach. Specifically, we observe that sampling user--item interactions from a pre-trained diffusion model can be formulated as a multi-step decision-making process, naturally modeled as a Markov decision process (MDP). Unlike conventional fine-tuning, which continues to optimize the original loss, RL enables reward-driven optimization over sequential decisions. This motivates the use of RL---commonly employed for solving MDPs \cite{kaelbling1996reinforcement, OtterloW12, franccois2018introduction}---to guide the fine-tuning process. To this end, we develop a new framework that integrates {\bf \textsf{\underline{Re}}}inforcement learning-aided {\bf \textsf{\underline{Fi}}}ne-{\bf \textsf{\underline{T}}}uning into pre-trained diffusion-based recommender systems, named {\bf \textsf{ReFiT}}, which builds on two core ideas below.

\begin{itemize}
    \item \textbf{(\underline{Idea 1}: RL-aided fine-tuning as sequential optimization)} \textsf{ReFiT} treats each denoising step as a decision point, allowing the RL agent to adjust model behavior dynamically based on immediate feedback; this is especially valuable in recommendation tasks, as such feedback reflects how well user preferences are captured and enables more effective {\it personalization}. This approach directly optimizes a reward-guided objective---maximizing the log-likelihood of observed interactions---while avoiding overfitting and surpassing na\"\i ve fine-tuning (see Fig. \ref{fig:motivations}c). Crucially, this operates entirely on existing pre-trained diffusion models without requiring any architectural modifications or additional data. This design adheres to {\bf DP1}.
    \item {\bf (\underline{Idea 2}: Collaborative signal-aware reward design)} The cornerstone of \textsf{ReFiT}'s success lies in its reward function design. Unlike prior RL for diffusion models in computer vision  \cite{fan2023optimizing, black2023training} relying on external evaluators, \textsf{ReFiT} introduces a new reward function that incorporates {\it collaborative signals} in {\it high-order connectivities} (see Fig.~\ref{fig:high_order}). This provides richer and more reliable feedback than na\"ive reward proxies, such as binary clicks or one-step accuracy \cite{huang2021deep, zhao2018recommendations}. This design directly fulfills {\bf DP2}.
\end{itemize}

Our main contributions are summarized as follows:
\begin{itemize} 
    \item {\bf Novel methodology}: We propose \textsf{ReFiT}, the first RL-based fine-tuning framework for diffusion-based recommender systems. Central to its success is a newly designed reward function that effectively accommodates collaborative signals, allowing fine-tuning to proceed with strong guidance from pre-trained model output.
    \item {\bf Superiority in recommendation accuracy:} Extensive experiments on multiple real-world benchmark datasets show that \textsf{ReFiT} consistently surpasses state-of-the-art methods, achieving up to a {\it 36.3\%} {\it improvement} in NDCG@20 on sequential recommendation tasks. These results highlight the effectiveness of RL-aided optimization and the substantial advantage conferred by our reward modeling strategy.
    \item {\bf Computational efficiency:} \textsf{ReFiT} maintains {\it linear} computational complexity with respect to the number of users or items, which is empirically demonstrated and rigorously proven by theoretical analysis.
    \item {\bf Broad applicability:} The proposed framework generalizes across diverse diffusion-based recommendation settings, demonstrating robustness and adaptability in varying data and recommendation task scenarios.
\end{itemize}

\begin{table}[t]
\centering
\setlength\tabcolsep{2pt}
\small
%\footnotesize
  \caption{\MakeUppercase{Summary of notations.}}
  \begin{tabular}{cl}
    \toprule
    \bf Notation& \bf Description\\
    \hline
    $\mathcal{U}$ & Set of users \\
    $\mathcal{I}$ & Set of items \\
    %$ u $ & Single user as an entity in the system \\
    $ {\bf u}$ & Binary vector of single user--item interactions \\
    $ {\bf u}_t$ & Single user--item interactions at $t$ time step \\
    $ T $ & Total time steps \\
    %${\mathcal{N}}\left( {x ;\cdot ,\cdot} \right)$ & Gaussian distribution for the random variable $x$\\
    $q\left( {{\bf u}_t \left| {{\bf u}_{t - 1} } \right.} \right) $ & Forward-diffusion process in diffusion models \\
    $p_\theta  \left( {{\bf u}_{t - 1} \left| {{\bf u}_t } \right.} \right) $ & Reverse-denoising process in diffusion models\\
    $ \theta $ & Neural network in reverse-denoising process \\
    $ \beta_t$ & Noise scaling parameter at time step $t$ \\
    $\mathcal{S} $ & State space in MDPs \\
    $\mathcal{A} $ & Action space in MDPs \\
    $\mathcal{P} $ & Transition kernel in MDPs \\
    $\mathcal{R} $ & Reward function in MDPs \\
    $ {\bf s}_t $ & ${\bf s}_t \in \mathcal{S}$, the state at $t$ time step, $ {\bf s}_t = \left( {t,{\bf u}_{T - t} } \right)$\\
    $ {\bf a}_t $ & ${\bf a}_t \in \mathcal{A}$, the action at $t$ time step, ${\bf a}_t = {\bf u}_{T - t - 1}$ \\
    $P\left( {{\bf s}_{t + 1} \left| {{\bf s}_t ,{\bf a}_t } \right.} \right) $ & Probability transition to ${s}_{t+1}$ given ${\bf s}_t $ and ${\bf a}_{t}$ \\
    $R\left( {{\bf s}_t ,{\bf a}_t } \right) $ & Reward at $t$ time step \\
    $r\left( {{\bf u}_0 } \right) $ & Reward at final time step \\
    $\tau $ & Trajectory sampled from MDPs \\
    $\bar R $ & Cumulative reward from trajectory $\tau$ \\
    $ \pi_\theta$ & Policy which can be viewed as $p_\theta$ \\
    $ \mathcal{J}_{ELBO} \left( \theta  \right) $ & ELBO-based loss function\\
    $ \mathcal{J}_{RL} \left( \theta  \right) $ & RL-based loss function\\
    $ N_K$ & True positive in top-$K$ recommendations \\
    $ N_{sim - K} $ & True positive in top-$K$ of similar users  \\
    $ \alpha $ & Balancing the reward between $N_K$ and $ N_{sim - K} $\\
    \bottomrule
\end{tabular}
\label{tab:notations}
\end{table}

\subsection{Organization}

The remainder of this paper is organized as follows. Section II presents the preliminaries. Section III describes the proposed methodology, including the problem statement and technical details of our \textsf{ReFiT} framework. Comprehensive experimental evaluations are shown in Section IV. Section V presents the prior studies related to our work. Finally, we provide a summary and concluding remarks in Section VI.

Table \ref{tab:notations} summarizes the notation that is used in this paper. This notation will be formally defined in the subsequent sections when we introduce our methodology and technical details.

\section{Preliminaries}

We first provide a brief overview of diffusion-based recommender systems and RL through MDPs.

\subsection{Notations}

Let $u\in\mathcal{U}$ and $i\in\mathcal{I}$ denote a user and an item, respectively, where $\mathcal{U}$ and $\mathcal{I}$ denote the sets of all users and all items, respectively. Historical interactions of a user $u \in \mathcal{U}$ with items are represented as a binary vector ${\bf u} \in \left\{ {0,1} \right\}^{\left| \mathcal{I} \right|} $ whose $i$-th entry is 1 if there exists implicit feedback (such as a click or a view) between user $u$ and item $i \in \mathcal{I}$, and 0 otherwise.\footnote{The unbolded $u$ represents a user, while the bolded ${\bf u}$ represents a certain user's interaction vector as utilized in the proposed framework.} 
% All the notations used in this paper are summarized in {\bf Appendix \ref{app:notation}}.

%\vspace{-5.0em}

\subsection{Diffusion-Based Recommender Systems}

Previous studies have demonstrated the effectiveness of diffusion models in recommender systems \cite{walker2022recommendation, wang2023diffusion, hou2024collaborative, zhu2024graph}. Given user–item interactions for an individual user sampled from a real data distribution, ${\bf u}_0\sim p\left({{\bf u} } \right) $, the generation process can be modeled using a probabilistic diffusion framework involving two directional processes: a forward-diffusion process and a reverse-denoising process.

The forward-diffusion process is characterized as a Markovian process $q\left( {{\bf u}_t \left| {{\bf u}_{t - 1} } \right.} \right) $, where $t\in\{1,\cdots,T\}$ is the diffusion step and Gaussian noise is gradually added to the user–-item interactions ${\bf u}_0$ over $T$ time steps, producing a sequence of noisy samples $\{ {\bf u}_1 , \ldots ,{\bf u}_T \}$. At each step, noise is added based on the transition from ${\bf u}_{t-1}$ to ${\bf u}_t$ via a Gaussian distribution $q\left( {{\bf u}_t \left| {{\bf u}_{t - 1} } \right.} \right) = {\mathcal{N}}\left( {{\bf u}_t ;\sqrt {1 - \beta _t } {\bf u}_{t - 1} ,\beta _t {\bf I}} \right) $, where $\mathcal{N}$ denotes the Gaussian distribution and $\beta _t  \in \left( {0,1} \right) $ controls the noise scale at each time step $t$ \cite{sohl2015deep, ho2020denoising}, which is shown in Fig. \ref{fig:overview_sub1}.

In the reverse-denoising process, accurately estimating the distribution $q\left( {{\bf u}_{t - 1} \left| {{\bf u}_t } \right.} \right) $ is technically challenging as it requires using the entire dataset. To overcome this issue, a neural network model is employed to approximate the distribution $q({\bf u}_{t-1}|{\bf u}_t )$ \cite{ho2020denoising, wang2023diffusion}. As shown in Fig. \ref{fig:overview_sub1}, starting from ${\bf u}_T$, the reverse-denoising process progressively recovers ${\bf u}_{t-1}$ from ${\bf u}_t$ through a denoising transition step, which is modeled as $ p_\theta \!  \left( {{\bf u}_{t - 1} \! \left| {{\bf u}_t } \right.} \right)\!\! =\!\! \mathcal{N}\!\left( {{\bf u}_{t - 1} ;\!\bm{\mu} _\theta  \left( {{\bf u}_t \!, \! t} \right)\!,\!\bm{\Sigma} _\theta  {\left( {{\bf u}_t \!, \! t} \right)} } \right)$\footnote{Here, ${\bm{\mu} _\theta  \left( {{\bf u}_t, t} \right)} $ and ${\bm{\Sigma}_\theta  {\left( {{\bf u}_t, t} \right)} } $ are the mean and covariance, respectively, of the Gaussian distribution predicted by the neural network with learnable parameters $\theta $.}. The neural network can be optimized with the evidence lower bound (ELBO) using the following objective \cite{ho2020denoising, wang2023diffusion}:
\setlength{\abovedisplayskip}{1pt}
\setlength{\belowdisplayskip}{1pt}
\begin{equation}       
    \begin{array}{l}
      - \!\log p\left( {{\bf u}_0 } \right)\! \le \!\sum\limits_{t = 2}^T \!{\mathbb{E}_{q \left( {{\bf u}_t \left| {{\bf u}_0 } \right.} \right)} \!\!\left[ {\text{KL}\!\left( {q\!\left( {{\bf u}_{t - 1} \!\left| {{\bf u}_t ,\!{\bf u}_0 } \right.} \right)\!\!\left\| {p_\theta \! \left( {{\bf u}_{t - 1} \!\!\left| {{\bf u}_t } \right.} \right)} \right.} \right)} \right]}  \\ 
     \;\;\;\;\;\;\;\;\;\;\;\;\;\;\;\;\;\; - \mathbb{E}_{q\left( {{\bf u}_1 \left| {{\bf u}_0 } \right.} \right)} \left[ {\log p_\theta  \left( {{\bf u}_0 \left| {{\bf u}_1 } \right.} \right)} \right] 
     % \\
     % \;\;\;\;\;\;\;\;\;\;\;\;\;\;\;\;\;\; \buildrel \Delta \over =  \mathcal{J}_{ELBO} \left( \theta  \right),
     \end{array}
     \nonumber
\end{equation} 
\begin{align}
    \;\;\;\;\;\;\;\;\buildrel \Delta \over = \mathcal{J}_{ELBO} \left( \theta \right), \hspace{10em}
    \label{obj_diffusion}
\end{align}
where $\text{KL}(\cdot \| \cdot)$ denotes the Kullback–Leibler (KL) divergence between two distributions. Given a pre-trained neural network $\theta$ using the diffusion model, we can iteratively sample a trajectory $\left\{ {{\bf u}_T , \ldots ,{\bf u}_0 } \right\} $ by following the Markovian reverse-denoising process $p_\theta  \left( {{\bf u}_{t - 1} \left| {{\bf u}_t } \right.} \right) $, ultimately ending with clean (original) user--item interactions ${\bf u}_0$. 

\begin{figure}
    \centering
    \includegraphics[width=\linewidth]{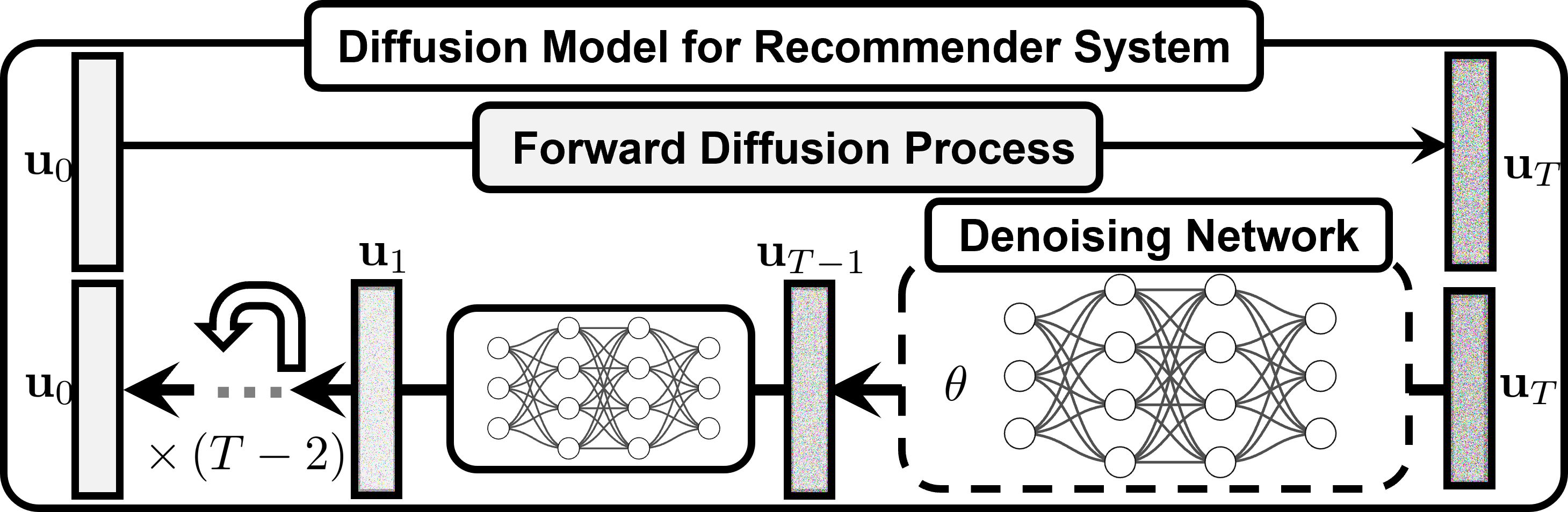} % Replace with your image file
    % \captionsetup{font={small,stretch=0.5}, skip=2pt,textfont=normalfont,labelfont=normalfont}
    %\captionsetup{skip=0pt}
    % \caption{The comparison between training a diffusion-based recommender system from scratch and fine-tuning the pre-trained diffusion-based recommender system.}
    \caption{Diffusion-based recommender system.}
    \label{fig:overview_sub1}
    %\vspace{-2em}
\end{figure}

\subsection{RL through MDPs}

An MDP is a formulation of sequential decision-making problems within a specific environment, defined by the tuple $\left( {\mathcal{S},\mathcal{A},\mathcal{P},\mathcal{R}} \right) $ \cite{kaelbling1996reinforcement, OtterloW12, franccois2018introduction}, where $\mathcal{S}$ denotes the state space, $\mathcal{A}$ the action space, $\mathcal{P}$ the transition kernel that specifies the probability of moving from one state to another given an action, and $\mathcal{R}$ the reward function that provides the immediate reward received after such a transition\footnote{We omit the discount factor $\gamma$ for simplicity.}.

At each time step $t$, an RL agent observes a state ${\bf s}_t  \in \mathcal{S} $, takes an action ${\bf a}_t  \in \mathcal{A} $, receives a reward $R\left( {{\bf s}_t ,{\bf a}_t } \right) $, and transitions to a new state ${\bf s}_{t + 1}  \sim P\left( {{\bf s}_{t + 1} \left| {{\bf s}_t ,{\bf a}_t } \right.} \right) $. This process is repeated as the RL agent interacts with the environment over $T$ steps, producing a sequence of states and actions known as a trajectory, denoted as $\tau  = \left( {{\bf s}_0 ,{\bf a}_0 ,{\bf s}_1 ,{\bf a}_1 , \ldots ,{\bf s}_T ,{\bf a}_T } \right) $, where ${\bf a}_T$ can be viewed as no action because it ends with ${\bf s}_T$. 

The RL agent acts according to a policy $\pi_\theta \left( {{\bf a}_t\left| {\bf s}_t \right.} \right) $, which corresponds to the probability of taking an action ${\bf a}_t$ when in a state ${\bf s}_t$ at time step $t$. The objective of the RL agent is to maximize $\mathcal{J}_{RL} \left( \theta  \right) $, which is the expected cumulative reward over trajectories sampled from its policy:
\begin{equation}
    \mathcal{J}_{RL} \left( \theta  \right) = \mathbb{E}_{\tau  \sim p\left( {\tau \left| {\pi _\theta  } \right.} \right)} \left[ {\bar R} \right],
\label{RL_objective}
\end{equation}
where $p\left( {\tau \left| {\pi _\theta  } \right.} \right) $ is the probability of obtaining a trajectory $\tau$ given the policy $\pi _\theta$ and $\bar R = \sum\nolimits_{t = 0}^{T - 1} {R\left( {{\bf{s}}_t ,{\bf{a}}_t } \right)} $ is the cumulative reward for the trajectory $\tau$. 

It is worth noting that the trainable parameters $\theta$ of the diffusion model in (\ref{obj_diffusion}) are reused in (\ref{RL_objective}). This dual usage of parameters allows for seamless integration of the RL agent into diffusion models, potentially enhancing the performance of recommender systems through the combined strengths of both approaches. Importantly, the design of the reward function serves as a critical foundation for the effectiveness of this RL framework, which will be detailed in Section \ref{sec:reward_design}.

\section{Methodology}

In this section, we elaborate on the proposed \textsf{ReFiT} framework. After stating our problem, we describe the MDP formulation that connects fine-tuning of diffusion-based recommender systems to the RL agent. We then explain how to design a new reward function to drive the learning process and how to optimize the policy based on our reward function using RL. Finally, we provide analytical findings, which theoretically validate the effectiveness of \textsf{ReFiT}. 

\begin{figure*}[t]
\centering
\includegraphics[width=0.99\linewidth]{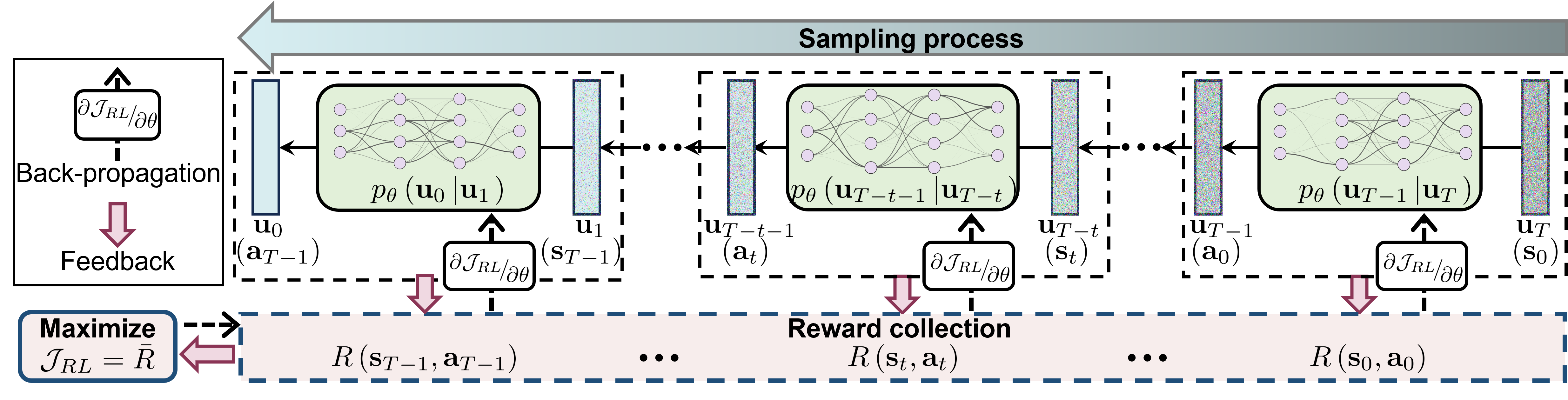} % Replace with your image file
% \captionsetup{font={small,stretch=0.5}, skip=2pt,textfont=normalfont,labelfont=normalfont}
\captionsetup{skip=-1pt}
\caption{The schematic overview of the proposed \textsf{ReFiT} framework.}
%\vspace{-1.5em}
\label{fig:overview_sub2}
\end{figure*}

\subsection{Problem Statement}

We assume the use of a diffusion model $ p_\theta$, which has already been pre-trained on a set of user--item interactions with the recommendation objective \cite{wang2023diffusion, hou2024collaborative}. According to a fixed sampling process, the denoising trajectory related to recommendations can be sampled from the diffusion model as $\tau \sim p_\theta $. In the reverse-denoising process, our objective is to fine-tune the model in the sense of maximizing the cumulative reward $\bar R$ of the trajectory, which is designed to evaluate the quality of recommendations:
    \begin{equation}
        {\hat \theta}  = \mathop {\arg \max }\limits_\theta   {\mathbb{E}_{\tau  \sim p_\theta  } \left[ {\bar R} \right]}.
    \label{our_objective}
    \end{equation}

Notably, in diffusion-based recommender systems, {\it preserving personalization} is crucial, which can be harmed through large sampling steps, implying that {\it only a few sampling steps are sufficient} \cite{wang2023diffusion, hou2024collaborative}. As a result, unlike diffusion models in other domains like computer vision, computational efficiency during sampling is not a concern in recommendation tasks. Thus, this motivates us to focus on enhancing model performance during the training or fine-tuning stages, where optimization efforts are more impactful.

\subsection{MDP Formulation}

In this subsection, we bridge between fine-tuning diffusion-based recommender systems and the RL agent through an MDP. Given the pre-trained diffusion model $p_\theta$, the sampling process can be framed as an MDP because it inherently involves a sequence of state transitions that depend only on the current state and action \cite{black2023training, fan2024reinforcement}, as depicted in Fig. \ref{fig:overview_sub2}. We regard sampling made by the diffusion model as denoising actions, and formulate the underlying MDP framework as follows:
\begin{itemize}
    \item {\bf State ($\mathcal{S}$).} In the diffusion model $p_\theta$, the state ${\bf s}_t \in \mathcal{S}$ at time step $t$ represents user--item interactions ${\bf u}_{T-t}$, which is denoted as ${\bf s}_t  = \left( {t,{\bf u}_{T - t} } \right)$. Here, the initial state ${\bf s}_0$ is the noisy user--item interactions ${\bf u}_T$ (shown on the far right in the sampling process of Fig. \ref{fig:overview_sub2}), and the final state ${\bf s}_T$ is the clean user--item interactions ${\bf u}_0$ (shown on the far left in the sampling process of Fig. \ref{fig:overview_sub2}).
    \item {\bf Action ($\mathcal{A}$).} The action ${\bf a}_t \in \mathcal{A}$ in the diffusion model is the decisions to denoise user--item interactions, sampled from the distribution $p_\theta$ based on the state ${\bf s}_t$. The action ${\bf a}_t$ is denoted as ${\bf a}_t  = {\bf u}_{T - t - 1}$. Here, the first action ${\bf a}_0$ is ${\bf u}_{T-1}$, which can be viewed as the behavior of sampling from $p_\theta$ given the initial state ${\bf s}_0$, and the final action leads to the ultimate clean user--item interactions ${\bf u}_0$, as shown in Fig. \ref{fig:overview_sub2}.
    \item {\bf State transition ($\mathcal{P}$).} The transition probability is defined as the probability of transitioning from ${\bf s}_t$ to ${\bf s}_{t+1}$ given ${\bf a}_t$:
    \begin{equation}
        P\left( {{\bf s}_{t + 1} \left| {{\bf s}_t ,{\bf a}_t } \right.} \right) = \delta \left( {{\bf s}_{t + 1}  - f\left( {{\bf s}_t ,{\bf a}_t } \right)} \right),
    \end{equation}
    where $\delta \left(  x \right) $ represents the Dirac delta distribution with nonzero density only at $x=0$ and $f\left( {{\bf s}_t ,{\bf a}_t } \right) \! = \!\left( {t\! +\! 1,{\bf u}_{T\! -\! (t + 1)} } \right) $.
    \item {\bf Reward ($\mathcal{R}$).} At each state ${\bf s}_t$, after taking the action ${\bf a}_t$, the reward $R\left( {{\bf s}_t ,{\bf a}_t } \right)$ is assigned as follows:
    \begin{equation}
        R\left( {{\bf s}_t ,{\bf a}_t } \right) = \left\{ \begin{array}{l}
         r\left( {{\bf u}_0 } \right)\;\;\;\;if\;t = T-1 \\ 
         0\;\;\;\;\;\;\;\;\;\;\;\;\text{otherwise}. \\ 
         \end{array} \right.
    \end{equation}
    We collect the reward for all time steps to compute the cumulative reward, as illustrated in the bottom block of Fig. \ref{fig:overview_sub2}. Note that only the termination state will receive a positive reward; this strategy is widely used in the design of reward functions \cite{black2023training, fan2024reinforcement}. This MDP is repeated as the RL agent interacts with the diffusion model over $T$ time steps, producing a sequence of states and actions known as a trajectory. The cumulative reward of this trajectory is given by $\bar R = \sum\nolimits_{t = 0}^{T - 1} {R\left( {{\bf{s}}_t ,{\bf{a}}_t } \right)}  = r\left( {{\bf{u}}_0 } \right)$.
    % $\bar R = \sum\limits_{t = 0}^{T-1} {R\left( {{\bf s}_t ,{\bf a}_t } \right)} = r\left( {{\bf u}_0 } \right) $
\end{itemize}

\vspace{-0.8em}

\subsection{Reward Function Design}
\label{sec:reward_design}
In RL, the reward function provides valuable feedback to the RL agent ({\it i.e.}, the neural network within the diffusion model), helping the agent understand the consequences of its actions. A well-designed reward function ensures that the RL agent receives clear signals about what behaviors are desired, thereby guiding the learning process efficaciously. In our study, inspired by the fact that capturing collaborative signals plays a pivotal role in developing collaborative filtering (CF) techniques~\cite{wang2019neural, hou2024collaborative}, we design a new reward function that judiciously integrates collaborative signals in high-order connectivities into the \textsf{ReFiT} framework to enhance personalized recommendations. As shown in Fig. \ref{fig:high_order}, by leveraging shared preferences among behaviorally similar users, the reward function reinforces the relevance of recommendations to better reflect individual interests. Specifically, the reward function $r({\bf u}_0)$ for a given user provides feedback based on the top-$K$ recommendation accuracy, representing the true positive within top-$K$ recommendations, of not only the target user but also his/her similar users discovered by multi-hop relationships. We formally characterize our reward function using the top-$K$ \underline{r}ecommendation \underline{a}ccuracy with \underline{c}ollaborative \underline{s}ignals, dubbed the {\it RACS reward function}, as follows:
\begin{equation}
    r_\text{RACS}\left( {{\bf u}_0 } \right) = \alpha N_K  + (1-\alpha) N_{sim - K}, 
    \label{eq:our_reward}
\end{equation} 
where $N_K$ represents the number of true positive recommended items in the top-$K$ recommendations of the target user; $N_{sim - K}$ counts true positive recommended items in the top-$K$ recommendations of similar users; and the hyperparameter $\alpha$ balances between the two terms $N_K$ and $N_{sim-K}$.\footnote{The number of true positive (correct) recommended items can be counted given training data of user--item interactions.} Here, similar users to a target user can be identified by calculating the cosine similarity between the target user and all other users based on user--item interactions. The effectiveness of our collaborative signal-aware reward function $r_\text{RACS}({\bf u}_0)$, compared with other reward designs, will be empirically validated in Section \ref{sec:RQ2}. By capturing both direct and high-order connectivities for each user, the reward function provides richer personalized feedback, thereby further enhancing personalized recommendations.
\begin{figure}
    \centering
    \includegraphics[width=0.9\linewidth]{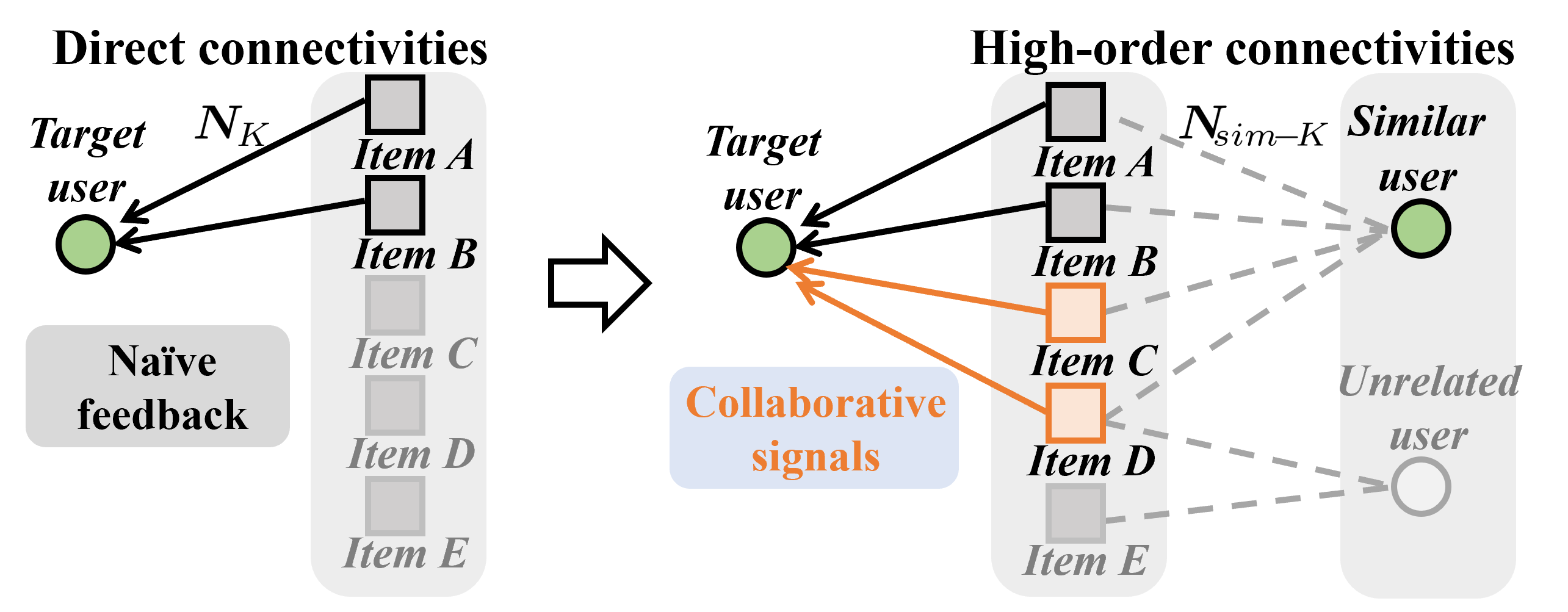}
    %\captionsetup{skip=-2pt} % Adjust the distance here
    \caption{Enhancing recommendation quality through high-order collaborative signals. By leveraging high-order connectivity, the model can infer additional relevant items for the target user by utilizing indirect collaborative signals from similar users---signals that are often overlooked when relying solely on na\"{\i}ve user feedback.}
    \label{fig:high_order} 
%\vspace{-1em}
\end{figure} 
\begin{remark}
It is worth noting that the designed reward function, which captures collaborative signals, is highly effective and broadly applicable. As long as user–item interactions are available, this reward can be seamlessly integrated into various diffusion-based recommendation tasks, making it a versatile and powerful design component.
\end{remark}

\subsection{Policy Optimization}
\label{sec:opt}
In this subsection, we aim to optimize the trainable parameters $\theta $ in the diffusion model $p_\theta $ in the sense of maximizing the cumulative reward, as outlined in the RL-based objective function in (\ref{our_objective}). By formulating the diffusion model as an MDP, we can treat the policy $\pi_\theta $ in (\ref{RL_objective}) as analogous to the diffusion model $p_\theta$ in (\ref{our_objective}). Specifically, we fine-tune the parameters $\theta$, initialized with a pre-trained diffusion model $p_{\theta_0}$, to maximize the objective function in (\ref{our_objective}). This approach allows us to leverage the RL agent to enhance the performance of diffusion-based recommender systems.

{
%\titlespacing{\subsection}{0pt}{0.8em}{0.6em}  % if a subsection follows
\begin{algorithm}[t]
\caption{\textsf{ReFiT}}\label{alg:cap}
\begin{algorithmic}[1]
\Require Pre-trained model $p_\theta$, where $\theta = \theta_0$, all users $\mathcal{U}$, reward $r\left( \cdot \right)$, number of iterations $Iters$, time step $T$, learning rate $l$. 
\While{$i < Iters$}
\State Sample a batch of users' interactions $U \subset \mathcal{U} $
\For{all user in $U $}
\State Sampling ${\bf u}_{T:0}  \sim p_\theta $
\State Compute cumulative reward $ \bar R \!=\! r\left( {{\bf u}_0 } \right)$ in (\ref{eq:our_reward})

\State Compute gradient $ \nabla _\theta = \nabla \mathcal{J}_{RL} \left( {\theta  } \right)$ in (\ref{our_gradient})
\State $\theta  \leftarrow \theta  + l \cdot \nabla _\theta  $

\EndFor
\State $i = i + 1 $
\EndWhile
\end{algorithmic}
\end{algorithm}
%\vspace{-1em} 
}

The overall procedure of the proposed \textsf{ReFiT} framework is detailed in Algorithm \ref{alg:cap}. For fine-tuning with an RL agent, we iteratively update $\theta$. In each iteration, we first sample a batch of users' interactions (refer to line 2). We then collect a set of denoising trajectories, each denoted as ${\bf u}_{T:0}  = \left( {{\bf u}_T ,{\bf u}_{T-1} , \ldots ,{\bf u}_1 ,{\bf u}_0 } \right) $, by sampling from the given diffusion model (refer to line 4). Additionally, we compute the corresponding cumulative reward $\bar R = r\left( {{\bf u}_0 } \right)$ for each trajectory (refer to line 5). To maximize $\mathcal{J}_{RL}(\theta)$ in (\ref{RL_objective}), we use the REINFORCE algorithm \cite{williams1992simple} with gradient ascent, a simple yet effective policy gradient method for this task, although other policy gradient methods can also be applied (see Section \ref{sec:com_RL_fine_tuning} for investigating its effectiveness). The gradient of $\mathcal{J}_{RL}(\theta)$ is computed as follows (refer to line 6):
\begin{equation}
    \nabla \mathcal{J}_{RL} \left( {\theta  } \right) = \mathbb{E}_{{\bf u}_{T:0}  \sim p_\theta} \left[ {\sum\limits_{t = 1}^T {\nabla _\theta  \log p _\theta  \left( {{\bf u}_{t-1} \left| {{\bf u}_t } \right.} \right){r\left( {{\bf u}_0 } \right)} } } \right],
    \label{our_gradient}
\end{equation}
where the diffusion model can be viewed as the policy $p_\theta  \left( {{\bf u}_{t - 1} \left| {{\bf u}_t } \right.} \right) = \pi _\theta  \left( {{\bf a}_t \left| {{\bf s}_t } \right.} \right) $ to decide the next action. Here, the expectation is estimated by taking the average over all collected denoising trajectories using the Monte Carlo sampling method \cite{mohamed2020monte}. The inference details of \textsf{ReFiT} are provided in AppendixA.II.

% We revisit Fig \ref{fig:overview_sub1} to see the difference between training a diffusion-based recommender system from scratch via the ELBO-based loss (bottom-right) and fine-tuning the pre-trained model via the RL-based loss (bottom-left). 

\subsection{Theoretical Analyses}
\label{sec:theorem}

First, we seek to formally establish a connection between the loss of \textsf{ReFiT} and the ELBO-based loss. Unlike the case of optimizing diffusion models alongside the ELBO-based loss \cite{ho2020denoising, wang2023diffusion}, we are capable of optimizing the {\it exact log-likelihood} of the user--item interactions ($ {\nabla _\theta  \log p _\theta  \left( {{\bf u}_{t-1} \left| {{\bf u}_t } \right.} \right)} $) at each denoising step, instead of approximating the log-likelihood induced by the ELBO in (\ref{obj_diffusion}). This enables {\it direct} optimization of the true log-likelihood, thereby allowing further performance improvements even after the gains from pre-training have saturated. In other words, our formulation avoids the suboptimality of standard ELBO-based training. Formally, let $p_\theta$ denote the pre-trained model and $r\left( {{\bf u}_0 } \right) $ the reward function. During fine-tuning of $p_\theta$, we have
\begin{equation}
    \begin{array}{l}
     \mathbb{E}_{p_{\theta }} \left[ { - r\left( {{\bf u}_0 } \right)\log p_\theta  \left( {{\bf u}_0 } \right)} \right] \\ 
    \le \!\! \mathbb{E}_{p_{\theta}} \!\! \left[ {\! r \! \left( {\!{\bf u}_0 }\! \right) \!\! \sum\limits_{t = 2}^T \!\!{\mathbb{E}_{q\left( {\!{\bf u}_t \!\left| {{\bf u}_0 \!\! } \right.} \right)} \!\! \left[ {\textup{KL} \! \left( {\!q\left( {\!{\bf u}_{t - 1} \! \left| {{\bf u}_t ,{\bf u}_0 \! } \right.} \right) \! \left\| {p_\theta \!\! \left( {\!{\bf u}_{t - 1} \!\left| {{\bf u}_t \!} \right.} \!\right)} \right.} \right)} \right]} } \right] \! + \! C, \\ 
     \end{array}
     \label{theory1}
\end{equation}
where $\mathbb{E}_{p_\theta  } \left[  \cdot  \right] $ indicates the expectation over all trajectories sampled from $p_\theta$; ${\mathbb{E}_{q\left( {{\bf u}_t \left| {{\bf u}_0 } \right.} \right)} \left[ {\textup{KL} \left( {\cdot\left\| { \cdot } \right.} \right)} \right]} $ corresponds to the per-step ELBO loss in (\ref{obj_diffusion}); and $C$ is a constant. Then, we would like to provide the following theoretical insight.

\begin{remark}
    The left-hand side of (\ref{theory1}), $ \mathbb{E}_{p_\theta}\!\! \left[ {\!-\!\log \! p _\theta \!  \left( {{\bf u}_0} \right){\! r \!\left( {{\bf u}_0 \!} \right)\!} }  \right]$, serves as our loss in \textsf{ReFiT}.\footnote{This is typically used as a surrogate loss function in practice when the REINFORCE algorithm is employed for optimization \cite{williams1992simple, black2023training}, instead of the objective function in (\ref{RL_objective}). This loss is derived by tracing back from the gradient in (\ref{our_gradient}), where $p_\theta  \left( {{\bf u}_0 } \right) = \prod\limits_{t = 1}^T {p_\theta  \left( {{\bf u}_{t - 1} \left| {{\bf u}_t } \right.} \right)} $. Moreover, a negative sign is added as we aim to minimize the loss function, whereas the original objective is the cumulative reward maximization.} More specifically, the term $\log \! p _\theta \! \left( {{\bf u}_0} \right){\!r\!\left( {{\bf u}_0 } \right)}$ represents a reward-weighted log-likelihood, which biases the learning process towards actions that yield higher rewards. The right-hand side of (\ref{theory1}) is the loss function derived from (\ref{obj_diffusion}) and can be viewed as the reward-weighted ELBO-based loss \cite{fan2024reinforcement}. Both terms leading to higher rewards have a higher impact on updating the policy, reinforcing the agent's preference for actions that result in more accurate recommendations. However, continuing to fine-tune with such a reward-weighted ELBO-based loss is unlikely to surpass the performance achieved by the pre-trained model with the same loss function due to overfitting. Therefore, optimizing the exact log-likelihood ({\it i.e.}, $ \mathbb{E}_{p_\theta} \left[ {-\log p _\theta  \left( {{\bf u}_0} \right){r\left( {{\bf u}_0 } \right)} }  \right]$) can lead to further performance improvements, even after the pre-trained diffusion model has converged.
\end{remark}

We refer to Appendix B for further details on the related theoretical discussion. Second, to validate the scalability of our \textsf{ReFiT} method, we analytically show its computational complexity by establishing the following theorem.

\begin{theory}
\label{sec:theorem2}
The computational complexity of \textsf{ReFiT} is given by $ \mathcal{O}\left( {\max \left\{ {\left| \mathcal{U} \right|,\left| \mathcal{I} \right|} \right\}} \right) $.
\end{theory}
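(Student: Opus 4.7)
The plan is to bound the cost of a single fine-tuning iteration of \textsf{ReFiT} (Algorithm~\ref{alg:cap}) by separately analyzing its three dominant components---(i) sampling the denoising trajectory, (ii) evaluating the RACS reward $r_\text{RACS}({\bf u}_0)$ in~(\ref{eq:our_reward}), and (iii) performing the gradient update in~(\ref{our_gradient})---and then showing that each component is at worst linear in either $|\mathcal{U}|$ or $|\mathcal{I}|$. Summing them yields $\mathcal{O}(|\mathcal{U}|+|\mathcal{I}|)=\mathcal{O}(\max\{|\mathcal{U}|,|\mathcal{I}|\})$. Because the outer loop runs over a constant-size mini-batch of users for a fixed number of iterations, per-iteration bounds translate directly into the claimed asymptotic.

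First I would handle the sampling step. Each reverse-denoising transition $p_\theta({\bf u}_{t-1}\,|\,{\bf u}_t)$ acts on an interaction vector of dimension $|\mathcal{I}|$ and is implemented by a neural network whose hidden widths are treated as constants independent of $|\mathcal{U}|$ and $|\mathcal{I}|$; a single forward pass is thus $\mathcal{O}(|\mathcal{I}|)$. Since the paper stresses that only a small constant number $T$ of sampling steps suffices in diffusion-based recommenders (preserving personalization), the whole trajectory ${\bf u}_{T:0}$ costs $\mathcal{O}(T\cdot|\mathcal{I}|)=\mathcal{O}(|\mathcal{I}|)$. The same argument, applied to the backward pass through the $T$ denoising transitions, bounds the gradient computation in line~6 by $\mathcal{O}(|\mathcal{I}|)$ per user.

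Next I would analyze the RACS reward. The term $N_K$ requires selecting the top-$K$ entries of ${\bf u}_0\in\mathbb{R}^{|\mathcal{I}|}$ and counting matches against the ground truth, which can be done with a partial selection in $\mathcal{O}(|\mathcal{I}|)$. For $N_{sim-K}$, the set of behaviorally similar users is obtained by cosine similarity over the user--item matrix; since these similarities depend only on observed interactions (not on $\theta$), I would assume they are precomputed once before fine-tuning and simply looked up during each iteration at cost $\mathcal{O}(|\mathcal{U}|)$ to retrieve the nearest neighbors, plus $\mathcal{O}(1)$ per neighbor to read off its cached top-$K$ hit count. Aggregating the two terms therefore costs $\mathcal{O}(|\mathcal{U}|+|\mathcal{I}|)$ per user. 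Combining sampling, reward, and gradient gives the announced bound.

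The main obstacle, and the point I would take care to justify, is the treatment of the similar-user computation: a naive per-iteration recomputation of cosine similarities would be $\mathcal{O}(|\mathcal{U}|\cdot|\mathcal{I}|)$ and break linearity. I would therefore make explicit (i) that the similarity graph and each user's neighbor list are built once as a preprocessing step whose cost is amortized over all iterations and does not appear in the per-iteration complexity reported by the theorem, and (ii) that the neural backbones used by the pre-trained diffusion recommenders targeted by \textsf{ReFiT} (e.g., DiffRec and CF-Diff) have per-layer cost linear in $|\mathcal{I}|$, so the constants hidden by $\mathcal{O}(\cdot)$ do not smuggle in a factor of $|\mathcal{U}|$ or $|\mathcal{I}|$. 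With these two points secured, the sum $\mathcal{O}(|\mathcal{I}|)+\mathcal{O}(|\mathcal{U}|+|\mathcal{I}|)+\mathcal{O}(|\mathcal{I}|)=\mathcal{O}(\max\{|\mathcal{U}|,|\mathcal{I}|\})$ follows immediately, completing the proof.
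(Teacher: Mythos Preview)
Your proposal is correct and follows essentially the same stage-wise decomposition as the paper's own proof: both bound the per-iteration cost by analyzing user/trajectory sampling, item generation, and reward computation separately, both push the cosine-similarity neighbor search into a preprocessing step to avoid the $\mathcal{O}(|\mathcal{U}|\cdot|\mathcal{I}|)$ blow-up, and both sum to $\mathcal{O}(\max\{|\mathcal{U}|,|\mathcal{I}|\})$. You are somewhat more explicit than the paper about the constant-width neural backbone assumption and about the backward-pass cost in line~6 (which the paper's proof does not separately address), but the overall argument is the same.
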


\begin{proof}
    We begin by breaking down the proof into three main stages: 1) sampling users, 2) generating recommended items, and 3) computing the reward function. First, the computational complexity of sampling users is $\mathcal{O}\left( {\left| U_{sample} \right|} \right) $, where $\left| U_{sample} \right|$ is the total number of sampled users. Second, for each user, generating recommended items requires calculating the relevance of each item, yielding a computational complexity of $\mathcal{O}\left( {\left| \mathcal{I} \right|} \right) $. Third, the reward computation consists of two steps for each user, including (i) discovering similar users and (ii) counting the true positives, which correspond to $N_K$ and $N_{sim-K}$, respectively. Finding the top-$d$ similar users leads to a computational complexity of $O\left( {\left| \mathcal{U} \right| \log {d}} \right) $, which can be handled in the preprocessing stage. Computing $N_K$ requires $\mathcal{O}\left( {K} \right) $ operations, as it involves comparing $K$ recommendations with the ground truth. Computing $N_{sim-K}$ has a complexity of $\mathcal{O}\left( {d K} \right) $, as it compares $K$ recommendations with the ground truth among $d$ similar users. Therefore, the total complexity of \textsf{ReFiT} is bounded by $ \mathcal{O}\left( {\left| {U_{sample} } \right| \left( {\left| \mathcal{U} \right| {\log {d}}  + \left| \mathcal{I} \right| + K + d K} \right)} \right) $. Due to the fact that $d$, $K$, and $\left| {U_{sample} } \right|$ are constants independent of dataset scaling, the total computational complexity is simplified to $ \mathcal{O}\left( {\max \left\{ {\left| \mathcal{U} \right|,\left| \mathcal{I} \right|} \right\}} \right) $. This completes the proof of Theorem \ref{sec:theorem2}.
\end{proof}

From Theorem \ref{sec:theorem2}, one can see that the computational complexity of \textsf{ReFiT} scales {\it linearly} with respect to the number of users or items. This is empirically validated in Section IV-B6.

% \begin{remark}
% This arises because the reward function's complexity is proportional to the number of users when computing user--user similarities, and proportional to the number of items when evaluating feedback. Thus, \textsf{ReFiT} preserves scalability even for large-scale recommendation scenarios.
% \end{remark}

% For the above theoretical findings, rigorous derivations, along with formal theorems and their proofs, are provided in the {\bf supplementary material}\footnote{https://anonymous.4open.science/r/ReFiT-4C60.}.

\section{Experimental Evaluation}

In this section, we systematically conduct extensive experiments to answer the following six key research questions (RQs):

\begin{itemize}
    \item {\bf RQ1}: How much does \textsf{ReFiT} improve the recommendation accuracy over benchmark recommendation methods for standard CF?
    \item {\bf RQ2}: How does our reward function in \textsf{ReFiT} contribute to the recommendation accuracy?
    \item {\bf RQ3}: How much is \textsf{ReFiT} effective in comparison with other fine-tuning strategies?
    % \item {\bf RQ3}: \textcolor{blue}{How much is \textsf{ReFiT} computationally efficient compared to other fine-tuning method for diffusion-based recommender systems?}
    % \item {\bf RQ5}: How do key parameters in \textsf{ReFiT} affect the recommendation accuracy?
    \item {\bf RQ4}: How does \textsf{ReFiT} behave on other downstream recommendation tasks?
    \item {\bf RQ5}: How does the key parameter affect the performance of \textsf{ReFiT}?
    %\item {\bf RQ6}: \textcolor{blue}{How efficient is \textsf{ReFiT}?}
    \item {\bf RQ6} How does \textsf{ReFiT} perform in terms of memory usage and computational efficiency?
\end{itemize}

We basically carry out experiments for {\it four} types of widely-used recommendation tasks in which diffusion models were developed in the literature: recommendation for {\it standard} CF \cite{zhang2019deep, bobadilla2013recommender} and {\it sequential} recommendation \cite{ijcai/WangHWCSO19, fang2020deep}, {\it social} recommendation \cite{fan2019graph}, and {\it point-of-interest (POI)} recommendation \cite{zhao2020go}. Nevertheless, we focus primarily on showcasing a full set of experimental results for the recommendation task for standard CF; we refer to Section \ref{exp:rq5} for experimental results on the other three recommendation tasks. 
% We also refer to the {\bf supplementary material} for descriptions of experimental evaluations omitted from the main manuscript due to space limitations.
% Moreover, we conduct further experiments to analyze performance across fine-tuning iterations, evaluate computational efficiency, and validate the effectiveness of RL-aided fine-tuning, providing a more compreh
\subsection{Experimental Settings}

\begin{table}[t]
\small
  \captionsetup{skip=0pt}
  \caption{The statistics of the datasets used for four downstream recommendation tasks, including standard CF, sequential recommendation, social recommendation, and POI recommendation. Here, ``Int.'' = interactions, ``Seq.'' = average sequence length, and ``Links'' = social connections.}
  \resizebox{0.48\textwidth}{!}{
  \begin{tabular}{cccccc}
    \toprule
    {\bf Task} & {\bf Dataset} & {\bf \#User} & {\bf \#Item} & {\bf \#Int.} & {\bf Extra Stat}\\
    \midrule
    \multirow{3}*{{\bf Standard CF}} &{\bf MovieLens-1M} & 5,949 & 2,810 & 571,531 & --\\
    %Yelp& 54,574 & 34,395 & 1,402,736  \\
    &{\bf Yelp2018} & 31,668 & 38,048 & 1,561,406 & -- \\
    &{\bf Anime} & 73,515 & 11,200 & 7,813,737 & --\\
    \midrule
    \multirow{3}*{\shortstack{{\bf Sequential} \\ {\bf Recommendation}}} & {\bf YooChoose} & 128,468& 9,514& 539,436 & Seq.: 4.20\\
    & {\bf KuaiRec} & 92,090& 7,261& 737,163 & Seq.: 8.00\\
    & {\bf Zhihu} & 11,714& 4,838 & 77,712& Seq.: 6.63 \\
    \midrule
    \multirow{3}*{\shortstack{{\bf Social}\\{\bf Recommendation}}} & {\bf Ciao} & 1,925& 15,053&23,223 & \#Links: 65.08k\\
    & {\bf Yelp-sub} & 99,262& 105,142 & 672,513 & \#Links: 1.29m\\ 
    & {\bf Epinions} & 14,680& 233,261& 447,312 & \#Links: 632.14k\\
    \midrule
    \multirow{3}*{\shortstack{{\bf POI} \\ {\bf Recommendation}}} & {\bf Foursquare} & 2,321& 5,596&194,108 & --\\
    & {\bf TKY}  & 2,293& 15,177& 494,807 & --\\
    & {\bf NYC} & 1,083& 9,989& 179,468 & --\\
    \bottomrule
\end{tabular}
}
\label{table:datasets}
\vspace{-1.0em}
\end{table}

\begin{table*}[!t]\centering
\setlength\tabcolsep{6.0pt}
\small
  \captionsetup{skip=0.0pt}
  \caption{Performance comparison among \textsf{ReFiT} and recommendation competitors for the three benchmark datasets. Here, the best and second-best performers are highlighted by bold and underline, respectively. For the ML-1M, Yelp2018, and Anime datasets, a paired \textit{t}-test is conducted, yielding p-values of 0.0209, 0.0328, and 0.0134, respectively, all below the threshold of 0.05, indicating statistically significant results.}
  \label{tab:comparison}
  \begin{tabular}{c|cccc|cccc|cccc}
    \toprule[1pt]
    \multicolumn{1}{c|}{}&\multicolumn{4}{|c|}{{\bf ML-1M}}&\multicolumn{4}{c|}{{\bf Yelp2018}}&\multicolumn{4}{c}{{\bf Anime}}\\
    \cmidrule{1-13}
           {\bf Method} & {\bf R@10} & {\bf R@20}& {\bf N@10}& {\bf N@20}& {\bf R@10}& {\bf R@20}& {\bf N@10}& {\bf N@20}& {\bf R@10}& {\bf R@20}& {\bf N@10}& {\bf N@20}\\
    \midrule[1pt]
    {\bf NICF}& 0.0682& 0.1170& 0.0823& 0.0762& 0.0034& 0.0038& 0.0032& 0.0041 & 0.0531 & 0.0774 & 0.0716 & 0.0591\\
    {\bf FCPO}& 0.0449& 0.0803& 0.0520 &0.0439 & 0.0022& 0.0024& 0.0027& 0.0033 & 0.0472& 0.0517 & 0.0658 & 0.0552\\
    {\bf NGCF} & 0.0864& 0.1484& 0.0805&0.1008& 0.0275& 0.0482& 0.0313& 0.0391 & 0.1924& 0.2888 & 0.3515 & 0.3485\\
    {\bf LightGCN}& 0.0824& 0.1419& 0.0793& 0.0982& 0.0328& 0.0566& 0.0375& 0.0462 & 0.2071 & 0.3043 & 0.3937 & 0.3824\\
    {\bf SGL} & 0.0806& 0.1355& 0.0799 & 0.0968 & 0.0339 & 0.0595 & 0.0403 & 0.0497 & 0.1994 &0.2918 & 0.3748 & 0.3652 \\ 
    {\bf CFGAN}& 0.0684& 0.1181& 0.0663& 0.0828& 0.0163& 0.0278 & 0.0187 &0.0233 & 0.1664 & 0.2551& 0.3675& 0.3546\\    
    {\bf MultiDAE}& 0.0769& 0.1335& 0.0737& 0.0919& 0.0348& 0.0567& 0.0405& 0.0497 & 0.2142& 0.3085 & 0.4177 & 0.4125\\
    {\bf RecVAE}& 0.0835& 0.1422& 0.0769& 0.0963& 0.0344& 0.0587& 0.0393& 0.0482 & 0.2137 & 0.3068 & 0.4105 & 0.4068\\
    {\bf HDRM} & 0.1071 & 0.1834 & 0.0914 & 0.1168&  0.0337 & 0.0591 & 0.0404  & 0.0491 &  0.2148 & 0.3124 & 0.5133 & 0.4793\\
    \midrule[1pt]
    {\bf DiffRec}& 0.1058& 0.1781& 0.0901& 0.1131& 0.0351& 0.0597& 0.0414& 0.0499 & 0.2193 & 0.3249 & 0.5196 & 0.4845\\
    \rowcolor{lightgray!30}
    {\bf \textsf{ReFiT(DiffRec)}}& \underline{0.1083}& 0.1799& \underline{ 0.0918}& 0.1161& 0.0355& 0.0602& 0.0417& 0.0504 &0.2231 & \underline{0.3266} & 0.5211 & 0.4861\\
    \cmidrule{1-13}
    {\bf CF-Diff}& 0.1077 & \underline{0.1843}& 0.0912& \underline{0.1176}& \underline{0.0363}& \underline{0.0608} & \underline{0.0425}& \underline{0.0509} &\underline{0.2263} & 0.3265 & \underline{0.5271} & \underline{0.4873}\\
    \rowcolor{lightgray!30}
    {\bf \textsf{ReFiT(CF-Diff)}}& {\bf 0.1103}& {\bf 0.1866}& {\bf 0.0927}& {\bf 0.1185}& {\bf 0.0367}& {\bf 0.0618}& {\bf 0.0428}& {\bf 0.0516} &{\bf 0.2283} & {\bf 0.3303} & {\bf 0.5319} & {\bf 0.4921}\\
    \bottomrule[1pt]
  \end{tabular}
  \vspace{-1.0em}
\end{table*}

{\bf Datasets.} We conduct our experiments on three real-world datasets widely adopted for evaluating the performance of recommendations for standard CF, which include ML-1M\footnote{\url{https://grouplens.org/datasets/movielens/1m/}.} and two larger datasets, Yelp2018\footnote{\url{https://www.yelp.com/dataset/}.} and Anime\footnote{https://www.kaggle.com/datasets/CooperUnion/anime-recommendations-database.}. In addition, we use three datasets for sequential recommendation (YooChoose, KuaiRec, and Zhihu\footnote{\url{https://anonymous.4open.science/r/ReFiT_DreamRec-816D}.}), three datasets for social recommendation (Ciao, Yelp-sub, and Epinions\footnote{\url{https://anonymous.4open.science/r/ReFiT_RecDiff-38D1}.}), and three datasets for POI recommendation (Foursquare, TKY, and NYC\footnote{\url{https://anonymous.4open.science/r/ReFiT_Diff-POI-F37E}.}). Table \ref{table:datasets} summarizes the statistics of each dataset. 
% We refer to Tables \ref{table:seq_datasets}, \ref{table:sr_datasets}, and \ref{table:poi_datasets} (in {\bf Appendices \ref{app:sq}--\ref{app:poi}}) for the statistics of datasets used in the other three recommendation tasks.

\noindent\textbf{Competitors.} To comprehensively demonstrate the superiority of \textsf{ReFiT}, we present eleven benchmark recommendation methods for standard CF, including two RL-based CF methods (NICF \cite{zou2020neural}, FCPO \cite{ge2021towards}), three graph convolution-based CF methods (NGCF \cite{wang2019neural}, LightGCN \cite{he2020lightgcn}, SGL \cite{wu2021self}), three generative-based CF methods (CFGAN \cite{chae2018cfgan}, MultiDAE \cite{liang2018variational}, RecVAE \cite{shenbin2020recvae}), 
 and three diffusion-based CF methods (DiffRec \cite{wang2023diffusion}, CF-Diff \cite{hou2024collaborative}, and HDRM \cite{yuan2025hyperbolic})\footnote{We include HDRM \cite{yuan2025hyperbolic}, a recent diffusion-based CF method, as one of benchmark methods, but exclude developing \textsf{ReFiT} fine-tuned on HDRM. This is because, unlike DiffRec and CF-Diff, which directly generate user--item interaction sequences via diffusion processes, HDRM leverages diffusion models to produce user and item embeddings and then computes their similarity, which does not align with our design objectives.}. Additionally, we use DreamRec \cite{yang2023generate}, RecDiff \cite{li2024recdiff}, and Diff-POI \cite{qin2023diffusion} as benchmark methods for sequential recommendations, social recommendations, and POI recommendations, respectively. We refer to Appendix C.II for details of these competing methods.
 % We refer to {\bf Appendix \ref{app:competitors}} for a description of all the competitors.

\noindent\textbf{Performance metrics.} We follow the full-ranking protocol \cite{he2020lightgcn} by ranking all the non-interacted items for each user. In our study, we adopt two widely used ranking metrics, Recall@$N$ (R@$N$) and NDCG@$N$ (N@$N$), where $N \in \left\{ {10,20} \right\} $.

\noindent\textbf{Implementation details.} \textsf{ReFiT} only requires a pre-trained diffusion-based recommender system but {\it no extra/new} datasets. We use the pre-trained DiffRec \cite{wang2023diffusion} and CF-Diff \cite{hou2024collaborative} models when available; otherwise, we pre-train them using the original settings. The same data split as the pre-training stage is used in \textsf{ReFiT}. We use the best hyperparameters of competitors and \textsf{ReFiT} obtained by extensive hyperparameter tuning on the validation set. We use the Adam optimizer \cite{kingma2014adam}, where the batch size is selected in the range of $\left\{ {32,64,128} \right\} $. The hyperparameters used in the pre-trained diffusion model ({\it e.g.,} the noise schedule $\beta_t$ and the diffusion step $T$) are fixed and essentially follow the settings in \cite{wang2023diffusion, hou2024collaborative}, while the optimal value of $\alpha$ in (\ref{eq:our_reward}) is chosen in the range of: $\left\{ {0.3,0.5,0.7} \right\} $. We select the top-10 most similar users for each target user when computing the RACS reward function in (\ref{eq:our_reward}). All experiments are carried out with Intel (R) 12-Core (TM) E5-1650 v4 CPUs @ 3.60 GHz and GPU of NVIDIA GeForce RTX 3080. More implementation details are described in Appendix C.I.
% The code of \textsf{ReFiT} is available at https://anonymous.4open.science/r/ReFiT-847C. 

% More implementation details are described in {\bf Appendix \ref{app:settings}}.

\subsection{Results and Analyses}

In {\bf RQ1}, {\bf RQ3}, {\bf RQ5}, and {\bf RQ6}, we present experimental results on both pre-trained DiffRec \cite{wang2023diffusion} and CF-Diff \cite{hou2024collaborative}. In {\bf RQ2}, we show only the results of fine-tuning on the pre-trained CF-Diff due to space limitations, since those on the pre-trained DiffRec showed a similar tendency. We refer to Appendix C.III for more results of fine-tuning on the pre-trained DiffRec and CF-Diff.
% (see {\bf Appendix \ref{app:reward_diffrec}} for the details). 
% For more comprehensive experimental results, we refer to {\bf Appendix \ref{app:more}}. 

% For {\bf RQ4}, we adopt the pre-trained DreamRec \cite{yang2023generate}, RecDiff \cite{li2024recdiff}, and Diff-POI \cite{qin2023diffusion} for sequential recommendations, social recommendations, and POI recommendations, respectively. 

\subsubsection{Comparison with competitors ({\bf RQ1})}

We validate the superiority of \textsf{ReFiT} over ten recommendation competitors for standard CF through extensive experiments on the three benchmark datasets. We evaluate the performance of fine-tuned models from DiffRec and CF-Diff, namely \textsf{ReFiT(DiffRec)} and \textsf{ReFiT(CF-Diff)}, respectively. Table \ref{tab:comparison} summarizes the results, and we make the following insightful observations.
\begin{enumerate}
    \item \textsf{ReFiT} {\it consistently} and {\it significantly} outperforms all recommendation competitors regardless of the datasets and the performance metrics, as confirmed by paired \textit{t}-tests showing statistically significant improvements ($p < 0.05$). The recommendation accuracy achieved by \textsf{ReFiT} exhibits standard deviations of 0.0014, 0.0006, and 0.0035 on average for the ML-1M, Yelp2018, and Anime datasets, respectively, demonstrating stable training performance. This stability is attributed to the fact that fine-tuning on a pre-trained model provides a well-initialized action space for RL. The standard deviations of the recommendation accuracy for all competing models are reported in Appendix C.III.1.
    \item \textsf{ReFiT(DiffRec)} and \textsf{ReFiT(CF-Diff)} consistently exhibit better performance than those of their counterparts, {\it i.e.}, DiffRec and CF-Diff, respectively. The gains can be attributed to the RL-aided fine-tuning strategy guided by our sophisticatedly designed reward function $r_\text{RACS}$, which enables \textsf{ReFiT} to fully exploit user--item interactions for personalized recommendations alongside collaborative signals.
    \item The performance gap between \textsf{ReFiT(DiffRec)} and DiffRec is the largest when the ML-1M dataset is used; the maximum improvement rate of 2.65\% is achieved in terms of N@20. 
    % The performance gap between \textsf{ReFiT(CF-Diff)} ($X$) and CF-Diff ($Y$) is the largest when the ML-1M dataset is used; the maximum improvement rate of 2.41\% is achieved in terms of R@10. Here, the improvement rate (\%) is given by $\frac{{X - Y}}{Y} \times 100 $.
    \item Diffusion-based recommender systems, DiffRec and CF-Diff, are superior to other generative model-based recommendation methods, including CFGAN, MultiDAE, and RecVAE. This is because diffusion-based recommender systems more intricately recover user--item interactions for recommendations due to their complex training nature. 
    \item Diffusion-based recommender systems are superior to graph convolution-based CF methods ({\it i.e.}, NGCF and LightGCN). This is attributed to better alignment with the generation process of real-world user--item interactions.
    \item RL-based CF methods ({\it i.e.}, FCPO and NICF) exhibit poor performance on larger datasets, Yelp2018 and Anime, primarily due to their large action spaces that complicate the search for higher-quality recommendations using RL strategies. Our \textsf{ReFiT} framework avoids this issue because it involves fine-tuning a pre-trained model, which operates within a well-initialized search space.
\end{enumerate}

\subsubsection{Impact of our reward function ({\bf RQ2})}
\label{sec:RQ2}

To discover whether our reward function $r_\text{RACS}({\bf u}_0)$ is indeed influential, we present its two variants:
\begin{itemize}
    \item Reward na\"ively using the top-$K$ recommendation accuracy ({\it i.e.}, $\alpha=1$): This reward represents the true positive recommended items within top-$K$ recommendations of only a given user and is expressed as $ r_\text{RA}\left( {{\bf u}_0 } \right) = N_K$.
    \item Reward using the cosine similarity: This reward is based on the cosine similarity between the historical user--item interactions ${\bf u}_0$ and the predicted interactions $\hat{\bf u}$ and is expressed as $r_\text{cos}\left( {{\bf u}_0 } \right) = sim\left( {{\bf u}_0 ,{\hat {\bf u}} } \right)$.
\end{itemize}

\begin{table}[t]\centering
\small
  \captionsetup{skip=0pt}
  \caption{Performance comparison among three reward functions. Here, the best and second-best performers are highlighted by bold and underline, respectively.}
  \label{tab:ablation}
  \begin{tabular}{cc|cccc}
    \toprule[1pt]
    \cmidrule{1-6}
           {\bf Dataset}& {\bf Reward} & {\bf R@10}& {\bf R@20} & {\bf N@10} & {\bf N@20} \\
    \midrule[1pt]
    \multirow{3}*{\textbf{ML-1M}}
    & $r_\text{RACS}({\bf u}_0)$& {\bf 0.1103}& {\bf 0.1866}& {\bf 0.0927}& {\bf 0.1185}\\
    & $r_\text{RA}({\bf u}_0)$ & 0.1089& \underline{0.1851}& \underline{0.0919}& \underline{0.1180}\\
    & $r_\text{cos}\left( {{\bf u}_0 } \right)$ & \underline{0.1095}& 0.1846& 0.0917& 0.1174\\
    \midrule[1pt]
    \multirow{3}*{\textbf{Yelp2018}}
    & $r_\text{RACS}({\bf u}_0)$& {\bf 0.0367}& {\bf 0.0618}& {\bf 0.0428}& {\bf 0.0516}\\
    & $r_\text{RA}({\bf u}_0)$ & 0.0364 & 0.0609 & \underline{0.0426} & \underline{0.0513} \\
    & $r_\text{cos}\left( {{\bf u}_0 } \right)$ & \underline{0.0365}& \underline{0.0614}& 0.0423& 0.0512\\
    \midrule[1pt]
    \multirow{3}*{\textbf{Anime}} 
    &$r_\text{RACS}({\bf u}_0)$ & {\bf 0.2283}& {\bf 0.3303}& {\bf 0.5319}& {\bf 0.4921}\\
    & $r_\text{RA}({\bf u}_0)$ & \underline{0.2279} & 0.3293 & \underline{0.5311} & \underline{0.4912} \\
    & $r_\text{cos}\left( {{\bf u}_0 } \right)$ & 0.2276& \underline{0.3299}& 0.5304& 0.4897\\
    \bottomrule[1pt]
  \end{tabular}
  \label{tab:diff_reward}
\end{table}
%\vspace{-1.0em}

% \begin{figure}
%     \centering
%     \begin{subfigure}[b]{0.32\linewidth} % Adjust width as needed
%         \centering
%         \includegraphics[width=\linewidth]{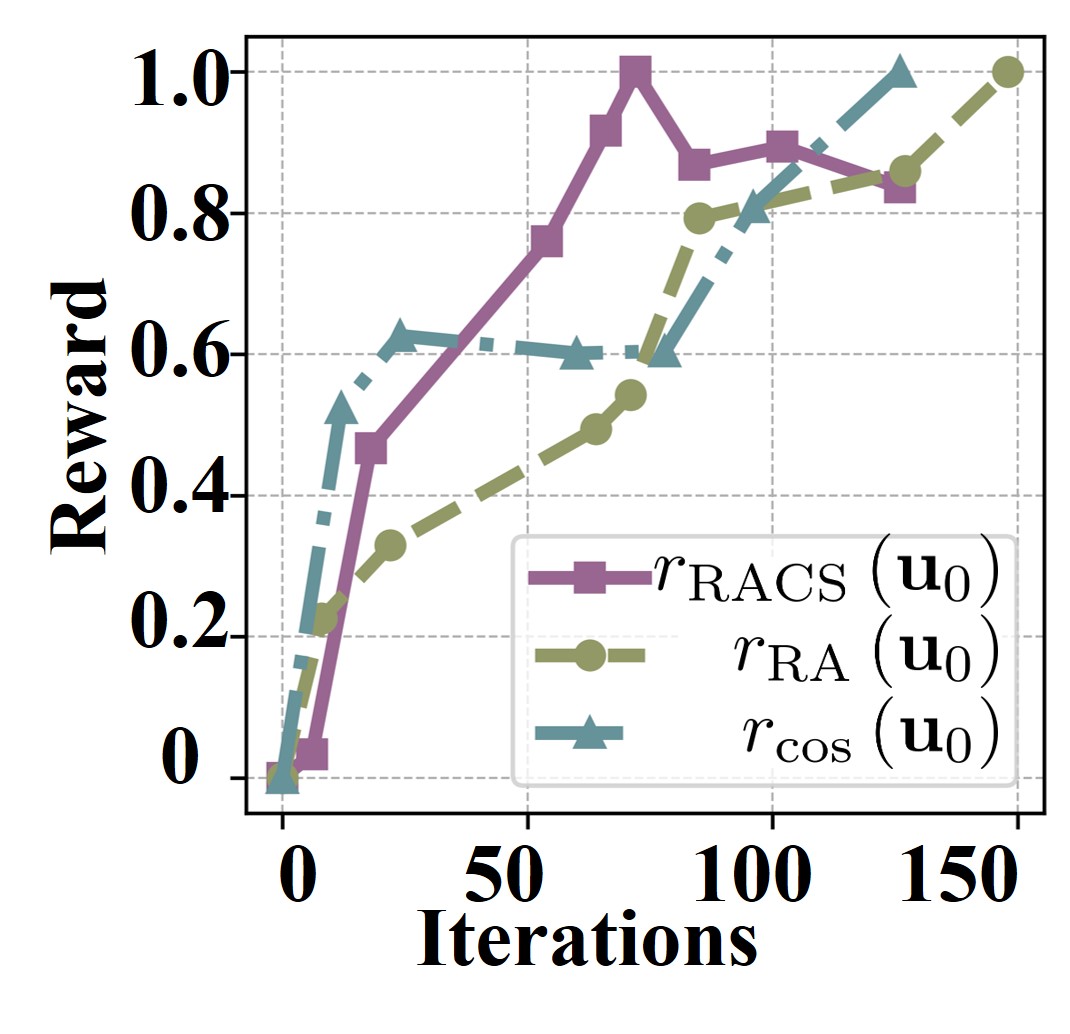} % Replace with your image file
%         \captionsetup{font={small,stretch=0.5}, skip=2pt}
%         \caption{{\bf ML-1M}}
%         \label{fig:sub1}
%     \end{subfigure}
%     %\hfill % Adds horizontal space between figures
%     %\hspace{0.1cm}
%     \begin{subfigure}[b]{0.32\linewidth} % Adjust width as needed
%         \centering
%         \includegraphics[width=\linewidth]{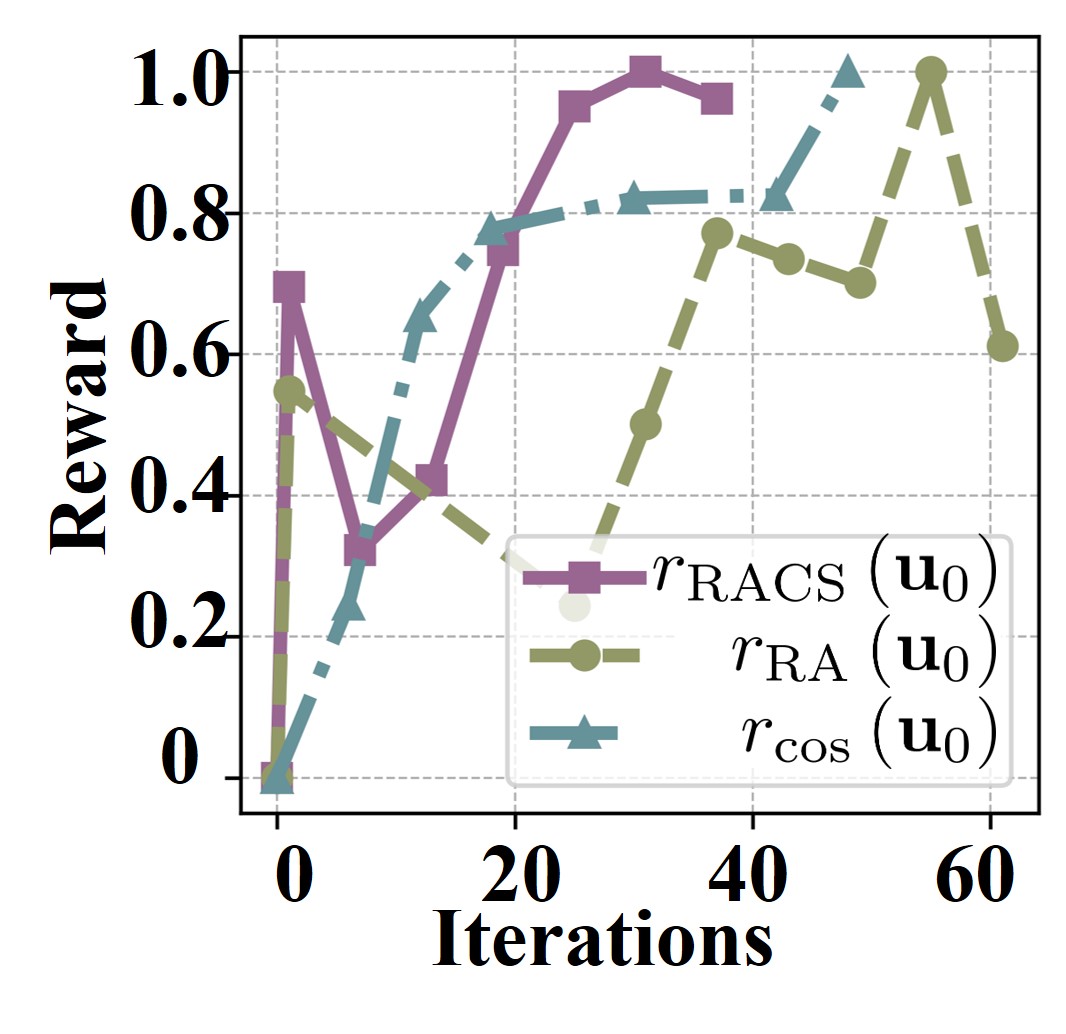} % Replace with your image file
%         \captionsetup{font={small,stretch=0.5}, skip=2pt}
%         \caption{{\bf Yelp2018}}
%         \label{fig:sub2}
%     \end{subfigure}
%         %\hspace{0.1cm}
%     \begin{subfigure}[b]{0.32\linewidth} % Adjust width as needed
%         \centering
%         \includegraphics[width=\linewidth]{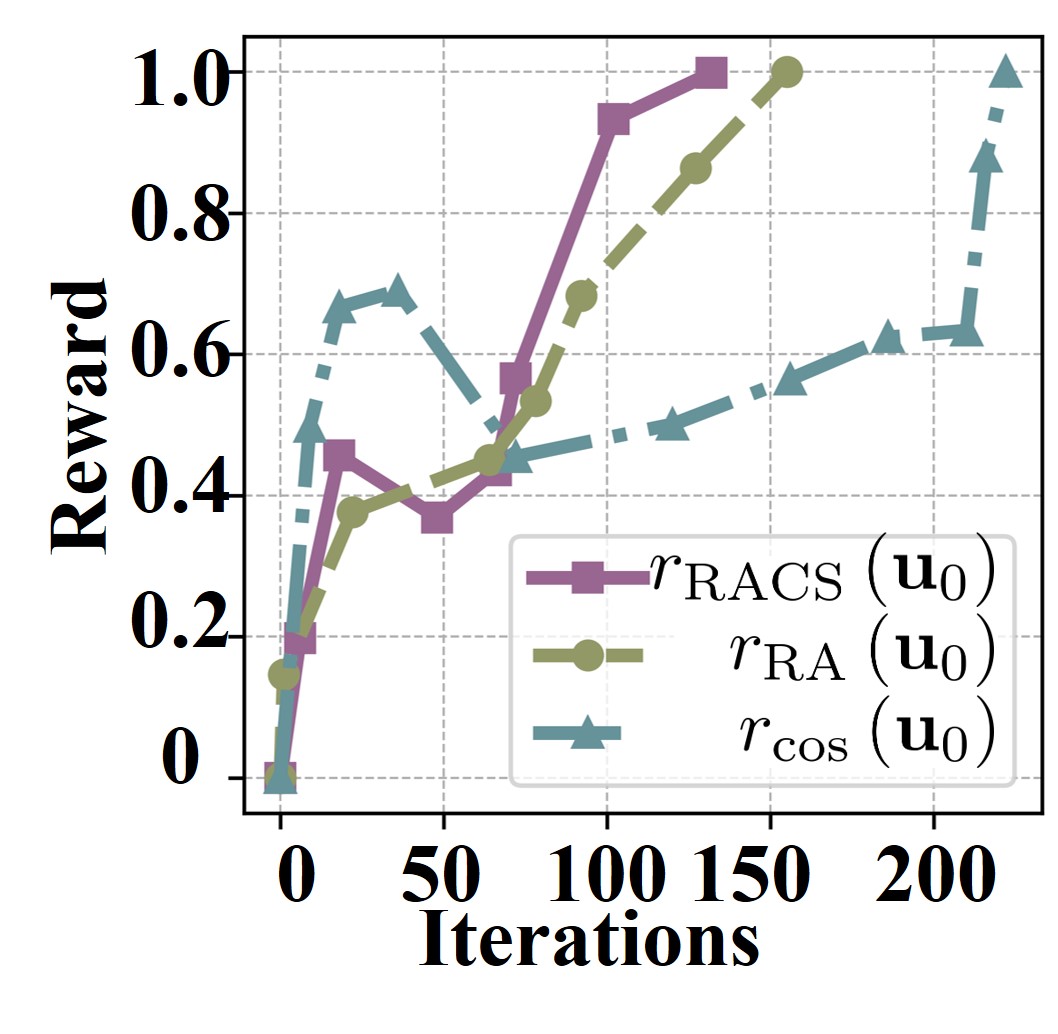} % Replace with your image file
%         \captionsetup{font={small,stretch=0.5}, skip=2pt}
%         \caption{{\bf Anime}}
%         \label{fig:sub2}
%     \end{subfigure}
%     %\captionsetup{skip=3pt}
%     \caption{The behavior of different reward functions over iterations during fine-tuning given the pre-trained CF-Diff.}
%     \label{fig:reward_cf_diff}
% %\vspace{-1.7em}
% \end{figure}

Table \ref{tab:diff_reward} summarizes the results on the three benchmark datasets with respect to all the metrics, and Fig. \ref{fig:reward_cf_diff} illustrates the behavior of three different reward functions during fine-tuning iterations. The reward values are derived from fine-tuning the pre-trained CF-Diff and have been normalized to a 0--1 range using min-max scaling to ensure comparability. Our observations are as follows:
\begin{enumerate}
    \item From Table \ref{tab:diff_reward}, the reward function $r_\text{RACS}({\bf u}_0)$ in (\ref{eq:our_reward}) always exhibits substantial gains over other variants, which demonstrates that incorporation of collaborative signals into the reward function is indeed beneficial in enhancing the recommendation accuracy.
    \item From Table \ref{tab:diff_reward}, the reward function $r_\text{RA}({\bf u}_0)$ is likely to outperform $r_\text{cos}({\bf u}_0)$ for most cases, which is attributed to the fact that $r_\text{RA}({\bf u}_0)$ is capable of inherently measuring the correctly recommended items while $r_\text{cos}({\bf u}_0)$ pays attention to the quality of reconstruction of user--item interactions using diffusion models.
    \item From Fig. \ref{fig:reward_cf_diff}, all rewards tend to increase consistently with the number of iterations. Notably, using $r_\text{RACS}({\bf u}_0)$ shows a tendency to converge faster than other reward functions, highlighting the importance of collaborative signals during fine-tuning for recommendations. 
\end{enumerate}

\begin{figure}[!t]


\centering
\begin{minipage}[b]{0.32\linewidth}
    \centering
    \includegraphics[width=\linewidth]{figures/reward_CF_ml.png.jpg}
    %\vspace{2pt}
    {\small\textbf{(a)} ML-1M}
\end{minipage}
\hfill
\begin{minipage}[b]{0.32\linewidth}
    \centering
    \includegraphics[width=\linewidth]{figures/reward_CF_yelp.png.jpg}
    %\vspace{2pt}
    {\small\textbf{(b)} Yelp2018}
\end{minipage}
\hfill
\begin{minipage}[b]{0.32\linewidth}
    \centering
    \includegraphics[width=\linewidth]{figures/reward_CF_anime.png.jpg}
    %\vspace{2pt}
    {\small\textbf{(c)} Anime}
\end{minipage}
\captionsetup{skip=0pt}
\caption{The behavior of different reward functions over iterations during fine-tuning given the pre-trained CF-Diff.}
\label{fig:reward_cf_diff}
\end{figure}
%\vspace{-1em}

% \textcolor{blue}{We refer to Appendices \ref{app:reward_diffrec} and \ref{app:reward_iter} for the results on DiffRec.}

\subsubsection{Effectiveness of our RL-aided fine-tuning ({\bf RQ3})}
\label{sec:com_RL_fine_tuning}

 \begin{table}[t]\centering
\setlength\tabcolsep{0.9pt}
\small
  \captionsetup{skip=0pt}
  \caption{Performance comparison on the ML-1M, Yelp2018, and Anime among different fine-tuning strategies as well as the pre-trained model. Here, the best performer is highlighted by bold.}
  \label{tab:ablation}
  \begin{tabular}{c|c|cc|cc}
    \toprule[1pt]
    & \multicolumn{1}{c|}{}&\multicolumn{2}{|c|}{{\bf DiffRec}}&\multicolumn{2}{c}{{\bf CF-Diff}}\\
    \cmidrule{1-6}
           {\bf Dataset}&{\bf Method} & {\bf R@20}& {\bf N@20} & {\bf R@20} & {\bf N@20} \\
    \midrule[1pt]
    %\multirow{3}*{\rotatebox{90}{{\bf ML-1M}}} 
    \multirow{5}*{\rotatebox{90}{{\bf ML-1M}}} &{\bf Pre-trained} & 0.1781 & 0.1131 & 0.1843 & 0.1176 \\ 
    \cmidrule{2-6}
    & {\bf ELBO-based fine-tuning} & 0.1776& 0.1127& 0.1821& 0.1169\\
    & {\bf RWR-based fine-tuning} & 0.1789& 0.1142 & 0.1851& 0.1179\\
    & {\bf PPO-based fine-tuning} & 0.1792& {\bf 0.1163}& 0.1861& 0.1184\\
    & {\bf REINFORCE-based (\textsf{ReFiT})} & {\bf 0.1799}& 0.1161& {\bf 0.1866}& {\bf 0.1185}\\
    \midrule[1pt]
    \multirow{5}*{\rotatebox{90}{{\bf Yelp2018}}} 
    & {\bf Pre-trained} & 0.0597 & 0.0499 & 0.0608 & 0.0509 \\ 
    %\midrule[1pt]
    \cmidrule{2-6}
    & {\bf ELBO-based fine-tuning} & 0.0563& 0.0452& 0.0579& 0.0494\\
    & {\bf RWR-based fine-tuning} & 0.0586& 0.0487 & 0.0598& 0.0501\\
    & {\bf PPO-based fine-tuning} & 0.0599& 0.0501& 0.0614& 0.0515\\
    & {\bf REINFORCE-based (\textsf{ReFiT})} & {\bf 0.0602}& {\bf 0.0504}& {\bf 0.0618}& {\bf 0.0516}\\
    \midrule[1pt]
    \multirow{5}*{\rotatebox{90}{{\bf Anime}}} 
    & {\bf Pre-trained} & 0.3249 & 0.4845 & 0.3265 & 0.4873 \\ 
    %\midrule[1pt]
    \cmidrule{2-6}
    & {\bf ELBO-based fine-tuning} & 0.3217& 0.4711& 0.3229& 0.4835\\
    & {\bf RWR-based fine-tuning} & 0.3254& 0.4853 & 0.3277& 0.4887\\
    & {\bf PPO-based fine-tuning} & {\bf 0.3268}& 0.4860& 0.3298& 0.4919\\
    & {\bf REINFORCE-based (\textsf{ReFiT})} & 0.3266& {\bf 0.4861}& {\bf 0.3303}& {\bf 0.4921}\\
    \bottomrule[1pt]
  \end{tabular}
  \label{tab:comp_FT}
  \vspace{-1em}
\end{table}

To investigate the effectiveness of our RL-aided fine-tuning (REINFORCE-based fine-tuning), we compare its performance against fine-tuning diffusion-based recommender systems using the ELBO-based loss (see Appendix A.I for the algorithmic details) as well as two variants of \textsf{ReFiT} that employ our RACS reward function, alongside reward-weighted regression (RWR)-based fine-tuning \cite{black2023training} and proximal policy optimization (PPO)-based fine-tuning \cite{black2023training}. As shown in Table \ref{tab:comp_FT}, our observations are as follows:

\begin{enumerate}
    % \item \textcolor{blue}{RL-aided fine-tuning substantially outperforms or on par compared to other fine-tuning methods on diffusion-based recommender systems.}
    \item \textsf{ReFiT} outperforms RWR-based fine-tuning and ELBO-based fine-tuning. Compared to RWR-based fine-tuning, \textsf{ReFiT}, as a policy gradient method using REINFORCE \cite{williams1992simple}, avoids inefficient updates caused by low-reward actions. Compared to ELBO-based fine-tuning, \textsf{ReFiT} mitigates the overfitting by leveraging RL to optimize the log-likelihood of user--item interactions directly through $\mathcal{J}_{RL} \left( {\theta  } \right)$ in (\ref{RL_objective}) instead, which supports our theoretical claim in Remark 2. %Section \ref{sec:theorem}.
    \item RWR-based fine-tuning reveals limited improvement as it prioritizes high-reward actions but still updates on low-reward ones, while wasting resources and potentially leading to suboptimal performance, as noted in \cite{black2023training}.
    \item ELBO-based fine-tuning tends to be even inferior to the pre-trained model, mainly due to the overfitting issue when the model keeps updating even after convergence, as illustrated in Fig.~\ref{fig:motivations}c.
    \item PPO-based fine-tuning performs comparably to \textsf{ReFiT} with REINFORCE, as the clipping mechanism in PPO has minimal impact when starting from a pre-trained model providing a well-initialized action space, as also shown in \cite{black2023training}.
\end{enumerate}

% The results on Yelp and Anime are omitted due to space limitations since they exhibited a similar tendency.
% The results on Yelp and Anime are shown in {\bf Appendix \ref{app:rq2}} due to space limitations since they exhibited a similar tendency.

\begin{table}[t]\centering
\setlength\tabcolsep{3.0pt}
\small
\captionsetup{skip=0pt}
\caption{Performance comparison between a base model with no fine-tuning and \textsf{ReFiT} across sequential recommendation, social recommendation, and POI recommendation tasks. Here, the best performer is highlighted by bold.}
\label{tab:refit_all}
\begin{tabular}{ccccccc}
\toprule[1pt]
\textbf{Task} & \textbf{Dataset} & \textbf{Method} & \textbf{R@10} & \textbf{R@20} & \textbf{N@10} & \textbf{N@20} \\
\midrule[1pt]
\multirow{6}{*}{\rotatebox{90}{\textbf{Sequential Rec.}}}
    & \multirow{2}{*}{{\bf YooChoose}}  & {\bf DreamRec} & 0.0292 & 0.0493 & 0.0168 & 0.0223 \\
    &            & {\bf \textsf{ReFiT}} & \textbf{0.0397} & \textbf{0.0648} & \textbf{0.0225} & \textbf{0.0304} \\
    \cmidrule{2-7}
    & \multirow{2}{*}{{\bf KuaiRec}}  & {\bf DreamRec}       & 0.0196 & 0.0243 & {\bf 0.0167} & 0.0174 \\
    &            & {\bf \textsf{ReFiT}} & \textbf{0.0261} & \textbf{0.0311} & 0.0154 & \textbf{0.0189} \\
    \cmidrule{2-7}
    & \multirow{2}*{{\bf Zhihu}} 
    & {\bf DreamRec}& 0.0097& 0.0221& 0.0041& 0.0061\\
    & & {\bf \textsf{ReFiT}} & {\bf 0.0108} & {\bf 0.0237}& {\bf 0.0054}& {\bf 0.0067}\\
\midrule
\multirow{6}{*}{\rotatebox{90}{\textbf{Social Rec.}}}
    & \multirow{2}{*}{{\bf Ciao}}       & {\bf RecDiff}        & 0.0421 & 0.0700 & 0.0315 & 0.0410 \\
    &            & {\bf \textsf{ReFiT}} & \textbf{0.0439} & \textbf{0.0717} & \textbf{0.0322} & \textbf{0.0416} \\
    \cmidrule{2-7}
    & \multirow{2}*{{\bf Yelp-sub}} 
    & {\bf RecDiff}& 0.0373& 0.0605& 0.0244 & 0.0314\\
    & &{\bf \textsf{ReFiT}} & {\bf 0.0376}& {\bf 0.0607}& {\bf 0.0246}& {\bf 0.0316}\\
    \cmidrule{2-7}
    & \multirow{2}{*}{{\bf Epinions}}  & {\bf RecDiff}        & 0.0278& 0.0448& 0.0286& 0.0335 \\
    &            & {\bf \textsf{ReFiT}} & {\bf 0.0281} & {\bf 0.0451}& {\bf 0.0288}& {\bf 0.0337} \\
\midrule
\multirow{6}{*}{\rotatebox{90}{\textbf{POI Rec.}}}
    & \multirow{2}{*}{{\bf Foursquare}} & {\bf Diff-POI}       & 0.4317 & 0.4783 & 0.3645 & 0.3762 \\
    &            & {\bf \textsf{ReFiT}} & \textbf{0.4364} & \textbf{0.4801} & \textbf{0.3679} & \textbf{0.3791} \\
    \cmidrule{2-7}
    & \multirow{2}{*}{{\bf TKY}}  & {\bf Diff-POI}       & 0.6668& 0.6958& 0.6119 & 0.6192 \\
    &            & {\bf \textsf{ReFiT}} & {\bf 0.6691}& {\bf 0.6972}& {\bf 0.6148}& {\bf 0.6225} \\
    \cmidrule{2-7}
    & \multirow{2}*{{\bf NYC}} 
    & {\bf Diff-POI}& 0.6861& 0.7014& 0.6606& 0.6645\\
    & &{\bf \textsf{ReFiT}} & {\bf 0.6894} & {\bf 0.7043}& {\bf 0.6637}& {\bf 0.6671}\\
\bottomrule[1pt]
\end{tabular}
\vspace{-1em}
\end{table}

\subsubsection{Other downstream recommendation tasks ({\bf RQ4})}
\label{exp:rq5}

Our \textsf{ReFiT} framework is broadly applicable to diverse datasets and recommendation tasks, as long as a diffusion model can be employed to generate user--item interactions. To demonstrate such generalization, we conduct extensive experiments on sequential recommendations (DreamRec \cite{yang2023generate}), social recommendations (RecDiff \cite{li2024recdiff}), and POI recommendations (Diff-POI \cite{qin2023diffusion}) in which diffusion models were developed in the literature. For each task, we evaluate performance on three real-world datasets, which are widely used for those tasks. Table \ref{tab:refit_all} summarizes the performance comparison between the base model (no fine-tuning) and \textsf{ReFiT} on three recommendation tasks. Our observations are as follows:

 % The problem definitions and experimental settings for these three recommendation tasks are described in Section III-D of the supplementary materials.

\begin{enumerate}
    % \item \textsf{ReFiT} {\it consistently} and {\it significantly} outperforms DreamRec regardless of the datasets and the performance metrics. This showcases robust and reliable improvements of \textsf{ReFiT} in other recommendation scenarios.
    \item \textsf{ReFiT} {\it consistently} and {\it significantly} outperforms the corresponding benchmark method regardless of the datasets and the performance metrics (except for the case of KuaiRec in N@10). This showcases the general superiority and robustness of \textsf{ReFiT} in various recommendation tasks.
    \item \textsf{ReFiT} achieves the {\bf maximum improvement rate of 36.3\%} in terms of N@20 on the YooChoose dataset for sequential recommendations. Compared to the case of recommendations for standard CF, this significant gain comes from the fact that the datasets are relatively sparse (see Seq. in Table~\ref{table:datasets}). In such sparse datasets, correctly recommending even one additional item can lead to substantial increases in Recall and NDCG, as both metrics are sensitive when the number of relevant items (or ground truth) is limited. 
    % (see Table \ref{table:seq_datasets} in {\bf Appendix \ref{app:sq}} for the statistics). 
    % In such sparse datasets, correctly recommending even one additional item can lead to substantial increases in Recall and NDCG, as both metrics are sensitive when the number of relevant items (or ground truth) is limited, so any improvement in recommending these items can have a notable impact.}
\end{enumerate}

Additionally, Figs. 6--8 illustrate the Recall@20 and NDCG@20 performance across fine-tuning iterations using \textsf{ReFiT} for sequential, social, and POI recommendation tasks, respectively. The results demonstrate that the recommendation accuracy achieved by \textsf{ReFiT} consistently increases with the number of iterations.

\subsubsection{Sensitivity analysis ({\bf RQ5})}

\label{app:sensitivity}
Our method involves a single tunable parameter, $\alpha$ in (\ref{eq:our_reward}), which balances the collaborative signal-aware reward. We analyze its impact on the recommendation accuracy for all the datasets. From Fig. \ref{fig:alpha}, the maximum NDCG@20 is achieved at $\alpha = 0.5$ on ML-1M and Anime, and at $\alpha = 0.7$ on Yelp2018. It reveals that high values of $\alpha$ degrades the performance since collaborative signals along with multi-hop neighbors are passively utilized and low values of $\alpha$ overuses the effect of collaborative signals. Note that $\alpha = 1.0$ corresponds to the case of using the reward function $r_{RA}({\bf u}_0)$ in Section \ref{sec:RQ2}. Hence, it is crucial to suitably determine the value of $\alpha$ in guaranteeing the optimal performance. 

\begin{figure}[!t]
\centering
\begin{minipage}[b]{0.32\linewidth}
    \centering
    \includegraphics[width=\linewidth]{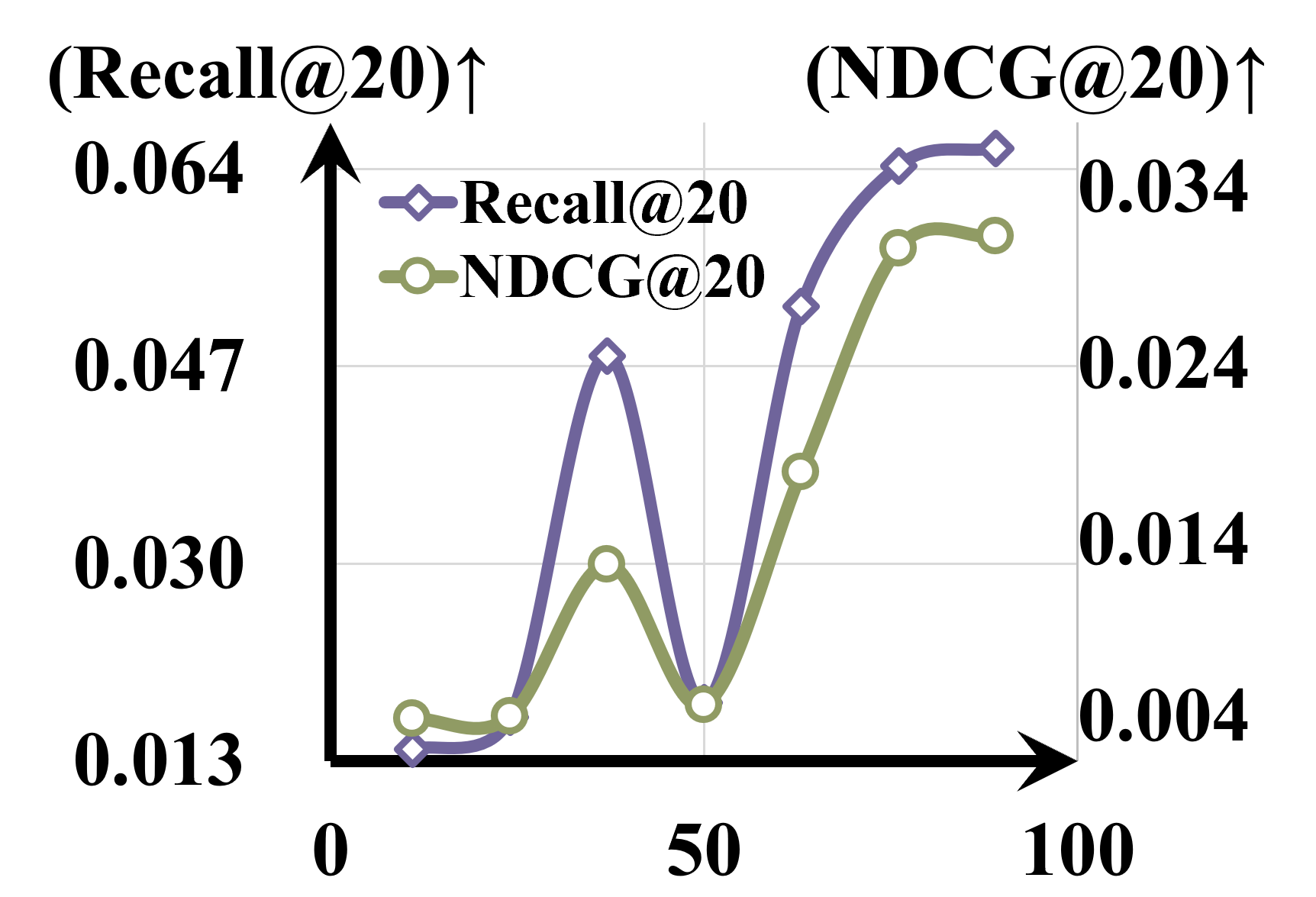}
    %\vspace{-2pt}
    {\small\textbf{(a)} YooChoose}
\end{minipage}
\hfill
\begin{minipage}[b]{0.32\linewidth}
    \centering
    \includegraphics[width=\linewidth]{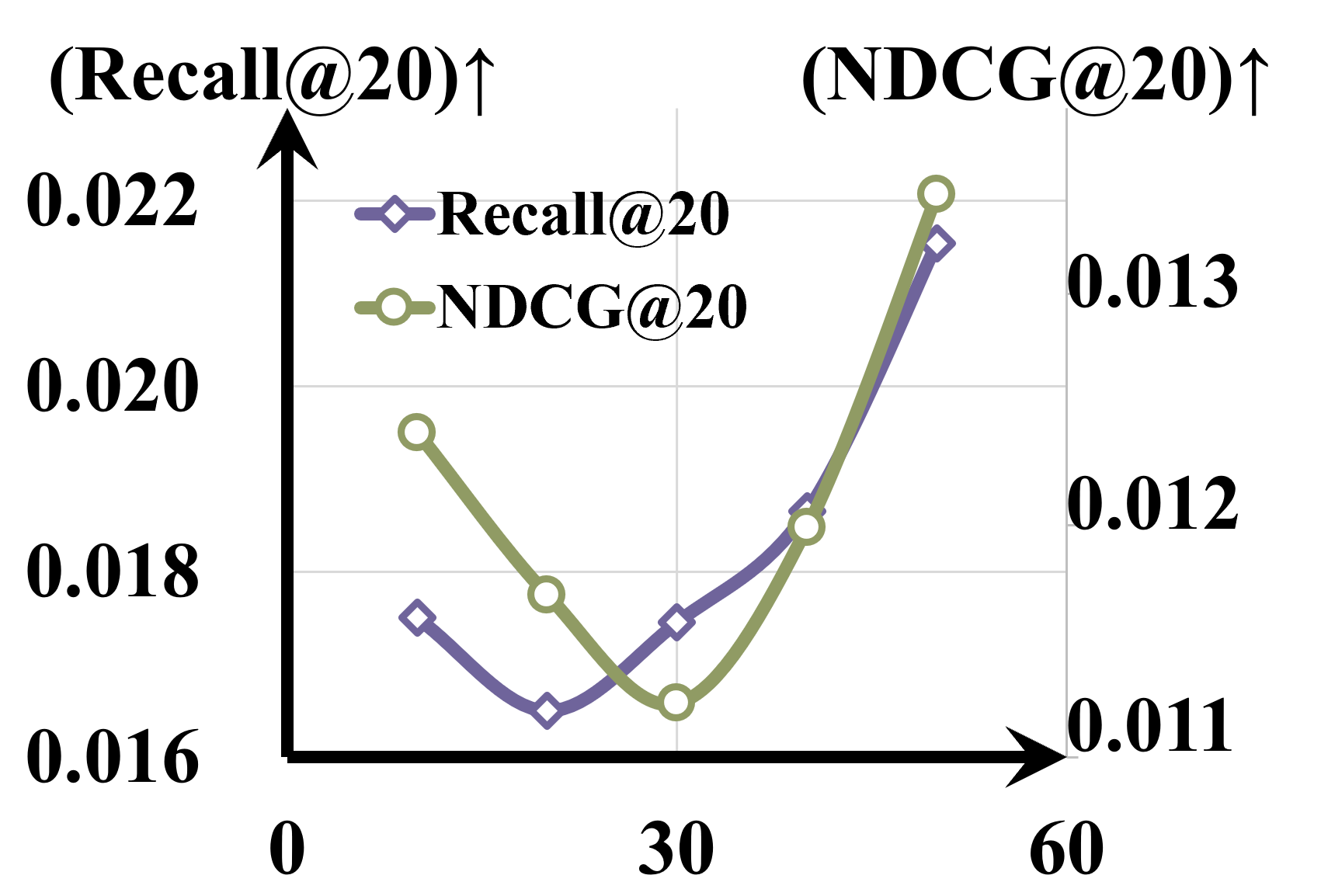}
    %\vspace{-2pt}
    {\small\textbf{(b)} KuaiRec}
\end{minipage}
\hfill
\begin{minipage}[b]{0.32\linewidth}
    \centering
    \includegraphics[width=\linewidth]{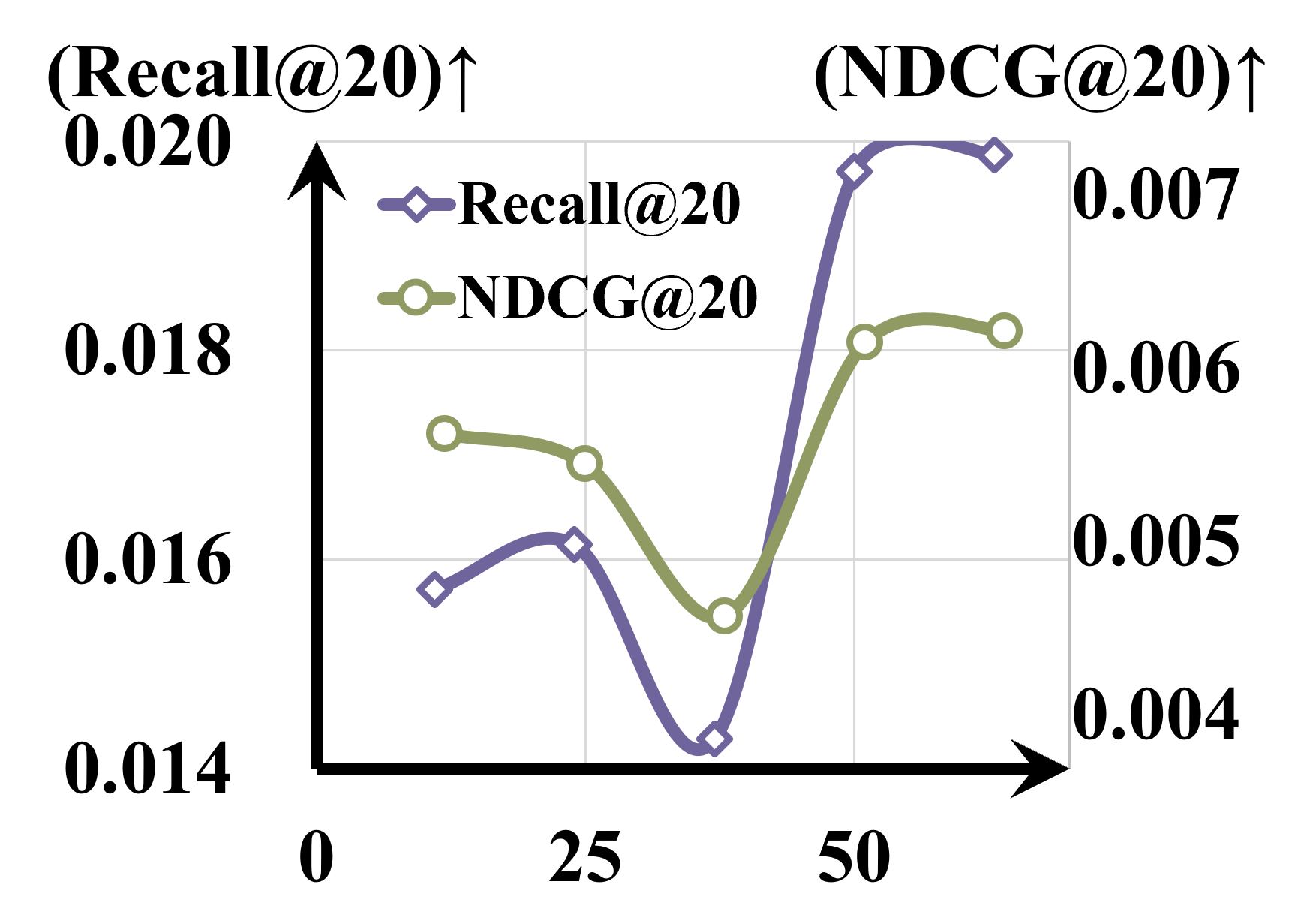}
    %\vspace{-2pt}
    {\small\textbf{(c)} Zhihu}
\end{minipage}
\captionsetup{skip=0pt}
\caption{Recommendation accuracy in terms of Recall@20 and NDCG@20 across fine-tuning iterations for \textsf{ReFiT(DreamRec)}.}
\label{fig:ndcg_vs_iters_sq}
\end{figure}
%\vspace{-1em}

\begin{figure}[!t]
\centering
\begin{minipage}[b]{0.32\linewidth}
    \centering
    \includegraphics[width=\linewidth]{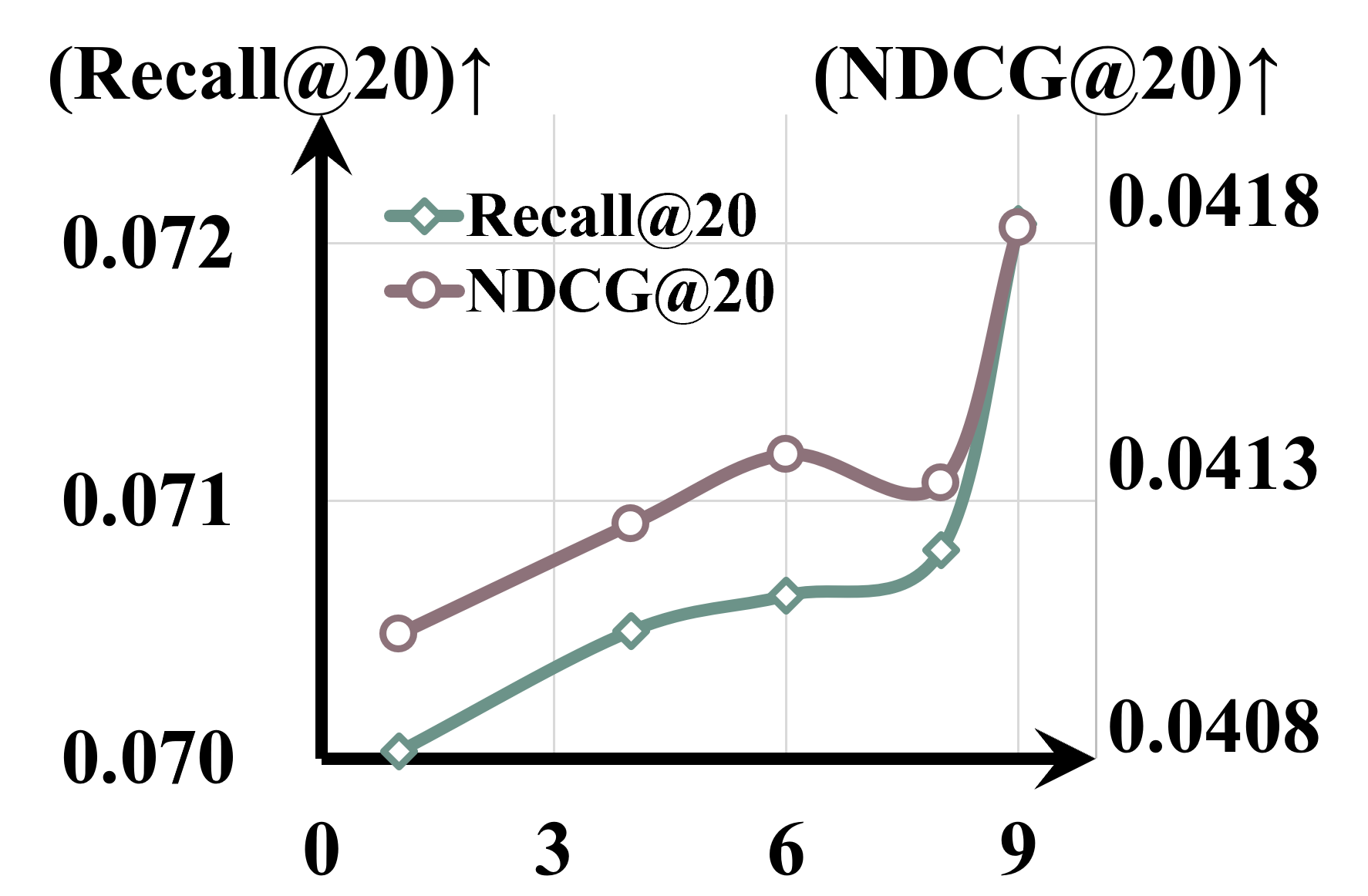}
    %\vspace{-2pt}
    {\small\textbf{(a)} Ciao}
\end{minipage}
\hfill
\begin{minipage}[b]{0.32\linewidth}
    \centering
    \includegraphics[width=\linewidth]{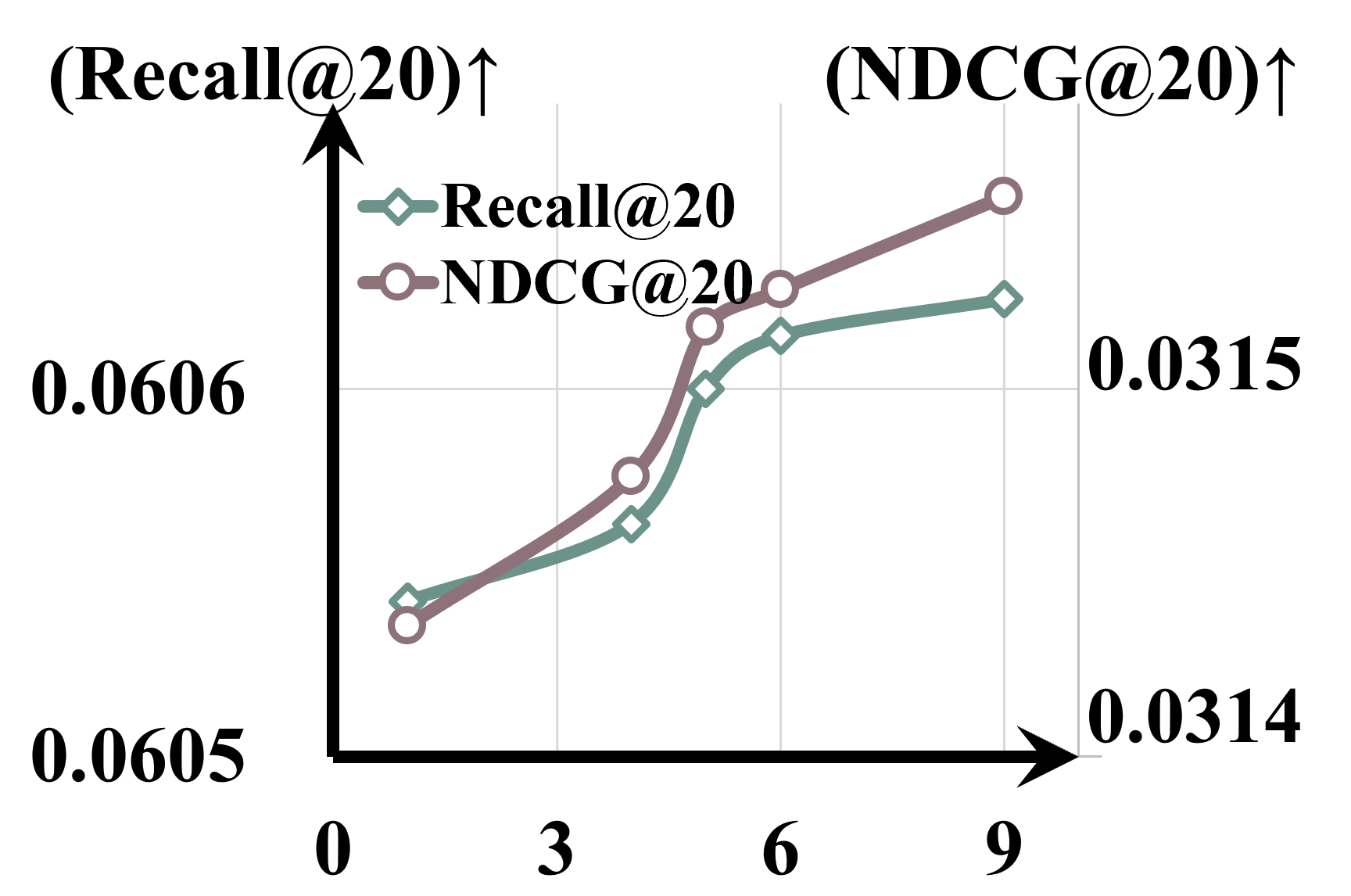}
    %\vspace{-2pt}
    {\small\textbf{(b)} Yelp}
\end{minipage}
\hfill
\begin{minipage}[b]{0.32\linewidth}
    \centering
    \includegraphics[width=\linewidth]{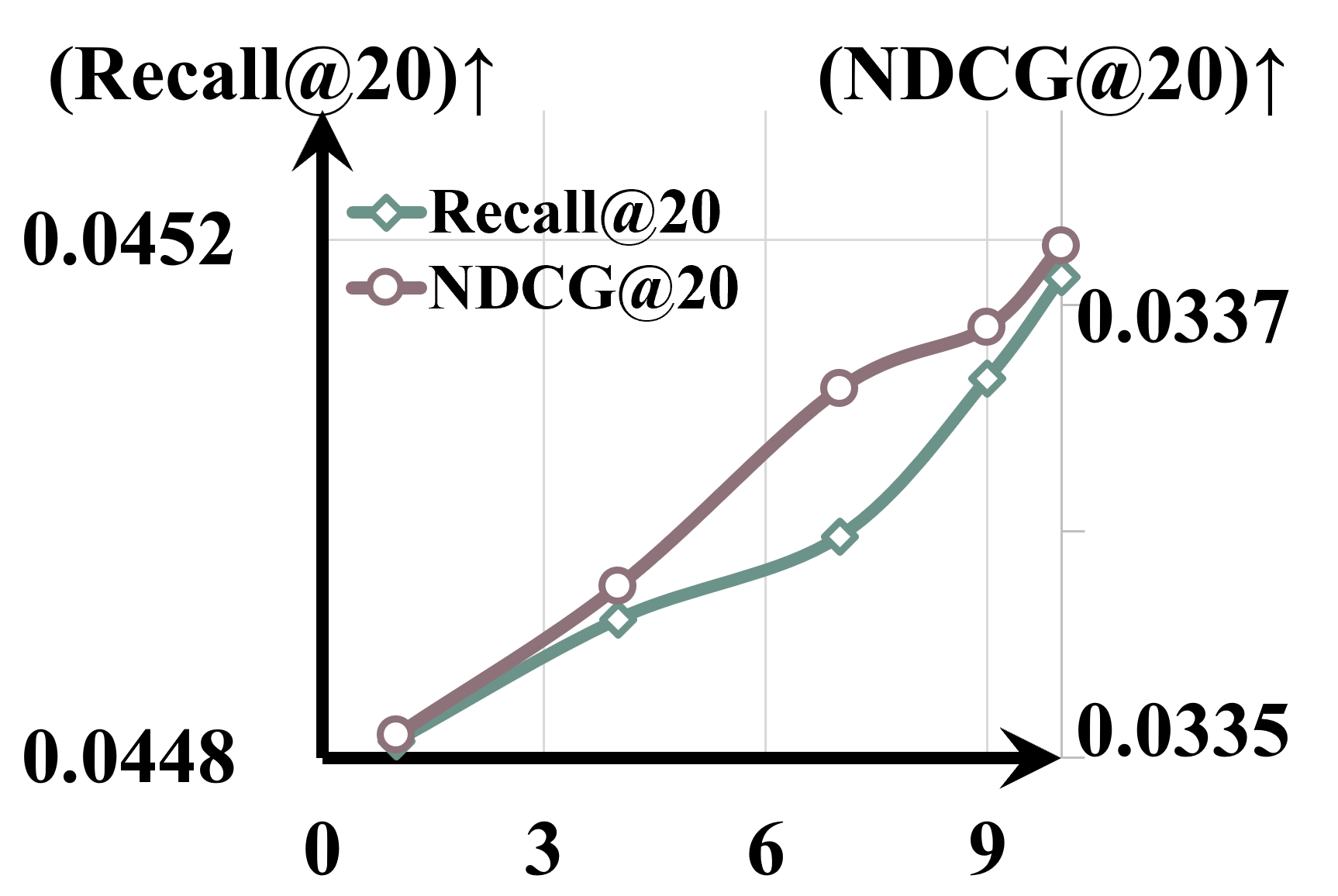}
    %\vspace{-2pt}
    {\small\textbf{(c)} Epinions}
\end{minipage}
\captionsetup{skip=0pt}
\caption{Recommendation accuracy in terms of Recall@20 and NDCG@20 across fine-tuning iterations for \textsf{ReFiT(RecDiff)}.}
\label{fig:ndcg_vs_iters_social}
\end{figure}
%\vspace{-1em}

\begin{figure}[!t]
\centering
\begin{minipage}[b]{0.32\linewidth}
    \centering
    \includegraphics[width=\linewidth]{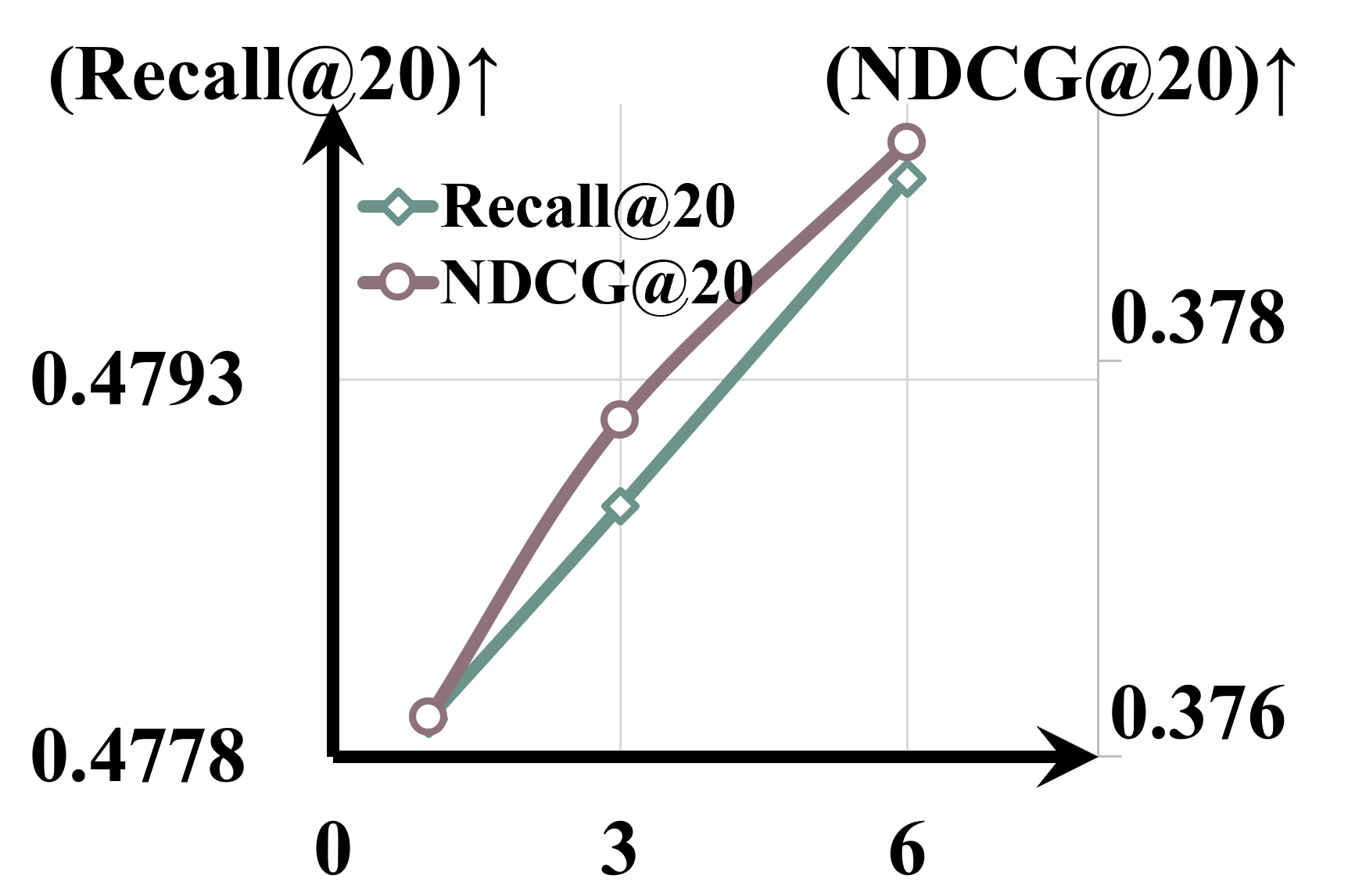}
    \vspace{-2pt}
    {\small\textbf{(a)} Foursquare}
\end{minipage}
\hfill
\begin{minipage}[b]{0.32\linewidth}
    \centering
    \includegraphics[width=\linewidth]{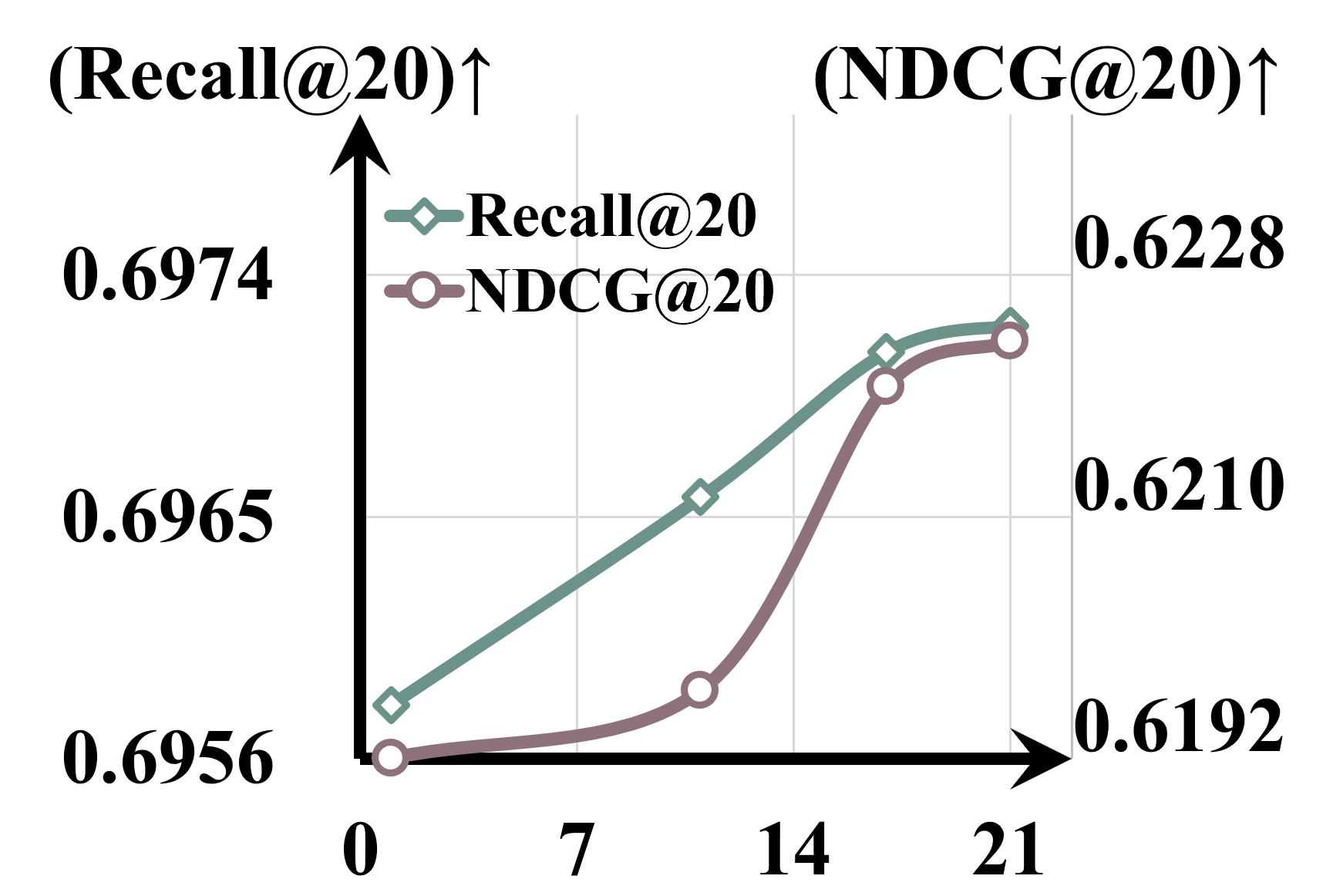}
    \vspace{-2pt}
    {\small\textbf{(b)} TKY}
\end{minipage}
\hfill
\begin{minipage}[b]{0.32\linewidth}
    \centering
    \includegraphics[width=\linewidth]{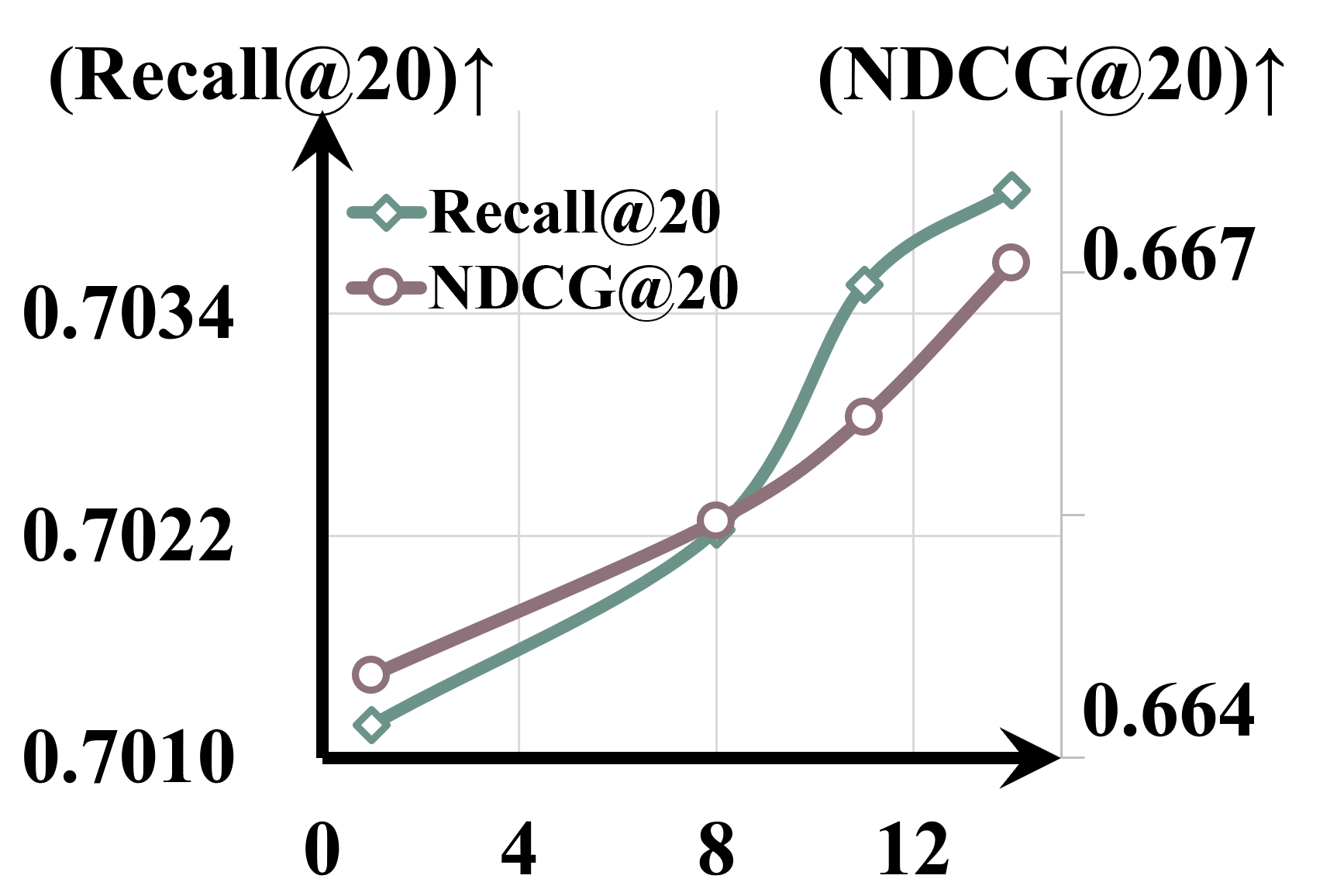}
    \vspace{-2pt}
    {\small\textbf{(c)} NYC}
\end{minipage}
\captionsetup{skip=0pt}
\caption{Recommendation accuracy in terms of Recall@20 and NDCG@20 across fine-tuning iterations for \textsf{ReFiT(Diff-POI)}.}
\label{fig:ndcg_vs_iters_poi}
\end{figure}
%\vspace{-1em}

\begin{figure}[!t]
\centering
\begin{minipage}[b]{0.32\linewidth}
    \centering
    \includegraphics[width=\linewidth]{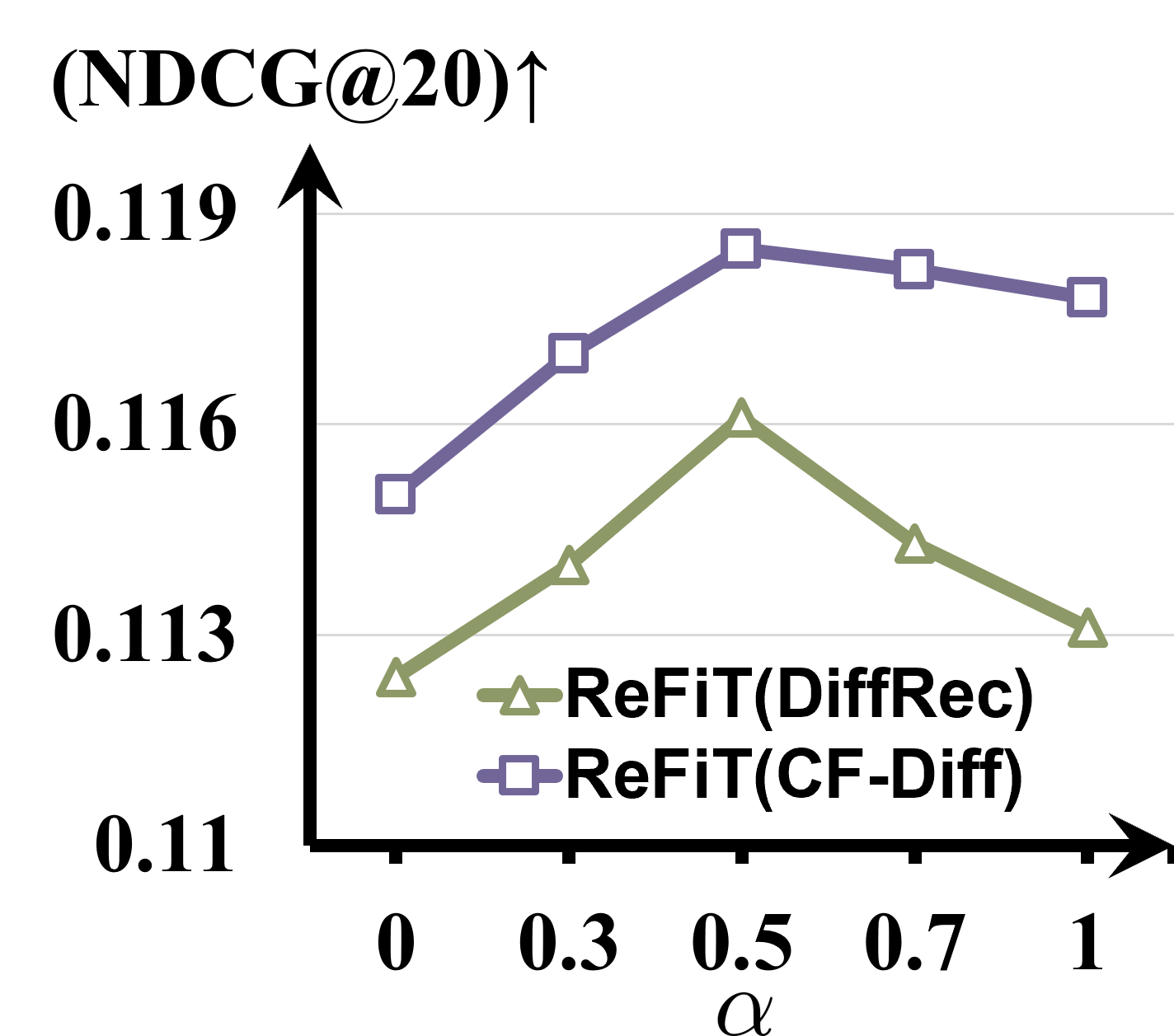}
    \vspace{-2pt}
    {\small\textbf{(a)} ML-1M}
\end{minipage}
\hfill
\begin{minipage}[b]{0.32\linewidth}
    \centering
    \includegraphics[width=\linewidth]{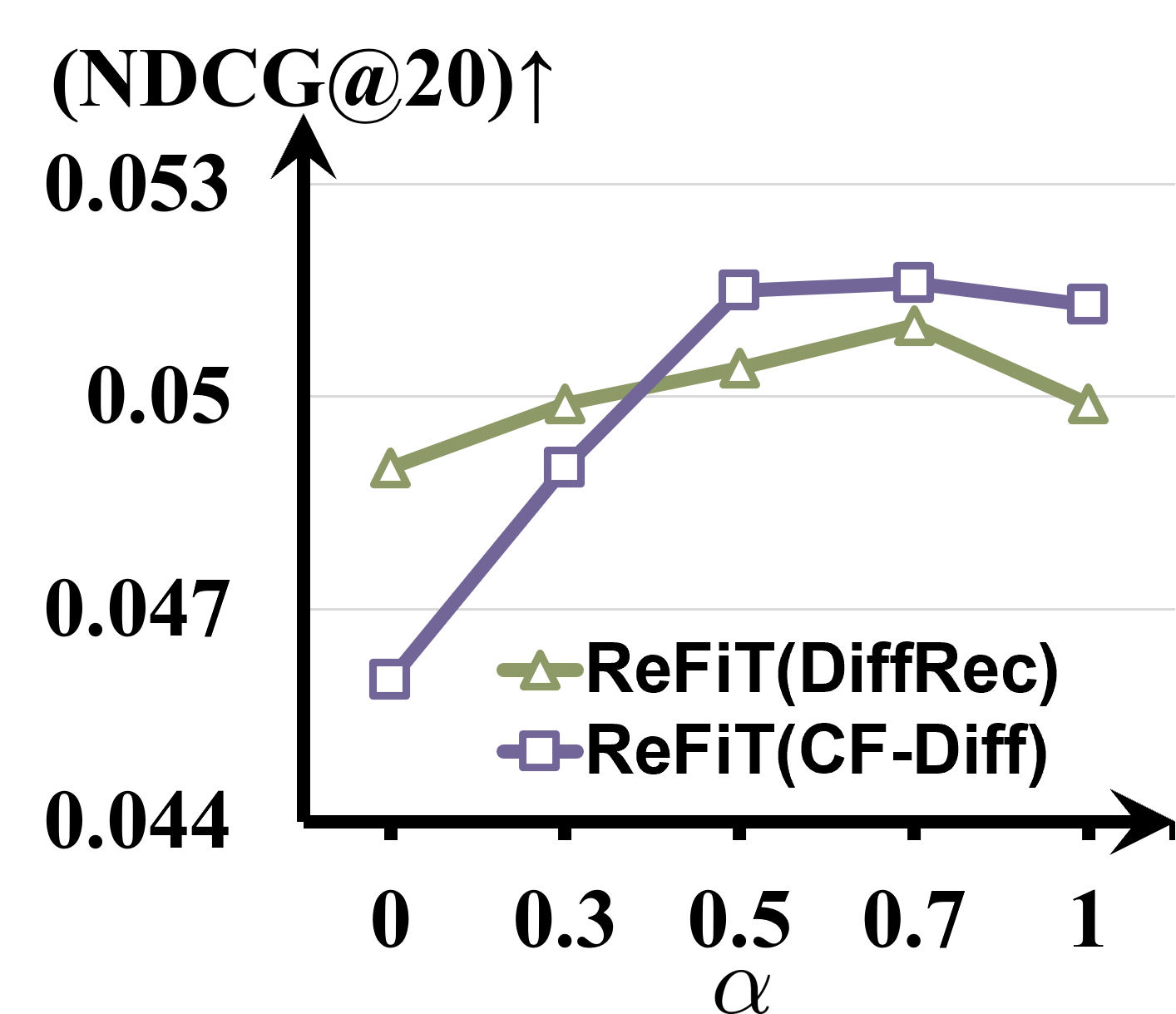}
    \vspace{-2pt}
    {\small\textbf{(b)} Yelp2018}
\end{minipage}
\hfill
\begin{minipage}[b]{0.32\linewidth}
    \centering
    \includegraphics[width=\linewidth]{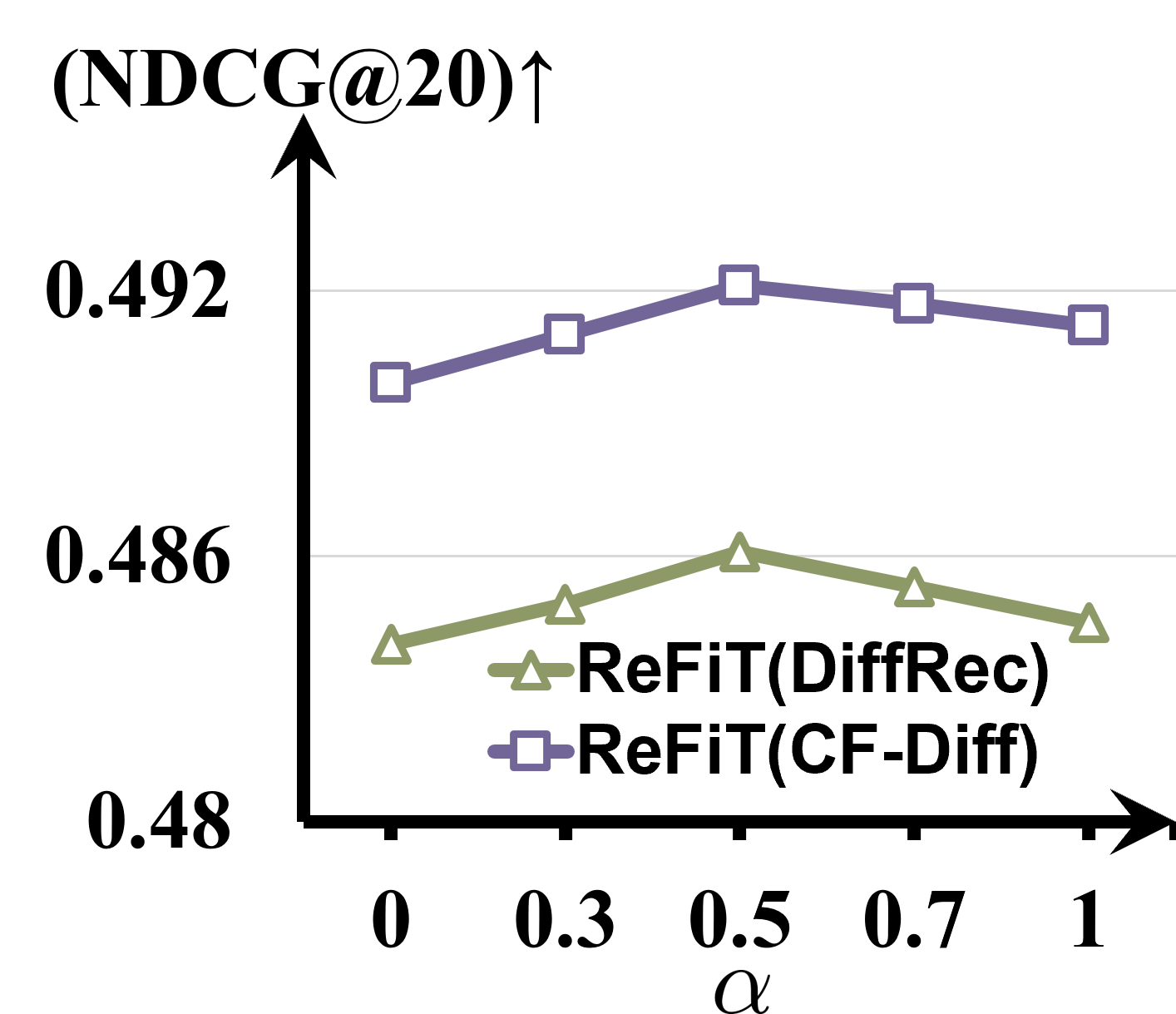}
    \vspace{-2pt}
    {\small\textbf{(c)} Anime}
\end{minipage}
%\vspace{-3pt}
\captionsetup{skip=0pt}
\caption{Effect of hyperparameter $\alpha$ on NDCG@20 for \textsf{ReFiT(DiffRec)} and \textsf{ReFiT(CF-Diff)}.}
\label{fig:alpha}
\end{figure}
%\vspace{-1em}

% \begin{figure}
%     \centering
    
%     \begin{subfigure}[b]{0.32\linewidth} % Adjust width as needed
%         \centering
%         \includegraphics[width=\linewidth]{figures/alpha_ml.png} % Replace with your image file
%         \captionsetup{font={small,stretch=0.5}, skip=0pt}
%         \caption{{\bf ML-1M}}
%         \label{fig:sub1}
%     \end{subfigure}
%     %\hfill % Adds horizontal space between figures
%     %\hspace{0.1cm}
%     \begin{subfigure}[b]{0.32\linewidth} % Adjust width as needed
%         \centering
%         \includegraphics[width=\linewidth]{figures/alpha_yelp.png} % Replace with your image file
%         \captionsetup{font={small,stretch=0.5}, skip=0pt}
%         \caption{{\bf Yelp2018}}
%         \label{fig:sub2}
%     \end{subfigure}
%         %\hspace{0.1cm}
%     \begin{subfigure}[b]{0.32\linewidth} % Adjust width as needed
%         \centering
%         \includegraphics[width=\linewidth]{figures/alpha_yelp_anime.png} % Replace with your image file
%         \captionsetup{font={small,stretch=0.5}, skip=0pt}
%         \caption{{\bf Anime}}
%         \label{fig:sub2}
%     \end{subfigure}
%     %\captionsetup{skip=-1pt}
%     \caption{The effect of hyperparameter $\alpha$ on NDCG@20 on \textsf{ReFiT(DiffRec)} and \textsf{ReFiT(CF-Diff)}.}
%     \label{fig:alpha}
% %\vspace{-2em}
% \end{figure}

\subsubsection{Space and time efficiency ({\bf RQ6})}
First, we compare the memory consumption of \textsf{ReFiT} with its corresponding diffusion-based recommender systems for pre-training. As shown in Table~\ref{tab:mem_com}, the VRAM usage primarily reflects the size of the model parameters, while RAM usage accounts for additional components such as data loading, optimizer states, and other implementation-specific buffers. Notably, \textsf{ReFiT} incurs only minimal overhead, with RAM and VRAM usage increasing by merely a few percent. From the fact that \textsf{ReFiT} fine-tunes a pre-trained model ({\it e.g.,} DiffRec or CF-Diff), we empirically confirm that its memory footprint remains almost identical to the pre-trained model. The only minor overhead arises from storing reward signals and sampled diffusion trajectories, which is negligible in practice.

\begin{table}[t]\centering
\setlength\tabcolsep{8pt}
\small
  \captionsetup{skip=0pt}
  \caption{Comparison of memory usage and relative increase ($\Delta$\%) between \textsf{ReFiT} and its corresponding pre-trained diffusion-based recommenders on the ML-1M dataset.}
  \resizebox{0.48\textwidth}{!}{
  \begin{tabular}{c|cc|c|cc|c}
    \toprule[1pt]
    \cmidrule{1-7}
    {\bf Resource} & {\bf DiffRec} & {\bf \textsf{ReFiT}} & {\bf $\Delta$\%} & {\bf CF-Diff} & {\bf \textsf{ReFiT}} & {\bf $\Delta$\%} \\
    \midrule[1pt]
    {\bf RAM} & 1489.39 MB & 1571.62 MB & +5.52\% & 1631.10 MB & 1754.07 MB & +7.54\% \\
    {\bf VRAM} & 29.13 MB & 30.84 MB & +5.87\% & 94.49 MB & 95.64 MB & +1.22\% \\
    \bottomrule[1pt]
  \end{tabular}
  }
  \label{tab:mem_com}
  %\vspace{-1.5em}
\end{table}

Second, we evaluate computational efficiency by comparing execution time (in seconds per iteration) and recommendation performance (R@10) across DiffRec, CF-Diff, and \textsf{ReFiT} on three datasets. As shown in Table~\ref{tab:times_app}, \textsf{ReFiT} improves the performance of diffusion-based recommenders with only a negligible time overhead---achieving up to 40× faster execution. For instance, on the ML-1M dataset, \textsf{ReFiT(CF-Diff)} improves R@10 by 2.41\% while reducing execution time by 95.18\%. This efficiency comes from our RL-aided fine-tuning strategy, which focuses on higher-reward samples in each iteration rather than training on the entire user set as in DiffRec and CF-Diff.

% Table \ref{tab:times_app} summarizes the comparison of execution time (in seconds) per iteration between pre-training diffusion-based recommender systems from scratch and fine-tuning with \textsf{ReFiT} on the ML-1M dataset. As shown in the table, our \textsf{ReFiT} framework significantly reduces the execution time by up to 20.72 times. This dramatic efficiency is achieved by sampling a only a subset of users for fine-tuning in each iteration, rather than training on the entire set of users.

\begin{table}[t]\centering
\setlength\tabcolsep{1.1pt}
\small
  \captionsetup{skip=0pt}
  \caption{Comparison of performance (R@10) and per-iteration execution time (in seconds per iteration) between pre-trained diffusion-based recommenders (DiffRec and CF-Diff) and \textsf{ReFiT} on the ML-1M, Yelp2018, and Anime datasets.}
  \label{tab:ablation}
  \begin{tabular}{c|cc|cc|cc}
    \toprule[1pt]
    \multicolumn{1}{c|}{}&\multicolumn{2}{|c|}{{\bf ML-1M}}&\multicolumn{2}{c|}{{\bf Yelp2018}}&\multicolumn{2}{c}{{\bf Anime}}\\
    \cmidrule{1-7}
            {\bf Method} & {\bf R@10} & {\bf Time$\downarrow$} & {\bf R@10} & {\bf Time$\downarrow$}& {\bf R@10} & {\bf Time$\downarrow$}\\
    \midrule[1pt]
    {\bf DiffRec}& 0.1058& 3.44 & 0.0351& 18.02 & 0.2193& 30.94\\
    {\bf \textsf{ReFiT(DiffRec)}}& {\bf 0.1083}& {\bf 0.28} & {\bf 0.0355}&{\bf 0.43} & {\bf 0.2231}& {\bf 0.81}\\
    {\bf Gain} & 2.36\% & \cellcolor{lightgray} -91.86\% & 1.14\% & \cellcolor{lightgray} -97.61\%& 1.73\%& \cellcolor{lightgray} -97.37\%\\
    \midrule[1pt]
    {\bf CF-Diff}& 0.1077& 7.67 & 0.0363& 142.74 & 0.2263& 81.78\\ 
    {\bf \textsf{ReFiT(CF-Diff)}}& {\bf 0.1103}& {\bf 0.37} &{\bf 0.0367} & {\bf 6.14} & {\bf 0.2283}& {\bf 2.33}\\
    {\bf Gain} & 2.41\% & \cellcolor{lightgray} -95.18\% & 1.10\% & \cellcolor{lightgray} -95.70\%& 0.88\%& \cellcolor{lightgray} -97.15\%\\
    \bottomrule[1pt]
  \end{tabular}
  \label{tab:times_app}
  %\vspace{-1.5em}
\end{table}

Additionally, to empirically validate the scalability of \textsf{ReFiT}, we measure execution time on synthetic user--item interaction datasets generated at random with a sparsity of 0.99, analogous to that observed on Yelp2018 and Anime. We vary dataset sizes in two settings: (1) fixing $\left| \mathcal{I} \right| = 1e^4$ and increasing $\left| \mathcal{U} \right|\! \in\! \left\{ {1e^4\! ,3e^4 \!,6e^4\! ,7e^4 \!,8e^4\! ,9e^4 } \right\} $; and (2) fixing $\left| \mathcal{U} \right| = 1e^4$ and increasing $\left| \mathcal{I} \right|$ over the same range. Fig. \ref{fig:time}a ({\it resp.} Fig. \ref{fig:time}b) show the per-iteration execution time (in seconds) of \textsf{ReFiT} (fine-tuned on DiffRec), as the number of users ({\it resp.} the number of items) increases. The dashed line indicates a linear scaling in $\left| \mathcal{U} \right|$ and $\left| \mathcal{I} \right|$. It can be seen that our empirical evaluation concurs with the theoretical analysis in {\bf Theorem~\ref{sec:theorem2}}.

\begin{figure}[t!]
%\vspace{-1em}
%\captionsetup{skip=0pt}
\pgfplotsset{footnotesize,samples=10}
\centering
\begin{tikzpicture}[scale=1]
    \begin{axis}[        
    xlabel= (a) $\left| \mathcal{U} \right|$, 
    ylabel= Execution time (s), 
    xlabel style={yshift=0.5em}, 
    ylabel style={yshift=-2em}, 
    grid=major, grid style={dashed}, width = 4.9cm, height = 4.0cm , legend style={at={(0.41,0.2)},anchor=west},
    xtick={10000,35000,65000,90000},
    scaled x ticks=false,
    xticklabel style={font=\scriptsize},
    xticklabels={$1 \cdot 10^4$,$3 \cdot 10^4$,$6 \cdot 10^4$,$9 \cdot 10^4$},
    xticklabel style={yshift=3pt},
    yticklabel style={font=\scriptsize},
    yticklabel style={xshift=3pt},
    ],
        \addplot [color=fig_8_color_1, mark=square, legend = \textsf{ReFiT}]
            coordinates {
                (10000, 0.3)(30000, 0.49)(60000, 0.68)(70000, 0.81)(80000, 0.90)(90000, 1.10)
                % (17231, 404.77)(26873, 515.596)(33317, 601.8297)(36206, 629.718)(48505, 796.292)(57380, 890.4584)(62948, 945.412)
            };
        \addplot [ dashed, color=red,        mark=none, legend = $$\mathcal{O}\left( {\left| \mathcal{U} \right|} \right)$$ ]
            coordinates {
                (5000,0.21)(90000,1.06)
                % (5000,10.2368) (10000,10.4999) (50000,12.6052) (100000,15.2367) (150000, 17.8683) (200000,20.4999) (220000,21.5525)
            };
    \addlegendentry{\textsf{ReFiT}}
    \addlegendentry{$\mathcal{O}\left( {\left| \mathcal{U} \right|} \right)$}
    \end{axis}
\end{tikzpicture}
\begin{tikzpicture}[scale=1]
    \begin{axis}[        
    xlabel= (b) $\left| \mathcal{I} \right|$, 
    ylabel= Execution time (s), 
    xlabel style={yshift=0.5em}, 
    ylabel style={yshift=-2em}, 
    grid=major, grid style={dashed}, width = 4.9cm, height = 4.0cm, legend style={at={(0.43,0.2)},anchor=west},
    scaled x ticks=false,
    xtick={10000,35000,65000,90000},
    xticklabel style={font=\scriptsize},
    xticklabels={$1 \cdot 10^4$,$3 \cdot 10^4$,$6 \cdot 10^4$,$9 \cdot 10^4$},
    xticklabel style={yshift=3pt},
    yticklabel style={font=\scriptsize},
    yticklabel style={xshift=3pt},
    ]
        \addplot [color=fig_8_color_1, mark=square, legend = \textsf{CF-Diff}]
            coordinates {
                (10000, 0.3)(30000, 0.39)(60000,0.5)(70000, 0.60)(80000, 0.69)(90000, 0.76)
                % (17231, 404.77)(26873, 515.596)(33317, 601.8297)(36206, 629.718)(48505, 796.292)(57380, 890.4584)(62948, 945.412)
            };
        \addplot [ dashed, color=red,        mark=none, legend = $$\mathcal{O}\left( {\left| \mathcal{I} \right|} \right)$$ ]
            coordinates {
                (5000,0.23)(90000,0.74)
                % (5000,10.2368) (10000,10.4999) (50000,12.6052) (100000,15.2367) (150000, 17.8683) (200000,20.4999) (220000,21.5525)
            };
    \addlegendentry{\textsf{ReFiT}}
    \addlegendentry{$\mathcal{O}\left( {\left| \mathcal{I} \right|} \right)$}
    \end{axis}
\end{tikzpicture}
\captionsetup{skip=0pt}
\caption{The computational complexity of \textsf{ReFiT(DiffRec)}, where the plots of the execution time versus $\left| \mathcal{U} \right|$ in Fig. \ref{fig:time}a and the execution time versus $\left| \mathcal{I} \right|$ in Fig. \ref{fig:time}b are shown.}
\label{fig:time}
%\vspace{-1.5em}
\end{figure}

\section{Related Work}
\label{app:related_work}

Our proposed framework is related to three broader areas of research, namely 1) diffusion-based recommender systems, 2) recommender systems with RL, 3) and RL for diffusion models.
% In this section, we review diffusion-based recommender systems. \textcolor{blue}{We refer to {\bf Appendix \ref{app:related_work}} for a review of studies on other broad fields of research, including recommender systems with RL and RL for diffusion models.} 
% In this section, we review broad fields of research related to our study, including diffusion-based recommender systems, recommender systems with RL, and RL for diffusion models.
%\vspace{-0.5em}
\subsection{Diffusion-Based Recommender Systems}

Diffusion models \cite{ho2020denoising, rombach2022high}, known for generating high-quality data, have been adapted to recommendation by iteratively recovering user–-item interactions via neural networks \cite{walker2022recommendation, wang2023diffusion}. To better capture collaborative signals, the high-order connectivity information was incorporated into diffusion-based recommender systems \cite{hou2024collaborative, zhu2024graph}. DDRM \cite{zhao2024denoising} improves recommender robustness via multi-step denoising of user/item embeddings. DiFashion \cite{xu2024diffusion} generated personalized fashion images for visually compatible outfit recommendations. HDRM \cite{yuan2025hyperbolic} leveraged hyperbolic geometry for topology-preserving diffusion, while S-Diff \cite{xia2025s} introduced a spectral-domain diffusion to recover user preferences. Additionally, a knowledge graph is used to further improve the model generalization of diffusion-based recommenders \cite{li2025mask}. Thanks to the success of diffusion-based recommenders for CF, attention has recently been paid to applying diffusion models to diverse recommendation tasks, such as sequential recommendations \cite{yang2023generate, wu2023diff4rec,li2023diffurec}, social recommendations \cite{li2024recdiff, liu2025score}, and POI recommendations \cite{zhao2020go, qin2023diffusion}.

\vspace{-0.3cm}
\subsection{Recommender Systems with RL}

RL-based recommender systems have recently gained attention for their interactive and autonomous learning capabilities, often modeling recommendation as an MDP and solving it using RL techniques such as value function, policy search, and actor-critic methods \cite{zou2020neural, wang2021reinforcement, zhang2019text, chen2019large, huang2021deep, liu2019exploiting, xiao2021general}. Value-based methods treat recommendation as decision-making, using user feedback to learn an exploration policy \cite{zou2020neural, wang2021reinforcement}. To handle large discrete action spaces, policy search methods directly optimize the policy by maximizing cumulative rewards from user feedback \cite{zhang2019text, chen2019large, huang2021deep}. Actor-critic algorithms combine value estimation and policy gradients to improve recommendation performance \cite{liu2019exploiting, xiao2021general}. However, RL-based recommenders often struggle with the exploration–exploitation trade-off, impacting learning efficiency and recommendation quality \cite{xie2021explore}.

\vspace{-0.3cm}
\subsection{RL for Diffusion Models}

Recent studies have explored the use of diffusion models in sequential decision-making, particularly in RL, which is categorized into policies \cite{WangHZ23, lu2023contrastive, kang2024efficient} and planners \cite{he2024diffusion, JannerDTL22, fan2023optimizing, black2023training, fan2024reinforcement, lee2023aligning}. As policies, diffusion models enhance expressiveness in Q-learning by sampling actions that yield high returns given the current state \cite{WangHZ23, lu2023contrastive, kang2024efficient}. As planners, diffusion models generate multi-step trajectories that optimize domain-specific rewards in tasks such as trajectory generation \cite{he2024diffusion, JannerDTL22} and image generation \cite{fan2023optimizing, black2023training, fan2024reinforcement, lee2023aligning}. These methods focus on image generation with external reward evaluation, making it difficult to extend such ideas to recommender systems without an external reward model.

% While the above methods focus primarily on image generation using external reward evaluation, it is of paramount importance to extend such concepts to recommender systems to fully explore their potential benefits in the recommendation domain.

%\vspace{-0.3cm}
\section{Conclusions and Outlook}

We explored an open yet fundamental problem of how to fine-tune diffusion models for recommendations with an aid of RL. To achieve this goal, we proposed \textsf{ReFiT}, a framework that effectively and efficiently integrates RL-aided fine-tuning into underlying diffusion-based recommender systems, guided by our collaborative signal-aware reward function. Extensive experiments on wide-ranging real-world benchmark datasets, we demonstrated that \textsf{ReFiT} (a) outperforms state-of-the-art CF methods across a wide spectrum of recommendation tasks, achieving improvements of up to 36.3\% in NDCG@20, (b) benefits significantly from its customized reward function, and (c) offers practical scalability, with linear computational complexity and substantial runtime savings empirically verified. Potential avenues of our future research include the design of a more robust reward function with the aid of large language models, leveraging their ability to understand user preferences and generate adaptive feedback.

%\vspace{-1em}
\section*{Acknowledgments}
This research was supported by the National Research Foundation of Korea (NRF) funded by Korea Government (MSIT) under Grant RS-2021-NR059723 and Grant RS-2023-00220762.
\section*{Appendix A\\Algorithms}
% \setcounter{section}{0}
% \renewcommand{\thesection}{\Roman{section}}
% \section*{Algorithms}

% \setcounter{section}{0}
% \renewcommand{\thesubsection}{\Roman{subsection}}
\subsection*{I. Algorithm of ELBO-Based Fine-Tuning}
\label{app:ELBO_alg}
The algorithm of ELBO-based fine-tuning is summarized in Algorithm \ref{alg:ELBO}, which was used for experiments in Section IV-B3 of the main manuscript.

\begin{algorithm}
\caption{ELBO-based fine-tuning}\label{alg:ELBO}
\begin{algorithmic}[1]
\Require Pre-trained model $p_\theta$, where $\theta = \theta_0$, all users $\mathcal{U}$, number of iterations $Iters$, time step $T$, learning rate $l$. 
\While{$i < Iters$}
\State Sample a batch of users' interactions $U \subset \mathcal{U} $
\For{all user ${\bf u}_0  \in U $}
\State Compute ${\bf u}_{t} $ given ${\bf u}_0$ via $q\left( {u_t \left| {u_0 } \right.} \right) $ 
\State Compute $\mathcal{L}\left( \theta  \right) $ in (1)
\State Compute gradient $ \nabla _\theta = \nabla \mathcal{L}\left( \theta  \right) $
\State $\theta  \leftarrow \theta  - l \cdot \nabla _\theta  $

\EndFor
\State $i = i + 1 $
\EndWhile
\end{algorithmic}
\end{algorithm}
\vspace{-1em}

\subsection*{II. Inference of \textsf{ReFiT}}
\label{app:infer_alg}
The inference procedure of \textsf{ReFiT} is summarized in Algorithm \ref{alg:infer}.

\begin{algorithm}[h]
\caption{\textsf{Inference}}\label{alg:infer}
\begin{algorithmic}[1]
\Require Fine-trained model $p_\theta$, the interaction history ${\bf u}_0$ of user $u$, time step $T$. 
\State Sample noise $\boldsymbol{\epsilon} \sim \mathcal{N}(0, I)$.
\State Generate noisy interaction $\mathbf{u}_T$ given ${\bf u}_0$ and $\boldsymbol{\epsilon}$, and set $\hat{\mathbf{u}}_T = \mathbf{u}_T$ 
\For{$t = T$ to $1$}
\State Compute $\hat{\mathbf{u}}_{t-1} = \mu_\theta(\hat{\mathbf{u}}_t, t)$  
\State Update $\hat{\mathbf{u}}_t \leftarrow \hat{\mathbf{u}}_{t-1}$  
\EndFor

\end{algorithmic}
\end{algorithm}

\section*{Appendix B \\ Theoretical Discussion}
\label{app:proof}

In this section, we provide a more rigorous analysis of the loss of \textsf{ReFiT} in comparison with the ELBO-based loss. We consider a pre-trained model $p_\theta$ and a reward function $r\left( {{\bf u}_0 } \right) $. When we fine-tune the model $p_\theta$, we have
% \begin{equation}
%     \begin{array}{l}
%      \mathbb{E}_{p_{\theta }} \left[ { - r\left( {{\bf u}_0 } \right)\log p_\theta  \left( {{\bf u}_0 } \right)} \right] 
%       \le \mathbb{E}_{p_{\theta}} \left[ {r \left( {{\bf u}_0 } \right) \!\! \sum\limits_{t = 2}^T {\mathbb{E}_{q\left( {{\bf u}_t \left| {{\bf u}_0 } \right.} \right)} \!\! \left[ {\textup{KL} \! \left( {q\left( {{\bf u}_{t - 1} \left| {{\bf u}_t ,{\bf u}_0 \! } \right.} \right) \! \left\| {p_\theta  \left( {{\bf u}_{t - 1} \left| {{\bf u}_t \!} \right.} \right)} \right.} \right)} \right]} } \right] \! + \! C \\ 
%      \end{array}
%      \label{theory1}
% \end{equation}
\begin{equation}
    \begin{array}{l}
     \mathbb{E}_{p_{\theta }} \left[ { - r\left( {{\bf u}_0 } \right)\log p_\theta  \left( {{\bf u}_0 } \right)} \right] \\ 
    \le \!\! \mathbb{E}_{p_{\theta}} \!\! \left[ {\! r \! \left( {\!{\bf u}_0 }\! \right) \!\! \sum\limits_{t = 2}^T \!\!{\mathbb{E}_{q\left( {\!{\bf u}_t \!\left| {{\bf u}_0 \!\! } \right.} \right)} \!\! \left[ {\textup{KL} \! \left( {\!q\left( {\!{\bf u}_{t - 1} \! \left| {{\bf u}_t ,{\bf u}_0 \! } \right.} \right) \! \left\| {p_\theta \!\! \left( {\!{\bf u}_{t - 1} \!\left| {{\bf u}_t \!} \right.} \!\right)} \right.} \right)} \right]} } \right] \! + \! C, \\ 
     \end{array}
     \label{theory1}
\end{equation}
where $\mathbb{E}_{p_\theta  } \left[  \cdot  \right] $ denote the expectation over all trajectories sampled from $p_\theta$; ${\mathbb{E}_{q\left( {{\bf u}_t \left| {{\bf u}_0 } \right.} \right)} \left[ {\textup{KL} \left( {q\left( {{\bf u}_{t - 1} \left| {{\bf u}_t ,{\bf u}_0 } \right.} \right)\left\| {p_\theta  \left( {{\bf u}_{t - 1} \left| {{\bf u}_t } \right.} \right)} \right.} \right)} \right]} $ is the loss at the $t$-th time step obtained using the ELBO in (1) of the main manuscript; and $C$ is a constant.

By similarly following the steps in \cite{ho2020denoising,fan2024reinforcement}, we have
\begin{equation}
\begin{array}{l}
  \;\;\;- \log p_\theta  \left( {{\bf u}_0 } \right)\\
  \le  - \log p_\theta  \left( {{\bf u}_0 } \right) + \textup{KL} \left( {q\left( {{\bf u}_{1:T} \left| {{\bf u}_0 } \right.} \right)\left\| {p_\theta  \left( {{\bf u}_{1:T} \left| {{\bf u}_0 } \right.} \right)} \right.} \right) \\ 
  = \mathbb{E}_{{\bf u}_{1:T}  \sim q\left( {{\bf u}_{1:T} \left| {{\bf u}_0 } \right.} \right)} \left[ {\log \frac{{q\left( {{\bf u}_{1:T} \left| {{\bf u}_0 } \right.} \right)}}{{p_\theta  \left( {{\bf u}_{0:T} } \right)}}} \right] \\ 
  \buildrel \Delta \over = \mathcal{L}_{\textup{ELBO}}. \\ 
 \end{array}
%\nonumber
\label{eq:first}
\end{equation}
Then, it follows that
\begin{equation}
    \begin{array}{l}
 \;\;\;\mathcal{L}_{\textup{ELBO}} \\
 = \mathbb{E}_{{\bf u}_{1:T}  \sim q\left( {{\bf u}_{1:T} \left| {{\bf u}_0 } \right.} \right)} \left[ {\log \frac{{q\left( {{\bf u}_{1:T} \left| {{\bf u}_0 } \right.} \right)}}{{p_\theta  \left( {{\bf u}_{0:T} } \right)}}} \right] \\  
 = \sum\limits_{t = 2}^T \!{\mathbb{E}_{{\bf u}_{1:T}  \sim q\left( {{\bf u}_{1:T} \left| {{\bf u}_0 } \right.} \right)} \!\!\left[ {\textup{KL} \!\left( {q\left( {{\bf u}_{t - 1}\!\! \left| {{\bf u}_t ,\!{\bf u}_0 } \right.} \right)\!\left\| {p_\theta \! \left( {{\bf u}_{t - 1} \!\left| {{\bf u}_t } \right.} \right)} \right.} \right)} \right]} \!\!+ \!\!C, \\
 \end{array}
\end{equation}
where $C \!=\! \mathbb{E}_{{\bf u}_{1:T}} \!\left[ {\textup{KL} \!\left( {q\!\left( {{\bf u}_T\! \left| {{\bf u}_0 } \right.} \right)\!\left\| {p_\theta \! \left( {{\bf u}_T } \right)} \!\right.} \right)} \!-\! \log p_\theta \! \left( {{\bf u}_0 \left| {{\bf u}_1 } \right.} \right) \right ]$.

Multiplying both hand sides in (\ref{eq:first}) by $r\left( {{\bf u}_0 } \right)$ and taking the expectation over the trajectory sampled from $p_\theta$, we have
\begin{equation}
    \begin{array}{l}
     \mathbb{E}_{p_{\theta }} \left[ { - r\left( {{\bf u}_0 } \right)\log p_\theta  \left( {{\bf u}_0 } \right)} \right] \\ 
    \le \!\! \mathbb{E}_{p_{\theta}} \!\! \left[ {\! r \! \left( {\!{\bf u}_0 }\! \right) \!\! \sum\limits_{t = 2}^T \!\!{\mathbb{E}_{q\left( {\!{\bf u}_t \!\left| {{\bf u}_0 \!\! } \right.} \right)} \!\! \left[ {\textup{KL} \! \left( {\!q\left( {\!{\bf u}_{t - 1} \! \left| {{\bf u}_t ,{\bf u}_0 \! } \right.} \right) \! \left\| {p_\theta \!\! \left( {\!{\bf u}_{t - 1} \!\left| {{\bf u}_t \!} \right.} \!\right)} \right.} \right)} \right]} } \right] \! + \! C, \\ 
     \end{array}
     \label{theory1}
\end{equation}

Therefore, our proposed loss function enables direct optimization of the true log-likelihood, offering a more accurate and effective training objective than the standard ELBO-based approach.

\section*{Appendix C \\ Additional experimental evaluations}
\label{app:more}
\subsection*{I. Implementation Details}
\label{app:settings}
The pre-trained DiffRec \cite{wang2023diffusion} and CF-Diff \cite{hou2024collaborative} are downloaded from their well-trained sources. If pre-trained models on certain datasets are not provided, then we pre-train them using the parameter settings described in their original articles. We use the same data split as the pre-training stage on DiffRec and CF-Diff. The diffusion step $T$, which corresponds to the total number of steps in the Markov decision process (MDP), is set to 40, 10, and 10 on the ML-1M, Yelp2018, Anime datasets, respectively. The number of iterations is set to 500, with early stopping applied during fine-tuning. The number of sampled users in each interaction is given by $\left\{ {30, 50, 100, 200, 300} \right\} $.

\subsection*{II. Competitors}
\label{app:competitors}
To comprehensively demonstrate the superiority of \textsf{ReFiT}, we compare it against thirteen state-of-the-art recommendation methods, including eleven competitors for recommendations for standard collaborative filtering (CF), one competitor for sequential recommendations, one competitor for social recommendation, and one competitor for point-of-interest (POI) recommendation. We implemented all these methods using the parameter settings described in their original articles.

\begin{itemize}
    \item {\bf NICF \cite{zou2020neural}.} This method explores CF in an interactive setting, where recommender agents iteratively make recommendations and update user profiles based on interactive feedback.
    \item {\bf FCPO \cite{ge2021towards}.} This is a fairness-constrained RL recommendation algorithm that models the recommendation problem as a constrained MDP, addressing dynamically changing group labels for items. 
    \item {\bf NGCF \cite{wang2019neural}.} This is a new recommendation framework based on graph neural networks that explicitly encodes collaborative signals through high-order connectivities.
    \item {\bf LightGCN \cite{he2020lightgcn}.} This model simplifies graph convolutional networks (GCNs) for recommendation by focusing solely on neighborhood aggregation, eliminating feature transformation and nonlinear activation to enhance performance.
    \item {\bf SGL \cite{wu2021self}.} This method involves enhancing GCNs for recommendation by integrating self-supervised learning to improve accuracy and robustness.
    \item {\bf CFGAN \cite{chae2018cfgan}.} This a novel generative adversarial network (GAN)-based CF framework that employs vector-wise adversarial training to enhance recommendation accuracy.
    \item {\bf MultiDAE \cite{liang2018variational}.} This method employs denoising autoencoders to learn latent representations from corrupted user--item interactions for top-$N$ recommendations.
    \item {\bf RecVAE \cite{shenbin2020recvae}.} This is a new variational autoencoder for CF, introducing novel regularization and training techniques to improve recommendation performance.
    \item {\bf DiffRec \cite{wang2023diffusion}.} This method employs diffusion models to iteratively denoise user--item historical interactions, enhancing recommendation accuracy.
    \item {\bf CF-Diff \cite{hou2024collaborative}.} This method enhances recommendation accuracy in diffusion-based CF by leveraging high-order connectivity information.
    \item {\bf HDRM \cite{yuan2025hyperbolic}.} This is a hyperbolic-space diffusion recommender model that preserves user–-item graph topology by modeling anisotropic directional diffusion through radial and angular constraints in a hyperbolic latent space.
    \item {\bf DreamRec \cite{yang2023generate}.} This method shows a guided diffusion model that reshapes sequential recommendations by generating an oracle item from historical interactions.
    \item {\bf RecDiff \cite{li2024recdiff}.} This method uses a diffusion-based social denoising framework to enhance social recommendation by iteratively removing noise from user representations in the hidden space.
    \item {\bf Diff-POI \cite{qin2023diffusion}.} This method uses a diffusion-based model to enhance next POI recommendation by sampling the user's spatial preferences through a spatio-temporal graph encoder and a diffusion-based sampling strategy.

\end{itemize}

\subsection*{III. Further Experiments on Standard CF}
\label{app:F3}
In this subsection, we show additional experimental results and analyses on recommendations for standard CF in order to provide the full set of experiments for all datasets and models.

% Helper macro for compact mean ± std
\newcommand{\pmstd}[2]{#1{\scriptsize$\;\pm\;$#2}}
\newcommand{\SD}[1]{{#1}} % replace with your measured stds

\begin{table*}[!t]\centering
\setlength\tabcolsep{6pt}
\small
%\vspace{-3pt}
\captionsetup{skip=0pt}
\caption{Mean $\pm$ standard deviation of \textsf{ReFiT} and diffusion baselines on R@10 and N@10.
All results are averaged over five runs with different random seeds. 
\textsf{ReFiT} achieves higher accuracy with markedly lower variance.}
\label{tab:std_refit}
\begin{tabular}{c|cc|cc|cc}
\toprule
& \multicolumn{2}{c|}{\textbf{ML-1M}} & \multicolumn{2}{c|}{\textbf{Yelp2018}} & \multicolumn{2}{c}{\textbf{Anime}}\\
\cmidrule(lr){2-3}\cmidrule(lr){4-5}\cmidrule(lr){6-7}
\textbf{Method} & \textbf{R@10} & \textbf{N@10} & \textbf{R@10} & \textbf{N@10} & \textbf{R@10} & \textbf{N@10}\\
\midrule
\textbf{DiffRec} & 0.1058 ± 0.0031 & 0.0901 ± 0.0027 & 0.0351 ± 0.0013 & 0.0414 ± 0.0011 & 0.2193 ± 0.0065 & 0.5196 ± 0.0058\\
\textbf{\textsf{ReFiT}(DiffRec)} & \underline{0.1083} ± 0.0016 & \underline{0.0918} ± 0.0013 & 0.0355 ± 0.0006 & 0.0417 ± 0.0005 & 0.2231 ± 0.0038 & 0.5211 ± 0.0035\\
\cmidrule{1-7}
\textbf{CF-Diff} & 0.1077 ± 0.0028 & 0.0912 ± 0.0025 & \underline{0.0363} ± 0.0012 & \underline{0.0425} ± 0.0010 & \underline{0.2263} ± 0.0058 & \underline{0.5271} ± 0.0053\\
\textbf{\textsf{ReFiT}(CF-Diff)} & \textbf{0.1103} ± 0.0015 & \textbf{0.0927} ± 0.0012 & \textbf{0.0367} ± 0.0005 & \textbf{0.0428} ± 0.0004 & \textbf{0.2283} ± 0.0036 & \textbf{0.5319} ± 0.0032\\
\bottomrule
\end{tabular}
\end{table*}
%\vspace{-1em}

\subsubsection{Stability Analysis}
As shown in Table \ref{tab:std_refit}, \textsf{ReFiT} achieves consistently better performance (higher means) and more stable (lower standard deviations) training than the case of diffusion-based competing methods. 
This stability stems from the fact that fine-tuning in \textsf{ReFiT} begins from a well-initialized action space, allowing the reinforcement learning process to explore more effectively and converge reliably. 
In contrast, diffusion-based competitors such as DiffRec \cite{wang2023diffusion} and CF-Diff \cite{hou2024collaborative} exhibit relatively higher variability across runs, as they rely on random initialization and must learn optimal actions from scratch. 

\subsubsection{Impact of Our Reward Function}
\label{app:reward_diffrec}
Fig. \ref{fig:reward_diffrec} shows three different rewards according to iterations given the pre-trained DiffRec. It reveals a similar tendency to that of Fig. 5.

\begin{figure}[!t]
\centering
\begin{minipage}[b]{0.32\linewidth}
    \centering
    \includegraphics[width=\linewidth]{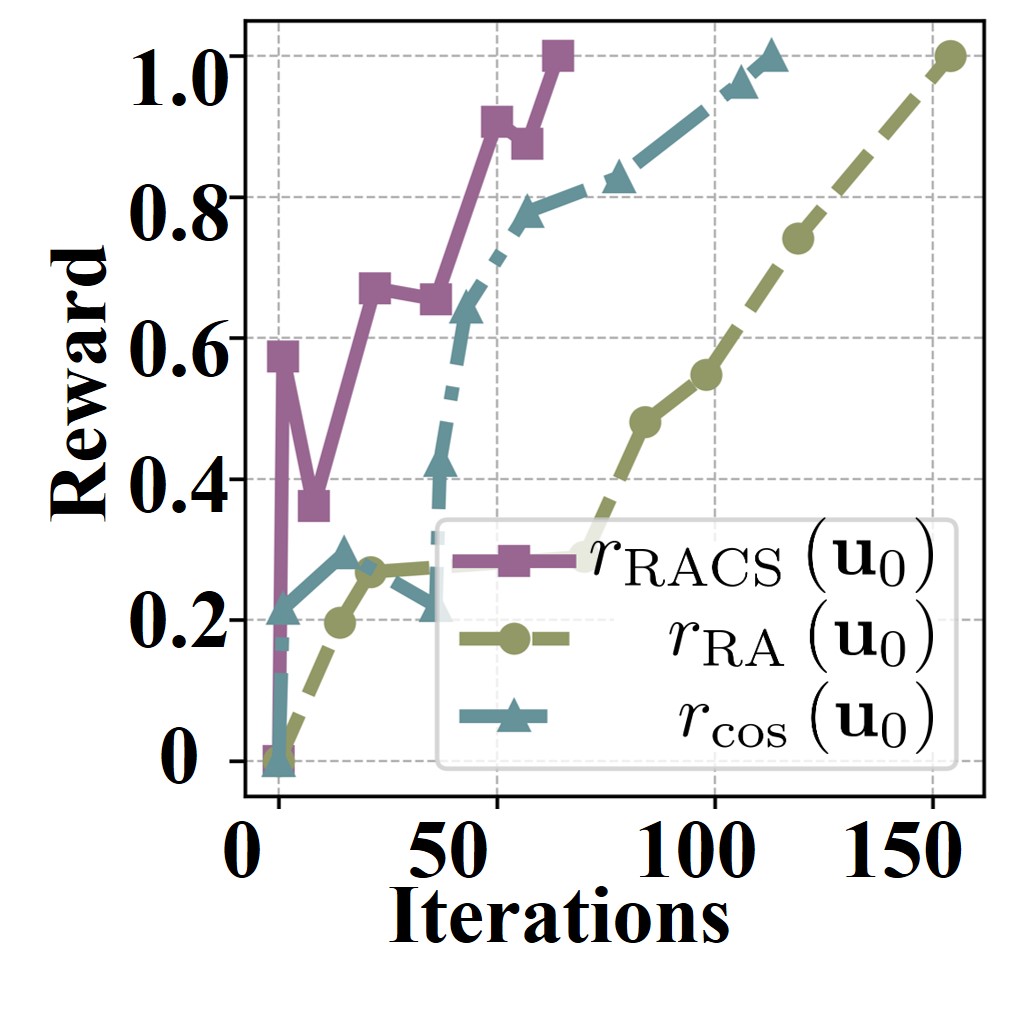}
    %\vspace{2pt}
    {\small\textbf{(a)} ML-1M}
\end{minipage}
\hfill
\begin{minipage}[b]{0.32\linewidth}
    \centering
    \includegraphics[width=\linewidth]{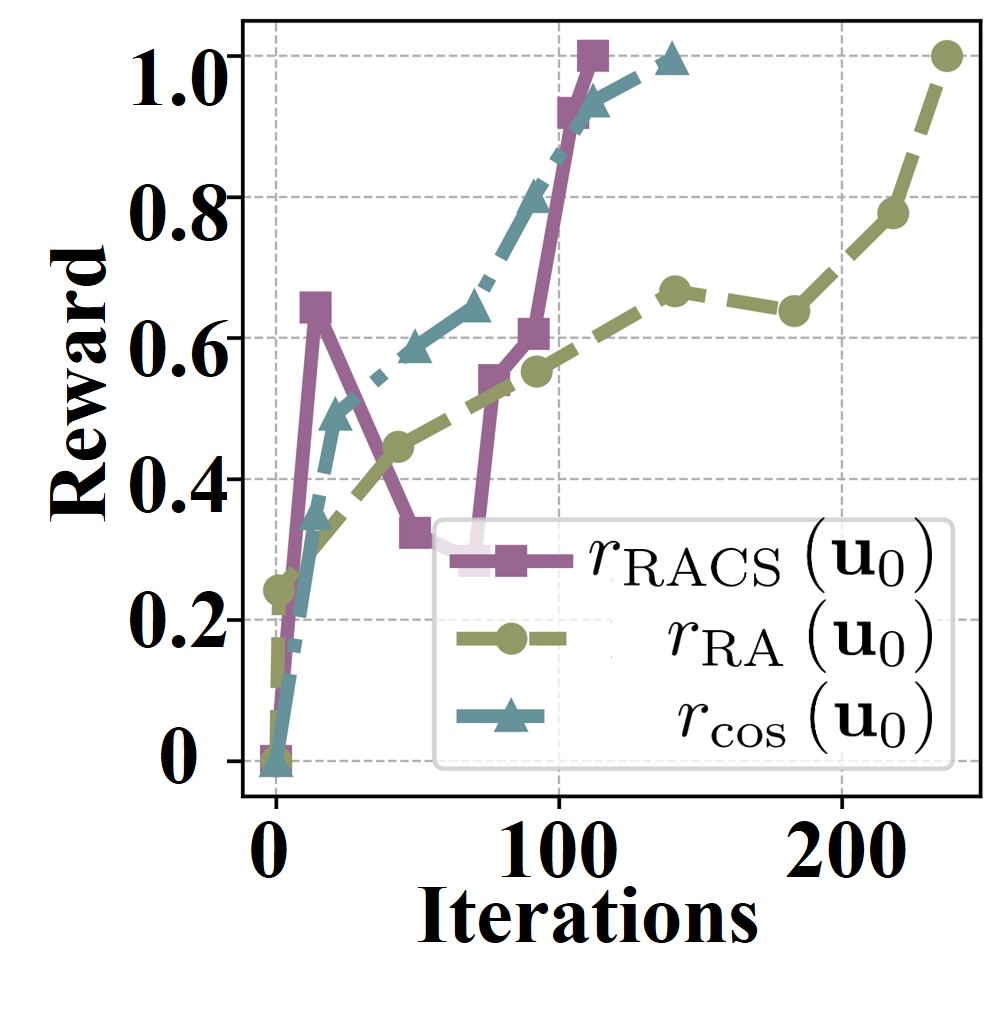}
    %\vspace{2pt}
    {\small\textbf{(b)} Yelp2018}
\end{minipage}
\hfill
\begin{minipage}[b]{0.32\linewidth}
    \centering
    \includegraphics[width=\linewidth]{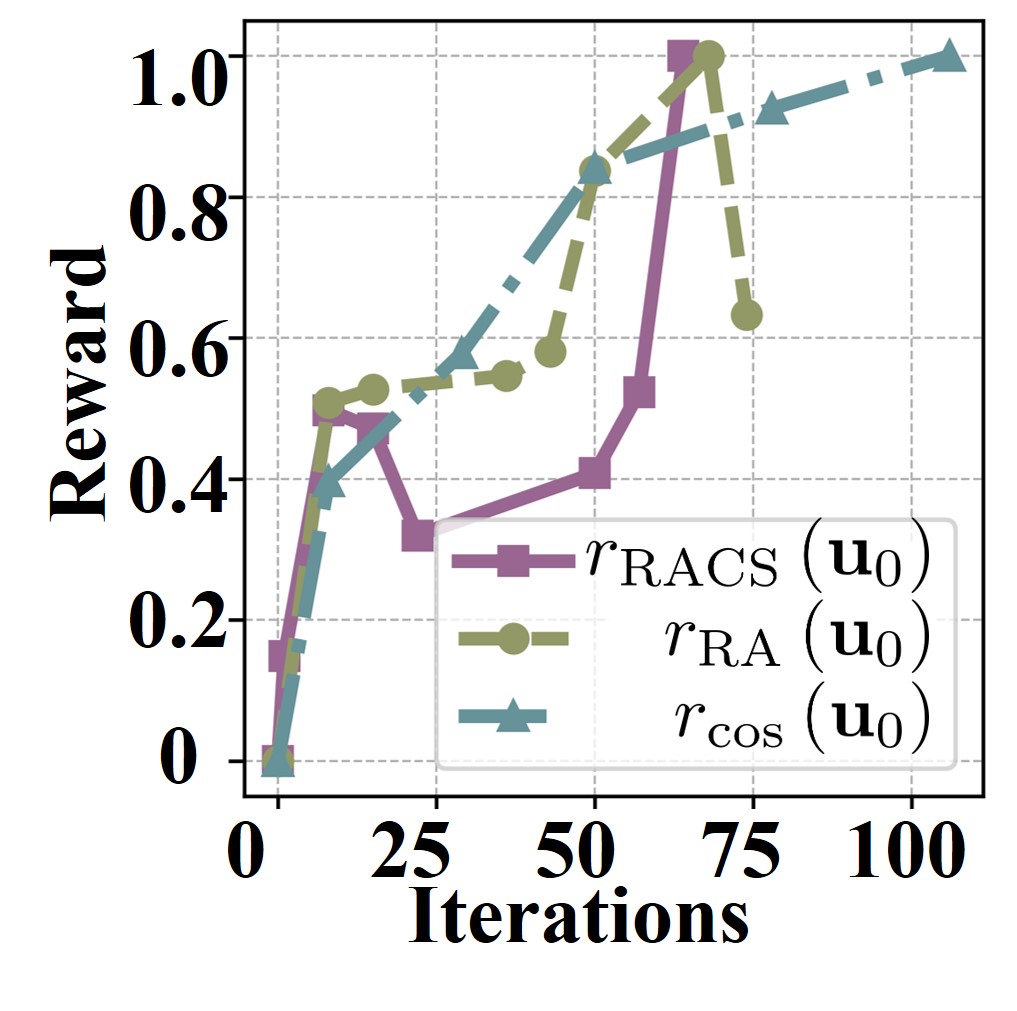}
    %\vspace{2pt}
    {\small\textbf{(c)} Anime}
\end{minipage}
\vspace{-3pt}
\caption{The behavior of different reward functions across iterations during fine-tuning with the pre-trained DiffRec.}
\label{fig:reward_diffrec}
\end{figure}

\begin{figure}[!t]
\centering
\begin{minipage}[b]{0.32\linewidth}
    \centering
    \includegraphics[width=\linewidth]{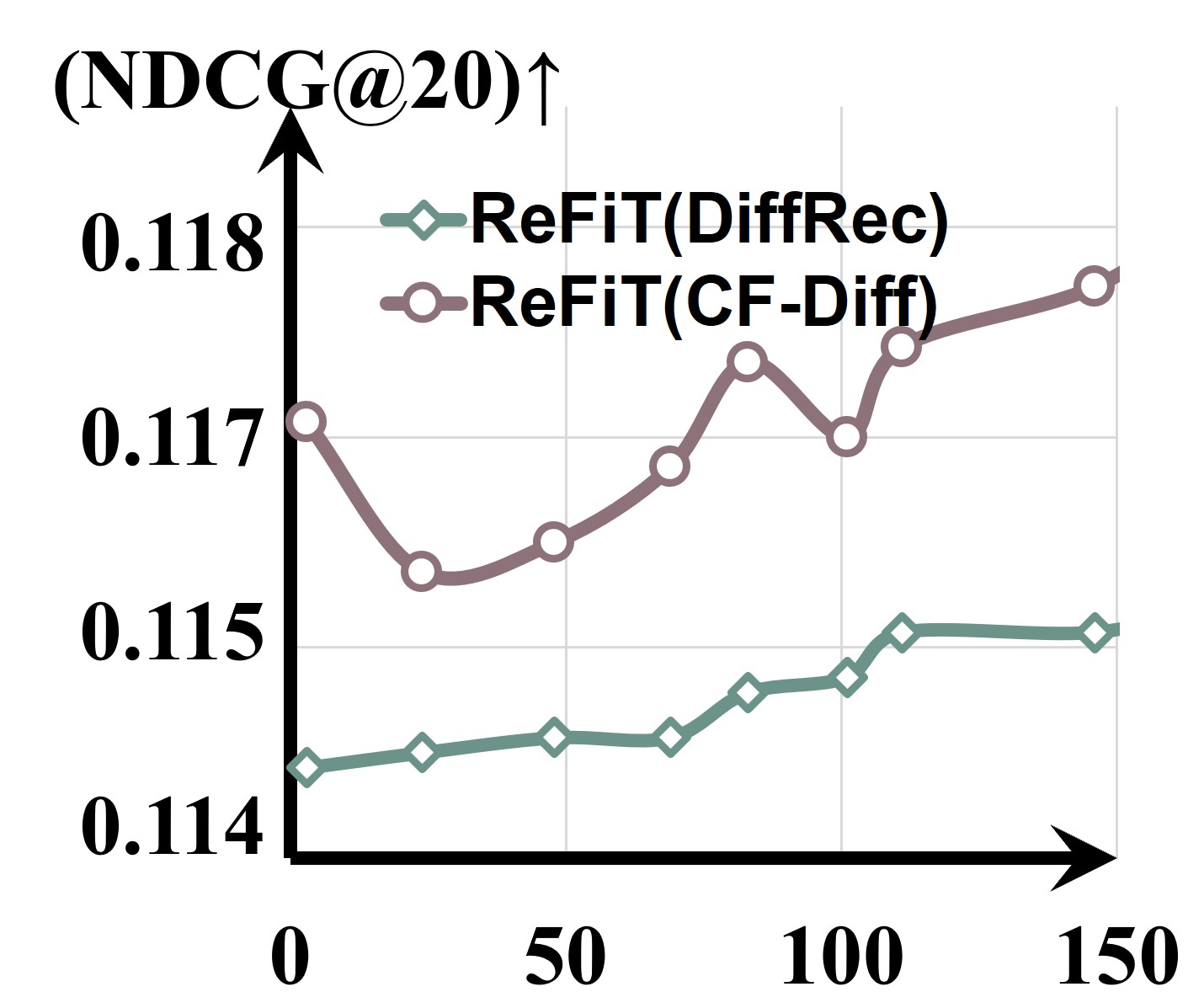}
    %\vspace{2pt}
    {\small\textbf{(a)} ML-1M}
\end{minipage}
\hfill
\begin{minipage}[b]{0.32\linewidth}
    \centering
    \includegraphics[width=\linewidth]{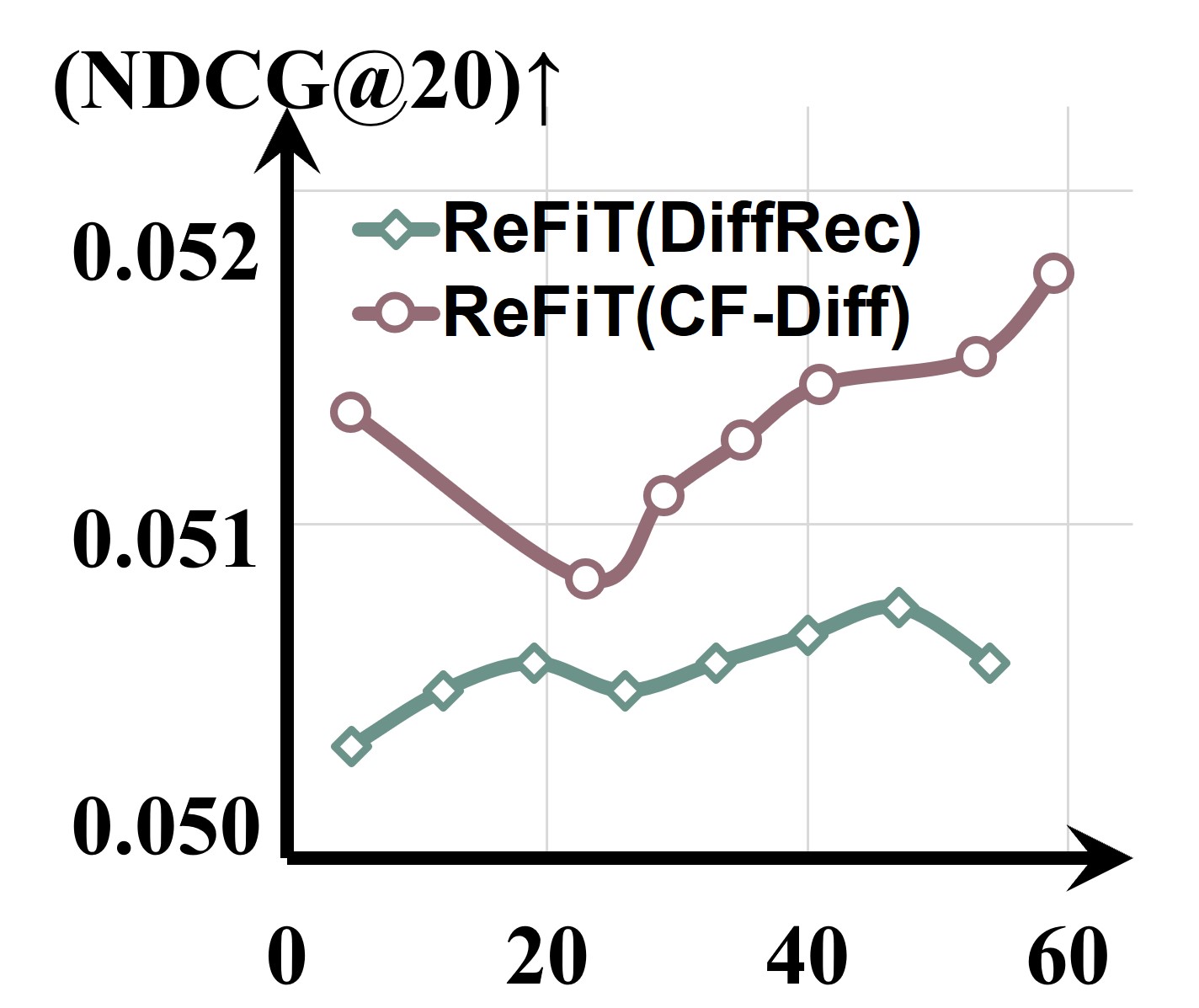}
    %\vspace{2pt}
    {\small\textbf{(b)} Yelp2018}
\end{minipage}
\hfill
\begin{minipage}[b]{0.32\linewidth}
    \centering
    \includegraphics[width=\linewidth]{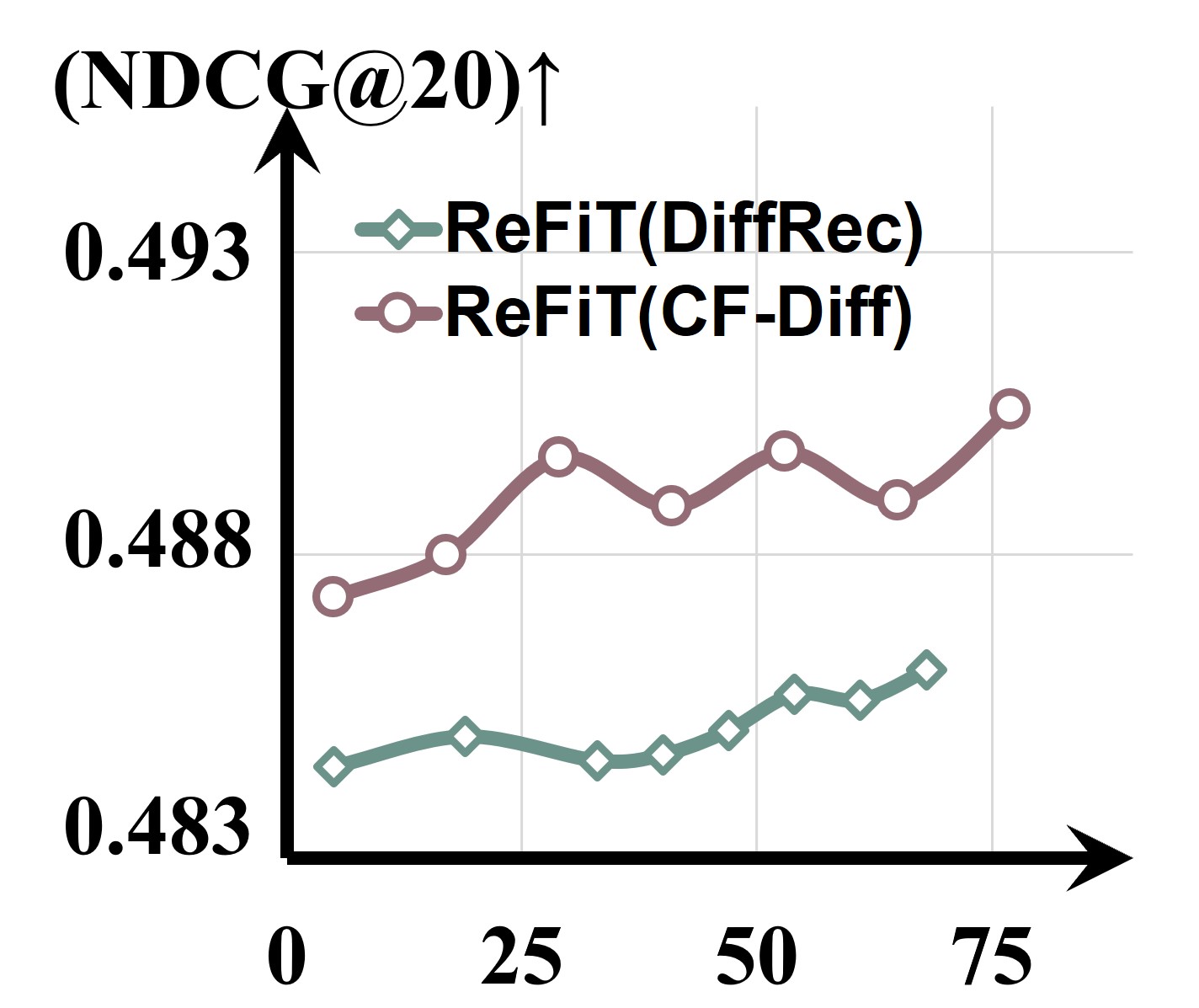}
    %\vspace{2pt}
    {\small\textbf{(c)} Anime}
\end{minipage}
\vspace{-3pt}
\caption{Recommendation accuracy in terms of NDCG@20 across fine-tuning iterations for \textsf{ReFiT(DiffRec)} and \textsf{ReFiT(CF-Diff)}.}
\label{fig:ndcg_vs_iters}
\end{figure}

\subsubsection{NDCG versus Iterations}
\label{app:reward_iter}

To better understand the fine-tuning behavior of \textsf{ReFiT}, we examine how the model’s recommendation accuracy evolves during training. While the experimental results in Section IV-B focus mainly on the final performance, examining the intermediate fine-tuning process helps reveal whether the optimization remains stable and whether \textsf{ReFiT} continues to learn effectively across iterations. Fig. \ref{fig:ndcg_vs_iters} illustrates the tendency of recommendation accuracy (NDCG@20) for both \textsf{ReFiT(DiffRec)} and \textsf{ReFiT(CF-Diff)} across fine-tuning iterations. As shown in Fig. \ref{fig:ndcg_vs_iters}, the performance consistently improves with the number of iterations, indicating that the model gradually adapts to the target recommendation task.

\bibliographystyle{IEEEtran}
\bibliography{bibfile}

\end{document}